\newcommand{\rev}[1]{\textcolor{black}{#1}}
\newcommand{\E}{\mathbb{E}}
\newtheorem{condition}{Condition}
\newtheorem{lemma}{Lemma}
\newtheorem{remark}{Remark}
\newtheorem{theorem}{Theorem}
\newtheorem{corollary}{Corollary}
\newcommand{\TnM}{T_n^{(M)}}
\icmltitlerunning{Robust Inference for Lasso}
\begin{document}

\twocolumn[
\icmltitle{Robust Inference for High-Dimensional Linear Models \\ via Residual Randomization}

% It is OKAY to include author information, even for blind
% submissions: the style file will automatically remove it for you
% unless you've provided the [accepted] option to the icml2021
% package.

% List of affiliations: The first argument should be a (short)
% identifier you will use later to specify author affiliations
% Academic affiliations should list Department, University, City, Region, Country
% Industry affiliations should list Company, City, Region, Country

% You can specify symbols, otherwise they are numbered in order.
% Ideally, you should not use this facility. Affiliations will be numbered
% in order of appearance and this is the preferred way.
\icmlsetsymbol{equal}{*}

\begin{icmlauthorlist}
\icmlauthor{Y. Samuel Wang}{equal,uc}
\icmlauthor{Si Kai Lee}{equal,uc}
\icmlauthor{Panos Toulis}{uc}
\icmlauthor{Mladen Kolar}{uc}
\end{icmlauthorlist}

\icmlaffiliation{uc}{Booth School of Business, University of Chicago, Chicago, USA}

\icmlcorrespondingauthor{Y. Samuel Wang}{swang24@uchicago.edu}
\icmlcorrespondingauthor{Si Kai Lee}{sikai@uchicago.edu}

% You may provide any keywords that you
% find helpful for describing your paper; these are used to populate
% the "keywords" metadata in the PDF but will not be shown in the document
\icmlkeywords{high-dimensional inference, debiased Lasso}

\vskip 0.3in
]

% this must go after the closing bracket ] following \twocolumn[ ...

% This command actually creates the footnote in the first column
% listing the affiliations and the copyright notice.
% The command takes one argument, which is text to display at the start of the footnote.
% The \icmlEqualContribution command is standard text for equal contribution.
% Remove it (just {}) if you do not need this facility.

%\printAffiliationsAndNotice{}  % leave blank if no need to mention equal contribution
\printAffiliationsAndNotice{\icmlEqualContribution} % otherwise use the standard text.

\begin{abstract}
We propose a residual randomization procedure designed for robust Lasso-based inference in the high-dimensional setting.
Compared to earlier work that focuses on sub-Gaussian errors, the proposed procedure is designed to work robustly in settings that also include heavy-tailed covariates and errors. 
Moreover, our procedure can be valid under clustered errors, which is important in practice,
but has been largely overlooked by earlier work.
Through extensive simulations, we illustrate our method's wider range of 
applicability as suggested by theory.
In particular, we show that our method outperforms state-of-art methods in challenging, yet more realistic, settings where the distribution of covariates is heavy-tailed or the sample size is small, while it remains competitive in standard, ``well behaved" settings previously studied in the literature.
\end{abstract}

\section{Introduction}
The Lasso~\citep{tibshirani1996Lasso} and its variants are typically used to estimate the coefficients of a linear model in the high-dimensional setting where the number of covariates, $p$, is larger than the number of samples, $n$. The Lasso has been shown to possess many desirable theoretical properties and has proven fruitful in applications across nearly all scientific domains \citep{buhlmann2011statistics}. This widespread use has recently generated much interest in procedures for performing inference using Lasso estimates.

However, for parameters which are zero or nearly zero, the Lasso point estimates may have an irregular distribution, and na\"{i}vely constructing confidence intervals typically results in invalid inference. To overcome these difficulties, various procedures have been proposed.
%to perform inference despite the non-standard distribution of the regularized Lasso.
\citet{wasserman2009selection} and \citet{meinshausen2009pvalues} both used sample splitting procedures to form valid $p$-values. 
\citet{zhang2014debiased}, \citet{vanDeGeer2014asymptotically}, and \citet{javanmard2014confidence} proposed the desparsified and debiased Lasso, which adds a one step correction to the Lasso estimate. The resulting estimate is not sparse, but under certain conditions, it has asymptotically negligible bias. The debiased/desparsified estimate is asymptotically normal and can be used for inference.

An alternative approach for inference in high-dimensional linear models is the
bootstrap. \citet{chatterjee2011bootstrap} proposed a residual bootstrap procedure for the adaptive Lasso \citep{zou2006adaptive}. 
%and show that the procedure can achieve second-order correctness~\citep{chaterjee2013rates}. 
One crucial requirement with this approach is the
``beta-min'' condition, which requires the non-zero parameters to be
large enough in absolute value (i.e., much larger than $n^{-1/2}$).
This condition can be overly restrictive in cases where the primary aim is to test null hypotheses on the significance of regression coefficients of the form $H_0:\beta_j = 0$. To circumvent this problem, \citet{dezure2017simultaneous} and \citet{zhang2017simultaneous} both proposed bootstrap procedures based on the desparsified Lasso~\citep{vanDeGeer2014asymptotically}, which are capable of performing simultaneous inference over a set of parameters in the linear model; i.e., $H_0: \beta_j = 0$ for all $j \in J \subseteq [p] = \{1, \ldots, p\}$. %
\citet{zhang2017simultaneous} proposed bootstrapping the linearized part of the desparsified Lasso with a Gaussian multiplier bootstrap, while \citet{dezure2017simultaneous} proposed using either a wild bootstrap or a residual bootstrap procedure for the entire estimator. 

The above procedures are typically obtained under strong regularity conditions
on covariates and require i.i.d.~sub-Gaussian errors.
Therefore, they may perform poorly in more realistic settings where $n$ is relatively small, the covariates and errors are non-Gaussian and/or heavy-tailed, or have complex structures, such as heterogeneity or clustering.
%In such settings, the theoretical limiting Gaussian distribution of the Lasso estimates may be a poor approximation of the finite sample distribution, which adversely affects inference.
%especially when the errors or the covariates are non-Gaussian and heavy tailed.

In contrast, randomization methods~\cite{fisher1937design, pitman1937significance, kempthorne1952design} are non-parametric and typically exact in finite samples,
which makes them robust~\citep[Chapter 15]{lehmann2006testing}. 
Randomization procedures leverage structure in the data for testing and inference ---e.g., permutation tests 
exploit exchangeability through permuting the data---\rev{and rely less on analytical assumptions or asymptotic arguments.}

We show that a robust alternative for inference in high-dimensional linear models is possible through the use of {\em residual randomization}, which was first proposed by \citet{freedman1983nonstochastic, freedman1983significance} as an extension of Fisher's randomization test.
%Indeed, the residual randomization carries the flavor a of a permutation test, but rather than apply a transform on the observations, it does not rely on convergence of the empirical measure of errors.
%
%The key advantage of residual randomization over  bootstrap-based procedures is that it does not rely on convergence of the empirical measure of errors.
%that residual  Typically, the residual bootstrap or wild bootstrap procedure aims to approximate the true distribution of a test statistic by using the logic that the residuals (or bootstrap resamples) are a good approximation of the realized errors and that the empirical distribution of the realized errors are in turn a good approximation of the true error distribution. While it is true in most cases that the empirical distribution of the realized errors becomes an increasingly good approximation of the true distribution, the residual randomization procedure does not require this additional step. 
Residual randomization builds a test that is exact in the idealized case where the true errors are known, and remains asymptotically valid when using regression residuals in lieu of the true errors.  
Recently,~\citet{toulis2019life} extended the original residual randomization procedure to include complex error structures~(e.g., clustered errors), and showed that the procedure is asymptotically valid and empirically effective in the low-dimensional setting where $p$ is fixed and $n$ grows, 
\rev{including settings with heterogeneity and clustering.} 
\citet{toulis2019life} also proposed an extension to the high-dimensional setting (different than the one we propose), but did not give any theoretical guarantees. 

% Here, we propose a modified residual randomization procedure, and show that under certain reasonable assumptions it produces an asymptotically valid test even in the high-dimensional setting. As usual, this test can be inverted to form confidence intervals.  

\subsection{Contributions}
In this paper, we propose a novel residual randomization procedure for conducting hypothesis tests and computing confidence intervals for parameters in a high-dimensional linear model.
In Section~\ref{sec:rr_Lasso}, we define a class of oracle tests 
that are exact for finite samples. 
%However, we do not have access to any of these oracle tests and can only approximate each test in the class with a corresponding attainable test. 
However, these oracle tests are infeasible, and so we develop an approximate implementation for each test in the class.
Broadly speaking, our procedure selects a specific test from the class for which the approximate but feasible test is close to the oracle. 
Thus, we explicitly prioritize controlling the empirical size of the test.
Our procedure is specifically designed to be robust to small sample sizes 
and non-Gaussianity in both covariates and errors. This is in contrast to previous procedures that prioritize testing power and shorter confidence intervals. 

We show theoretically that our procedure is valid even when the covariates and errors are sub-Weibull as opposed to the sub-Gaussian condition previously required in the literature. We also show that our procedure is sound when the errors have clustered dependence. Indeed, we see empirically that our residual randomization method is comparable to state-of-art methods in ``well behaved'' high-dimensional benchmarks and is superior in more complex settings, e.g., when $n$ is small, the covariates are non-Gaussian, or the errors are heavy-tailed or heterogeneous.

\section{Methodology}
\subsection{Setup}
Throughout we let $\vert \cdot \vert_q$ and $\Vert \cdot \Vert_q$ denote the 
vector $q$-norm and matrix $q$-norm, respectively. For any positive integer $d$, we let $[d] = \{1, \dots, d\}$. For $X \in \mathbb{R}^{n \times p}$, let $X_{i,:}$ denote the $i$th row and $X_{:,v}$ denote the $v$th column. 
%For $R \subseteq [n]$, $C \subseteq [p]$,  let $X_{R,C}$ denote a submatrix of $X$ with elements corresponding to the $R$ rows and $C$ columns of $X$. We use $X_{R, :}$ to denote special case of $X_{R, C}$ for $C = [p]$; similarly, $X_{:,C}$ denotes $X_{R,C}$ when $R = [n]$. 
We let $e_j$ denote the $j$th standard basis; i.e., 
a vector whose $j$th element is $1$ and all other elements are $0$. 

We assume that the data $Y \in \mathbb{R}^n$ are generated from the linear model 
\begin{equation}
\label{eq:model}
Y = X \beta + \varepsilon,
\end{equation}
where $\beta \in \mathbb{R}^p$ are the linear coefficients, $X \in \mathbb{R}^{n \times p}$ are the covariates, and $\varepsilon \in \mathbb{R}^n$ is the vector of (unobservable) errors. 
Thus, $X_{i,:}$ corresponds to the covariates for the $i$th observation and $X_{:,v}$ corresponds to the $v$th covariate. Let $s$ denote the sparsity of $\beta$ such that $\vert \beta \vert_0 \leq s$.

In contrast to previous work which requires sub-Gaussian covariates and errors, we allow $X_{i,:}$ and $\varepsilon_i$ to follow sub-Weibull($\alpha$) distributions. Sub-Weibull random variables are a class of distributions with tails of the form $\exp(-\vert x\vert^{\alpha})$
\citep{kuchibhotla2018moving,vladimirova2020sub}.
The class of sub-Gaussian and sub-exponential distributions
can be obtained as a subclass by setting $\alpha$ equal to $2$ and $1$, respectively.
However, when $\alpha < 1$, the tails can be heavier than sub-exponential.
Finally, we also assume that $X_{i,:}$ are drawn i.i.d., and $X_{i,:}$ and $\varepsilon_i$ are uncorrelated so that $\E(X_{i,:} \varepsilon_i) = 0$. 

We propose a procedure to test linear hypotheses of the form 
\begin{equation}\label{eq:hypothesisUni}
H_0: a^\top \beta = a_0,
\end{equation}
for some $a \in \mathbb{R}^{p}$ and $a_0 \in \mathbb{R}$; we then invert the test to form confidence intervals. The form in \eqref{eq:hypothesisUni} includes many hypotheses of interest; e.g., setting $a = e_j$ and $a_0 = 0$ implies $H_0: \beta_j = 0$, setting $a = e_j - e_k$ and $a_0 = 0$ implies $H_0: \beta_j = \beta_k$.  In Remark~\ref{rem:simultaneous} we briefly discuss how the procedure can be generalized to tests of the form $H_0: A^\top \beta = a_0$ where $A \in \mathbb{R}^{p \times d}$ and $a_0 \in \mathbb{R}^d$. This would allow for simultaneous inference as in \citet{zhang2017simultaneous} and \citet{dezure2017simultaneous}.

\subsection{Residual Randomization}
% \pta{Randomization procedures exploit the structure in the data for testing and inference. For instance, permutation tests, which are a subset of randomization tests, exploit exchangeability in the data through permutations.  
% Because these procedures are non-parametric and typically exact, they are more robust than analytical or asymptotic methods, especially under settings 
% with small samples or heavy tailed data~\citep[Chapter 15]{lehmann2006testing}.
% }
% As mentioned earlier, residual randomization procedures adapt the ideas from classical randomization tests to do inference with a regression model. %exploit structure in the data generating process to perform inference. 
%Following the framework of~\citet{toulis2019life}, we require two key constructs: (1) A set $\mathcal{G}$ of linear maps $G: \mathbb{R}^n \mapsto \mathbb{R}^n$ such that $G\varepsilon \stackrel{d}{=} \varepsilon$; and (2) An invariant $t_n: \mathbb{R}^n \rightarrow \mathbb{R}$, where $t_n(\varepsilon) \stackrel{d}{=} t_n(G \varepsilon)$ for all $G \in \mathcal{G}$ and any finite $n$.

Following the framework of~\citet{toulis2019life}, we require two key constructs: (1) A set $\mathcal{G}$ of linear maps $G: \mathbb{R}^n \mapsto \mathbb{R}^n$ such that $G\varepsilon \stackrel{d}{=} \varepsilon$, \rev{conditional on $X$}; and (2) An invariant $t_n: \mathbb{R}^n \rightarrow \mathbb{R}$, where $t_n(\varepsilon) \stackrel{d}{=} t_n(G \varepsilon)$ for all $G \in \mathcal{G}$ and any finite $n$. 
\rev{We emphasize that all the test we propose is conditional on $X$.}

Given these two definitions, if we can find a test statistic, $T_n(Y, X)$, such that $T_n(Y, X) = t_n(\varepsilon)$ under $H_0$, then we can compare the observed value of $T_n$ with $\{t_n(G \varepsilon) : G \in\mathcal{G}\}$ to test $H_0$. 
Indeed, conditioned on $X$ and $\varepsilon$ (or alternatively $X$ and $Y$), the random variable
\begin{equation} \footnotesize \label{eq:pval}
% \pi = \left(\sum_{G\in\mathcal{G}} \mathbb{I}\{ t_n(G \varepsilon) \geq T_n(Y,X) \right)/ |\mathcal{G}| ,
\rev{\pi = \sum_{G\in\mathcal{G}} \mathbb{I}\{ t_n(G \varepsilon) \geq T_n \} / |\mathcal{G}| } ,
\end{equation}
is uniform over the set $\{0, \ldots, |\mathcal{G}|\} / |\mathcal{G}|$. Thus, $\pi$ can be used as a $p$-value and rejecting the null hypothesis when $\pi \leq \pi_0$ yields a hypothesis test with exact size $\pi_0$\footnote{We describe a one sided test, but a two-sided test could be similarly defined.} (save for the discreteness in $\pi$ which can be easily remedied by adding uniform noise to $\pi$ or increasing $|\mathcal{G}|$). \rev{We consider \rev{the following} three invariances  for the distribution of $\varepsilon$.}
% so that $\varepsilon \stackrel{d}{=} G\varepsilon$.

\vspace{-.2em}
\textbf{Exchangeability:} If all elements of $\varepsilon$ are exchangeable, then $\mathcal{G}$ could be all $n \times n$ permutation matrices. 
% This includes the standard setting where errors are i.i.d.~and independent of the covariates.
\rev{This includes the standard setting with i.i.d.~errors.}

\vspace{-.2em}
\textbf{Sign Symmetry:} If $\varepsilon_i \stackrel{d}{=} -\varepsilon_i$ for all $i \in [n]$, then $\mathcal{G}$ could be all $n\times n$ diagonal matrices with $\pm 1$ on the diagonal. This allows for heteroskedastic errors where $\varepsilon_i$ may depend on $X_{i,:}$, but is symmetric around $0$ conditional on $X_{i,:}$.

\vspace{-.2em}
\textbf{Clustered Exchangeability:} \rev{In many cases, 
the data can be partitioned into disjoint clusters, such that exchangeability is only reasonable within a cluster; e.g., to model country-specific effects in users of an online platform; $\mathcal{G}$ should then be the set of within-cluster permutations.
% This includes settings where  panel data where there may be an individual level effect. 
This generalizes exchangeability, but we will keep the two settings distinct.}

In the low-dimensional setting, 
where $n \gg p$, \citet{toulis2019life} considered
tests of the form $a^\top \beta = a_0$ and used
the test statistic $T_n = \sqrt{n}(a^\top\hat{\beta} - a_0)$, 
where $\hat{\beta}$ is the least squares estimate of $\beta$, and set the invariant $t_n(u) = \sqrt{n} a^\top (X^\top X)^{-1} X^\top u$.
Since $\hat{\beta} = \beta + (X^\top X)^{-1} X^\top \varepsilon$, 
under the null hypothesis that $a^\top \beta = a_0$, 
\begin{equation}\label{eq:testStat}\footnotesize
    T_n = \sqrt{n}(a^\top\hat \beta - a_0) = \sqrt{n}a^\top(\hat \beta - \beta) = t_n(\varepsilon).
\end{equation} 
Thus, given the true errors $\varepsilon$, we could form an exact $p$-value as in Eq.~\eqref{eq:pval}. In practice, $\varepsilon$ is unknown so one would instead use the residuals, $\hat{\varepsilon} = Y- X\hat \beta$, 
or the restricted residuals, $y - X \hat \beta_\mathrm{r}$, where $\hat \beta_\mathrm{r}$ is the restricted OLS estimates under $H_0$.
In this case, the test is approximate, but as shown in \citet{toulis2019life}, it can attain the correct size asymptotically.

\begin{remark}
The robustness properties of residual randomization can be deduced from~\eqref{eq:pval}. 
Specifically, the performance of the test does not depend on the distribution of the errors. % (heavy-tailed or not) is not important for the test.
Similarly, regularity conditions on $X$ are not needed because the test is conditioned on $X$. 
Finally, the decision of the test remains invariant to monotone transformations of $t_n$, and so the test remains robust to rescaling of the data.
%and can operate even when $\hat\beta$ is not a consistent estimator of $\beta$.
%
In our experiments, we demonstrate that the robustness properties of the exact test (using the true errors) are also inherited by the approximate procedure using the residuals.
\end{remark}

\subsection{High-Dimensional Residual Randomization}
\label{sec:rr_Lasso}
In the high-dimensional setting, where $p > n$, the OLS estimate $\hat \beta$ is ill-defined and the low-dimensional residual randomization method cannot be directly applied. In this section, we propose adjustments appropriate for the high-dimensional setting and show that this test can be further optimized for robustness.

Instead of the OLS estimator, we use a version of the debiased Lasso \cite{javanmard2014confidence} which corrects for the regularization bias by adding a term proportional to the subgradient of the objective at the Lasso solution $\hat{\beta}^l(\lambda_1)$, where $\lambda_1$ is the penalty parameter. 
When it is obvious, we will simply write $\hat{\beta}^l$ instead of $\hat{\beta}^l(\lambda_1)$.
Appropriate values of $\lambda_1$ are problem dependent, and we give theoretically sufficient choices in Section~\ref{sec:theory}.

Specifically, for some $M \in \mathbb{R}^{p \times p}$, we use the estimator
\begin{equation}\footnotesize
\label{eq:dbl}
\hat \beta^{d, M} = \hat \beta^{l} + \frac{1}{n} M X^\top (Y - X \hat \beta^{l}).
\end{equation}
%\citet{zhang2014debiased, vanDeGeer2014asymptotically}  chose $M$ as an estimate of the precision matrix $\Sigma^{-1}$,  where $\Sigma$ is the population covariance of $X$. 
%We select $M$ by solving the CLIME problem \citep{cai2011clime}:
%\begin{equation}\begin{aligned}\label{eq:selectM}
% &M^\star= \arg\min \Vert M \Vert_1 \textrm{  s.t.  } \vert M_{u,:} S - e_u^\top\vert_{\infty} \leq \mu, \forall u \in [p]
%&M^\star= \arg\min \Vert M \Vert_\infty \textrm{ s.t. } \vert M_{u,:} S - e_u^\top\vert_{\infty} \leq \lambda, \forall u \in [p].
%\end{aligned}\end{equation}
To form the high-dimensional test statistic, 
we replace $\hat{\beta}$ in \eqref{eq:testStat} with $\hat{\beta}^{d,M}$: % to obtain our test statistic   
\begin{equation}\footnotesize
\label{eq:T_n}
    \TnM = \sqrt{n} (a^\top \hat{\beta}^{d, M} - a_0).
\end{equation}
Setting $S=\frac{1}{n}X^\top X$, this yields
\begin{equation} \footnotesize
\begin{aligned}\label{eq:ex_T_n}
    \TnM
    &= \sqrt{n} a^\top (I - MS)(\hat \beta^{l} - \beta) + \frac{1}{\sqrt{n}} a^\top M X^\top \varepsilon\\
    &\quad + \sqrt{n} (a^\top \beta - a_0).
\end{aligned}
\end{equation}
For some fixed $\hat \beta^{l}$
and any $M \in \mathbb{R}^{p \times p}$, comparing $T_n^{(M)}$ to
\begin{equation}\footnotesize
\label{eq:oracleDist}
   t^{(M)}(G\varepsilon) = \sqrt{n} a^\top (I - MS)(\hat \beta^{l} - \beta) + \frac{1}{\sqrt{n}} a^\top M X^\top G\varepsilon
\end{equation}
for all $G \in \mathcal{G}$ would yield an exact test as described in \eqref{eq:pval} under the null,
since $a^\top \beta - a_0= 0$.
When the null hypothesis does not hold such that $a^\top \beta = a_{1} \neq a_0$, under weak conditions, $\max_G \frac{1}{\sqrt{n}} a^\top M X^\top G\varepsilon$ will be bounded at a rate of $\log(pn)$. However, $T_n^{(M)}$ contains an additional $\sqrt{n} (a^\top \beta - a_0) = \sqrt{n}(a_1 - a_0)$ term which will
grow as $O(\sqrt{n})$ leading to rejection of the null hypothesis. 

This procedure could be applied for any $M$.\footnote{
If $a^\top M X^\top = 0$, then the distribution over $G$ would be a point mass. Nonetheless some randomization procedure for breaking ties could be used to maintain exact size. In \eqref{eq:selectM}, letting $\lambda <1$  will prevent such an $M$ from being feasible.} Thus, in contrast to the low-dimensional setting, in the high-dimensional setting, we have not just a single test, but a class of tests indexed by $M$, all of which are exact under the null hypothesis. For any $M$, we call this the \emph{oracle randomization distribution}. Since all oracle tests are exact, a good rule of thumb would be to select the matrix $M$,
which gives the test with the most power, or alternatively, which---when inverted---yields the shortest confidence intervals. In some sense, this is the motivation behind \citet{zhang2017simultaneous} and \citet{dezure2017simultaneous} setting $M$ as an estimate of $\Sigma^{-1}$, albeit for a bootstrap procedure which is not exact. 
As suggested by the Gauss-Markov theorem, this should asymptotically give the shortest confidence intervals.  

However, since we do not have access to $\hat \beta^{l} - \beta$ or $\varepsilon$, we cannot directly use any of these oracle tests. Thus, we use the following invariant with $\hat{\varepsilon} = Y - X\hat \beta^{l} = \varepsilon + X(\beta - \hat \beta^{l})$ being the residuals from the Lasso regression:  
\begin{align}
\footnotesize
\begin{split}
\label{eq:residRandomDist}
    \hat{t}^{(M)}(G \hat{\varepsilon}) &= \frac{ a^\top MX^\top  G \hat{\varepsilon}}{\sqrt{n}} = \frac{a^\top MX^\top  G (\varepsilon + X(\beta - \hat \beta^{l}))}{\sqrt{n}} \\
    &= \frac{1}{\sqrt{n}} a^\top MX^\top GX(\beta - \hat \beta^{l}) + \frac{1}{\sqrt{n}} a^\top MX^\top  G\varepsilon .
\end{split}
\end{align}
For any $M$, we refer to the resulting distribution---when selecting $G$ uniformly from $\mathcal{G}$---as the \emph{attainable randomization distribution},
because it only involves quantities which we can directly access and compute. The oracle \eqref{eq:oracleDist} and attainable \eqref{eq:residRandomDist} distributions share the same second term,
but differ in their first terms $(1 / \sqrt{n}) a^\top MX^\top GX(\beta - \hat \beta^{l})$ and $\sqrt{n} a^\top (I - MS)(\hat \beta^{l} - \beta)$. Thus, each attainable distribution no longer retains the exact size that its corresponding oracle test enjoys. In particular, selecting an attainable test based on minimizing its oracle's confidence interval length may result in poor finite sample performance. 

Instead of prioritizing short confidence intervals, we prioritize the correct size of the test by selecting $M$ to minimize the distance between the attainable test and its corresponding oracle test.
%i.e., selecting a $M$ such that $\frac{1}{\sqrt{n}} a^\top MX^\top GX(\beta - \hat \beta^{l}))$ and $\sqrt{n} a^\top (I - MS)(\hat \beta^{l} - \beta)$ are small where $S = \frac{1}{n}X^TX$.
Indeed, in Section~\ref{sec:theory} we show that the Wasserstein-1 distance between the oracle and attainable distributions is upper bounded by 
% \begin{equation}
%     \scriptsize
%     \vert \hat \beta^{l} - \beta \vert_1\left(\sqrt{n}\vert a^\top(I - M S)\vert_\infty +  \frac{1}{\vert \mathcal{G} \vert} \sum_{G} \vert a^\top M X^\top G X / n  \vert_\infty\right)
% \end{equation}
% which then upper bounded by 
\begin{equation}\label{eq:firstUpperBound}
    \footnotesize
    \sqrt{n} \vert \hat \beta^{l} - \beta \vert_1\left(\vert a^\top(I - M S)\vert_\infty + \frac{\sum_{G} \left\vert a^\top M X^\top G X /n \right\vert_\infty}{\vert \mathcal{G} \vert}  \right).
\end{equation}

Roughly speaking, $\vert a^\top(I - M S)\vert_\infty$ regulates the difference between the means of the attainable and oracle distributions, while $ \sum_G \vert a^\top M X^\top G X / n \vert_\infty / \vert \mathcal{G}\vert $ is the cost of using residuals instead of the true errors. Intuitively, one would expect (and we confirm empirically) that prioritizing the minimization of $\vert a^\top(I - M S) \vert_\infty$ over $\sum_G \vert a^\top M X^\top G X / n \vert_\infty / \vert \mathcal{G}\vert $ has a larger effect on the accuracy of the attainable test's $p$-value. Towards this end, we upweight the first term  by setting $\delta \geq 1$ and select $M$ which minimizes:
\begin{equation}\footnotesize
\begin{aligned}
\label{eq:getLambda}
 \delta \vert a^\top (I - MS)  \vert_\infty + \frac{\sum_{G} \left\vert a^\top M X^\top G X /n\right\vert_\infty  }{\vert \mathcal{G} \vert}.
\end{aligned}
\end{equation}

While \eqref{eq:getLambda} can be solved using 
a linear program solver, 
for computational convenience,
instead of directly optimizing~\eqref{eq:getLambda} with respect to $M$, we instead solve 
\begin{equation}\footnotesize
\label{eq:minOverLambda}
\begin{aligned}
\min_{\lambda \in [0, 1)} \delta \vert a^\top (I - M_\lambda S)  \vert_\infty + \frac{\sum_{G} \left \vert a^\top M_\lambda \right \vert_1 \left \vert X^\top G X/ n  \right\vert_\infty}{\vert \mathcal{G} \vert} ,
\end{aligned}
\end{equation}
where
\begin{equation}\footnotesize
\begin{aligned}\label{eq:selectM}
M_\lambda &= \arg\min_M  \; \vert a^\top M \vert_1\\
&\quad \text{s.t.} \,  \left\vert a^\top (I - MS)\right \vert_\infty \leq \lambda.
\end{aligned}
\end{equation}
The problem in \eqref{eq:selectM} is the CLIME~\cite{cai2011clime} problem, and we solve it using the \texttt{fastclime} package~\citep{fastClime}. 
In Section~\ref{sec:theory} we show that, for any $\delta \geq 1$, 
using $M_{\lambda^\star}$, where $\lambda^\star$ is a minimizer of~\eqref{eq:minOverLambda}, ensures that the 
selected attainable and oracle distributions converge.

To select $M$, \citet{javanmard2014confidence} solve a problem with the same constraint as \eqref{eq:selectM}, but instead minimize the  $M_{i, :}^\top S M_{i, :}$ for all $i \in [p]$, which---similar to \citet{zhang2017simultaneous} and \citet{dezure2017simultaneous}---prioritizes shorter confidence intervals. When it is sparse, the inverse covariance of $X_{i,:}$ can be consistently estimated by solving \eqref{eq:selectM}~\cite{cai2011clime}. In that case the residual randomization procedure should still produce asymptotically efficient confidence intervals and would be asymptotically equivalent to the other procedures. However, in finite samples, we see empirical improvements in robustness. We detail our procedure in Algorithm~\ref{alg:procedure}, which produces a $p$-value for testing the null hypothesis that $a^\top \beta = a_0$. 

\begin{remark}\label{rem:simultaneous}
Thus far, we have assumed a 1-dimensional hypothesis test. However, similar to \citet{zhang2017simultaneous} and \citet{dezure2017simultaneous}, this can generalized be to testing several null-hypotheses simultaneously. In particular, one might instead use
\begin{equation}\footnotesize
\label{eq:T_nMulti}
    T_n = \vert \sqrt{n} (A^\top \hat{\beta}^{d} - a_0) \vert_\infty
\end{equation}
and the corresponding invariant
\begin{equation}\footnotesize
    t_n(G \hat{\varepsilon}) = \left\vert \frac{1}{\sqrt{n}} A^\top MX^\top  G \hat{\varepsilon} \right\vert_\infty.
\end{equation}
We focus on the $1$-dimensional case for expositional clarity. 
\end{remark}

%To compute the Lasso estimate, we use the Square-Root Lasso, which automatically selects the $\ell_1$ penalty parameter~\citep{belloni2011sqrtLasso}.

\begin{algorithm}[t]
    		\caption{\label{alg:procedure}Test $a^\top \beta = a_0$} 
		\begin{algorithmic}
            \REQUIRE $Y$, $X$, $a^\top$, $a_0$, $\mathcal{G}$, $\lambda_1$, $\delta$
            \STATE Compute Lasso estimate $\hat \beta^l(\lambda_1)$ and $\hat{\varepsilon} = Y - X\hat \beta^l$
            \STATE Compute $\lambda^\star$ and $M_{\lambda^\star}$ from~\eqref{eq:selectM} and \eqref{eq:minOverLambda}
            \STATE Compute $\hat \beta^{d,M_{\lambda^\star}} = \hat \beta^{l} + \frac{1}{n} M_{\lambda^\star} X^\top (Y - X\hat \beta^{l})$
        \STATE Let $T_n^{(M_{\lambda^\star})} = \sqrt{n} (a^\top \hat \beta^{d,M_{\lambda^\star}} - a_0)$
        \FOR{$G \in \mathcal{G}$}
            \STATE Set $\hat{t}^{(M_{\lambda^\star})}(G \hat{\varepsilon}) = \frac{1}{\sqrt{n}} a^\top M_{\lambda^\star} X^\top G \hat{\varepsilon}$
        \ENDFOR
        \STATE \textbf{Return} $\frac{\sum_G \mathbb{I}\{ \hat{t}^{(M_{\lambda^\star})}(G \hat{\varepsilon}) >  T_n^{(M_{\lambda^\star})}  \}}{\vert \mathcal{G}\vert} $ 
%        \pta{Q: why had absolute values?}
		\end{algorithmic}
\end{algorithm}

%To select $M$, \citet{javanmard2014confidence} solve a problem with the same constraint as \eqref{eq:selectM}, but instead minimize the  $M_{i, :}^\top S M_{i, :}$ for all $i \in [p]$, which---similar to \citet{zhang2017simultaneous} and \citet{dezure2017simultaneous}---prioritizes shorter confidence intervals. When it is sparse, the inverse covariance of $X_{i,:}$ can be consistently estimated by solving \eqref{eq:selectM}~\cite{cai2011clime}. In that case the residual randomization procedure should still produce asymptotically efficient confidence intervals and would be asymptotically equivalent to the other procedures. However, in finite samples, we see drastic empirical improvements in robustness.  

\subsection{Confidence Intervals}\label{sec:confInt} 
To form a univariate confidence interval for $\beta_j$, we invert the hypothesis test for $\beta_j = a_0$~\cite{rosenbaum2003exact}. In particular, for $a = e_j$ we can compute the distribution of $t(G \hat{\varepsilon})$ and set $\tau_{\pi_0/2}$ and $\tau_{1 - \pi_0/2}$ to the $\pi_0/2$ and $1 - \pi_0 / 2$ quantiles. Finally, to invert the level $\pi_0$ two-sided test we select $a_0$ such that $\sqrt{n}(\hat \beta^d - a_0) \in [\tau_{\pi_0/2}, \tau_{1 - \pi_0 / 2}]$. 
This is equivalent to the confidence interval for $\beta$:
\begin{equation} \label{eq:confInt}\footnotesize
\left(\hat \beta^d - \frac{\tau_{1 - \pi_0/2}}{\sqrt{n}}, \hat \beta^d - \frac{\tau_{\pi_0/2}}{\sqrt{n}}\right).
\end{equation}
Note that as opposed to the multiplier bootstrap confidence intervals proposed by \citet{zhang2017simultaneous}, the confidence intervals in \eqref{eq:confInt} may be asymmetric because they are produced from the attainable randomization distribution,
which may be asymmetric. This is similar to the procedure proposed by \citet{dezure2017simultaneous}. 

\section{Main Results}\label{sec:theory}
Let $F_{t}(X, \varepsilon)$ denote the oracle randomization distribution of $t_n$ in~\eqref{eq:oracleDist} conditional on $X$ and $\varepsilon$ when $G$ is selected uniformly from $\mathcal{G}$. Let $F_{\hat t}(X, \varepsilon)$ denote the attainable distribution; i.e., 
the distribution of $\hat t_n$ in~\eqref{eq:residRandomDist} when $G$ is chosen uniformly from $\mathcal{G}$. As the notation implies, both $F_{t}(X, \varepsilon)$ and $F_{\hat t}(X, \varepsilon)$ depend on $X, \varepsilon$ and are random distributions with respect to $X, \varepsilon$. We give a finite sample characterization of the Wasserstein-1 distance between these two random distributions under certain assumptions, and we show that the distance goes to $0$ with probability going to $1$. All proofs are given in the supplement.

We first state Lemma~\ref{lem:wasserstein} which, as mentioned in Section~\ref{sec:rr_Lasso} shows that the distance between the oracle and attainable distributions can be decomposed into the estimation error of $\hat \beta^{l}$ as well as two terms which, roughly speaking, regulate the difference in means and variance.  
\begin{lemma}\label{lem:wasserstein}
For any $M \in \mathbb{R}^{p \times p}$, let $d_1\left(F_{t}(X, \varepsilon),  F_{\hat t}(X, \varepsilon)\right)$ denote the Wasserstein-1 distance between the oracle randomization distribution and attainable randomization distributions. Then,
\begin{equation}\footnotesize
\begin{aligned}\label{eq:w1}
    d_1&\left(F_{t}(X, \varepsilon),  F_{\hat t}(X, \varepsilon)\right) \leq\left \vert \hat \beta^{l} - \beta \right \vert_1 \times \\
    &\quad \left[ \left\vert\sqrt{n}a^\top (I - M S)  \right \vert_\infty  + \left \vert a^\top M \right \vert_1 \E_Q \left(\left\vert X^\top  G X  /\sqrt{n} \right\vert_\infty\right) \right].
\end{aligned}
\end{equation}
where $Q$ is the uniform distribution over $G$ in $\mathcal{G}$. 
\end{lemma}

We now provide conditions under which the two terms in \eqref{eq:w1} can be controlled for the $M_{\lambda^\star}$ selected by~\eqref{eq:selectM} and \eqref{eq:minOverLambda}. Condition~\ref{con:covariates} requires that the tails of $X_{i,:}$ and $\varepsilon_i$ be sub-Weibull and bounds certain moments of the observed covariates. 
In Condition~\ref{con:covariates}, $\Vert \varepsilon_i \Vert_{\Psi_\alpha}$ denotes the Orlicz norm of $\varepsilon_i$ with $\Psi_\alpha(x) = \exp(\vert x\vert ^\alpha) - 1$ and $\Vert \tilde X_{i,:} \Vert_{J,\Psi_\alpha} = \sup_{\vert \theta \vert_2 = 1} \Vert \tilde X_{i,:}^\top \theta \Vert_{\Psi_\alpha} $ is the joint Orlicz norm. 
\begin{condition}[Covariates]\label{con:covariates}
Suppose that $X_{i,:} \in\mathbb{R}^p$ are generated i.i.d. with mean $0$ and covariance $\Sigma$. Let $\lambda_{\max}$ and $\lambda_{\min}$ denote the largest and smallest eigenvalues of $\Sigma$. Suppose each element of $X_{i,:}$ is sub-Weibull($\alpha$) and the de-correlated covariates $\tilde X_{i,:}^\top  = \Sigma^{-1/2}X_{i,:}^\top$ are jointly sub-Weibull($\alpha$) with 
\begin{equation}\footnotesize
\max\left(\Vert \tilde X_{i,:}^\top \Vert_{J,\Psi_\alpha}, \max_{v} \Vert X_{i,v} \Vert_{\Psi_\alpha} \right) \leq \kappa.
\end{equation}
We also define
{\scriptsize
\begin{equation}\label{eq:gamma}
\begin{aligned}
    \Gamma =  \max&\left\{\max_{u, v \in [p]^2} \E\left(\left[X_{i,u} X_{i,v}\right]^2\right), \max_{v \in [p]}\E\left(\left[a^\top \Sigma^{-1}X_{i,:}^\top X_{i,v}\right]^2\right), \right.
    \\
    &\qquad \left.  \max_{v \in [p]}\E\left(\left[a^\top \Sigma^{-1}X_{i,:}^\top X_{j,v}\right]^2\right) \right\}.
\end{aligned}
\end{equation}
}
\end{condition}

We assume the high-dimensional regime where $n$ and $p$ both grow and $p$ can be much larger than $n$. Condition~\ref{con:highD} also implicitly restricts the 2-norm of $\Sigma$ and $\Sigma^{-1}$ through $\kappa^\star$. In particular, we require $n\log(n)^{-4/\alpha}\log(pn)^{1 - 4/\alpha}$ to scale linearly with the condition number of $\Sigma$. We will also require $8\sqrt{\Gamma (\log(pn) + 2\log(p))/n} < 1$ so that $\Sigma^{-1}$ is in the feasible set of \eqref{eq:selectM} for some $\lambda< 1$ with probability tending to $1$.

\begin{condition}[Sample Size]\label{con:highD}
Suppose 
\begin{equation}\footnotesize
    \kappa^\star =\kappa^2\max\left(\vert a \vert_2 \frac{ \sqrt{\lambda_{\max}}}{\sqrt{\lambda_{\min}}}, 1 \right)
\end{equation}
and
{\footnotesize \begin{equation}\begin{aligned}
    n > \max&\left\{\frac{4C_\alpha^2 (\kappa^\star)^2 \left[\log(2n)\right]^{4/\alpha}\left[3\log(pn)\right]^{4/\alpha - 1}}{\Gamma},\right.\\
    & \quad\left. \vphantom{\frac{4C_\alpha^2 (\kappa^\star)^2 \left[\log(2n)\right]^{4/\alpha}\left[3\log(pn)\right]^{4/\alpha - 1}}{\Gamma}} 64\Gamma(\log(pn) + 2 \log(p)) \right\}
    \end{aligned}
\end{equation}}
for some constant $C_\alpha$  which only depends on $\alpha$. 
\end{condition}

We now give conditions on the set of group actions, $\mathcal{G}$.

\begin{condition}[Exchangeability]\label{con:exchangeable}
%Suppose the elements of $\varepsilon$ are exchangeable. 
Let $\mathcal{G} \subset \mathcal{G}_p$ where $\mathcal{G}_p$ is the set of all matrices corresponding to a permutation $g$ of $[n]$ such that (i) $[n] = N_1 \cup N_2$ for some $N_1$ and $N_2$ equal-sized disjoint sets,
and (ii) for all $j \in N_1$, $g(j) \in N_2$ and for all $j \in N_2$, $g(j) \in N_1$.
\end{condition}

\begin{condition}[Sign Symmetry]\label{con:symmetry}
%Suppose $\varepsilon_i\stackrel{d}{=}-\varepsilon_i$ for all $i$.
Let $\mathcal{G} \subset \mathcal{G}_s$ where $\mathcal{G}_s$ is the set of all diagonal matrices containing only $\pm1$ such that there is an equal number of positive and negative $1$'s. 
\end{condition}

\begin{condition}[Cluster Exchangeability]\label{con:ClustExchangeable}
Suppose there exist $n_c$ disjoint sets $L_k$ with $[n] = \bigcup_k^{n_c} L_k$ and $|L_k| = n / n_c = J$ such that $\{\varepsilon_i\}_{i \in L_k}$ are exchangeable, but may otherwise be dependent. 
That is, $\mathcal{G} \subset \mathcal{G}_c$, where $\mathcal{G}_c$ is the set of all block diagonal matrices where the $G_{L_k, L_k}$ block is a permutation matrix satisfying Condition~\ref{con:exchangeable}. 
\end{condition}

In Conditions~\ref{con:exchangeable},~\ref{con:symmetry}, and \ref{con:ClustExchangeable} we implicitly assume that $n$ is even to simplify the analysis; when $n$ is odd, the last observation may be discarded.
A more subtle requirement is that we do not assume the support of the randomization distribution is over all possible maps in either $\mathcal{G}_p$, $\mathcal{G}_c$, or $\mathcal{G}_s$. These sets grow exponentially in $n$. Instead, we consider the more practical scenario where $\vert \mathcal{G}\vert$ is fixed with respect to $n$. This is similar to using the bootstrap or a Monte Carlo procedure with a number of draws that may be increased when a better approximation is desired, but in general stays fixed with $n$.
Condition~\ref{con:ClustExchangeable} also implicitly requires that $\vert L_k \vert > 2$ since $\vert \mathcal{G}_c\vert = 1$ if all clusters have size $2$ because there would be only $1$ sub-matrix which satisfies Condition~\ref{con:exchangeable} for each $L_k$.   

Given the conditions on the covariates and group actions, we now state Lemma~\ref{lem:permNuisance} and~\ref{lem:sigmaFeasible}, which show that the two terms in Lemma~\ref{lem:wasserstein} can be controlled.
\begin{lemma}\label{lem:permNuisance}
Under Conditions~\ref{con:covariates} and~\ref{con:highD} and either Condition~\ref{con:exchangeable},~ \ref{con:symmetry}, or \ref{con:ClustExchangeable}, we have
\begin{equation}\scriptsize
P\left( \frac{\sum_{G }\vert X^\top G X\vert_\infty}{\vert \mathcal{G}\vert } \geq  8 \sqrt{ \frac{2\Gamma (\log(pn) + 2\log(p))}{n}}\right) \leq  6\vert \mathcal{G}\vert (np)^{-1}.
\end{equation}
\end{lemma}

\begin{lemma}\label{lem:sigmaFeasible}
Under the Conditions~\ref{con:covariates} and \ref{con:highD}, we have
\begin{equation}\label{eq:sigmaFeasible}\scriptsize
P\left(\vert a^\top(I-\Sigma^{-1}S \vert_\infty \geq  8 \sqrt{ \frac{\Gamma (\log(pn) + 2\log(p))}{n}}\right) \leq  3(np)^{-1}.
\end{equation}
Thus, with probability at least $1 - 3(np)^{-1}$ the feasible set of \eqref{eq:selectM} is non-empty with $\lambda = 8 \sqrt{ \frac{\Gamma (\log(pn) + 2\log(p))}{n}}$ and 
\begin{equation}\footnotesize
    \vert a^\top M_\lambda \vert_1 \leq  \vert a^\top \Sigma^{-1} \vert_1.
\end{equation}

\end{lemma}

\begin{corollary}\label{cor:involvesBeta}
Assume the conditions of Lemma~\ref{lem:sigmaFeasible} and Lemma~\ref{lem:permNuisance}. Then with probability greater than $1 - 3(np)^{-1} -  6\vert \mathcal{G}\vert (np)^{-1}$ using $M^\star$ selected from \eqref{eq:selectM} and \eqref{eq:getLambda} yields
\begin{equation}\footnotesize \begin{aligned}
        d_1&\left(F_{t}(X, \varepsilon),  F_{\hat t}(X, \varepsilon)\right) \leq\left \vert \hat \beta^{l} - \beta \right \vert_1 \times \\
    &\quad\left[ 8\left(\delta + \left\vert a^\top \Sigma^{-1} \right \vert_1 \right) \sqrt{2\Gamma (\log(pn) + 2\log(p))} \right]. 
\end{aligned}\end{equation}
\end{corollary}

%Corollary~\ref{cor:involvesBeta} that follows is implied from Lemma~\ref{lem:permNuisance} and \ref{lem:sigmaFeasible}. 
Corollary~\ref{cor:involvesBeta} implies that using any procedure that can produce an estimate $\hat \beta^{l}$ such that $\vert \hat \beta^{l} - \beta \vert_1 = O_p((\log p)^{-1/2})$ is sufficient for showing that the oracle and attainable distributions converge in Wasserstein distance. For concreteness, we consider two settings and apply existing results on Lasso estimation. However, other estimators (i.e., SCAD or best subset selection) can also be used as long as $\vert \hat \beta - \beta \vert_1$ attains the correct rate. First, we consider Lasso estimates when $\varepsilon_i$ is sub-Weibull($\alpha$) as in \citet{kuchibhotla2018moving} who require some additional assumptions summarized in Condition~\ref{con:lassSubWei} that follows. We also consider the setting of \citet{belloni2016inference} who require sub-Gaussian covariates and errors, but allow for clustered error dependence.

\begin{figure*}[t]
\centering
\includegraphics[scale = .9]{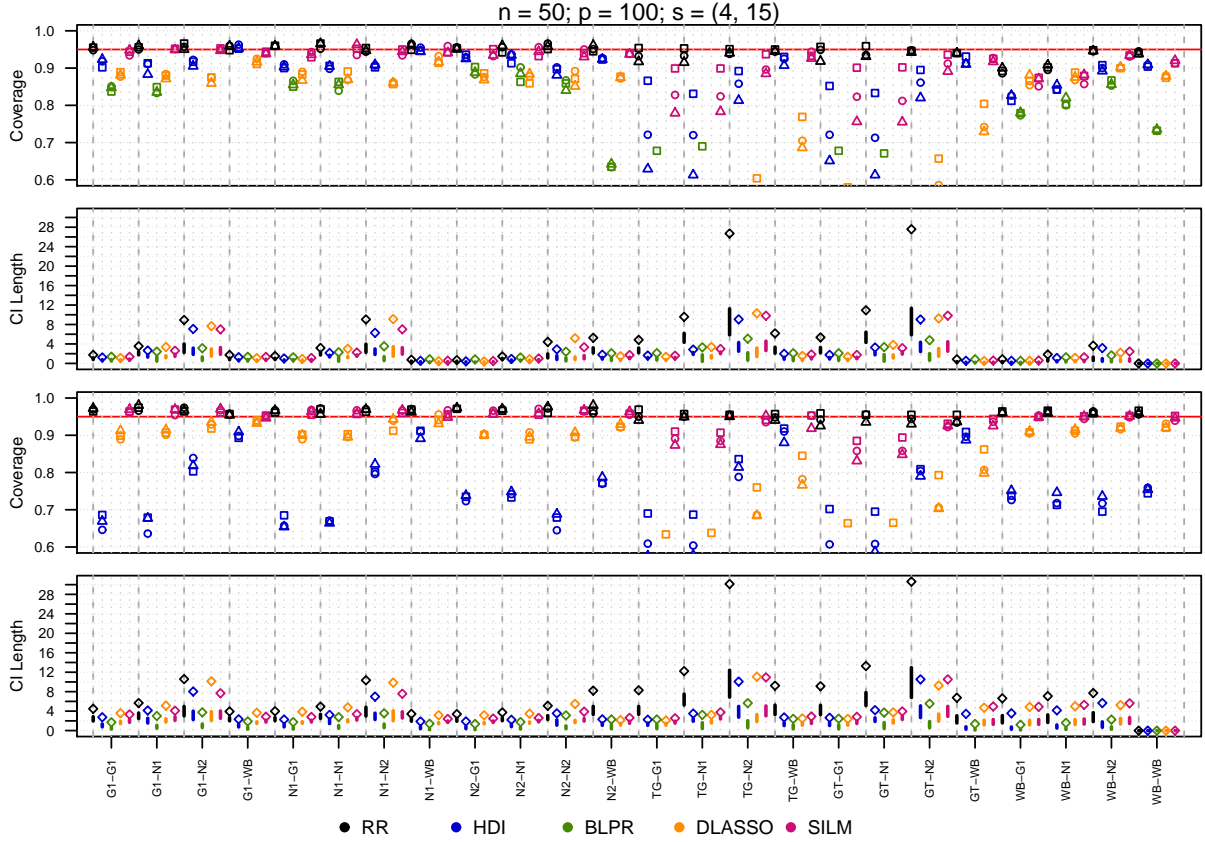}
\vspace{-1em}
\caption{Empirical coverage and confidence interval length for the \emph{active variables} with $n = 50$, $p = 100$, $1000$ replications and \emph{exchangeable errors}. The top two panels are for $s=4$ and the bottom two are for $s = 15$. The first and third panels show empirical coverage rates for each procedure; the sandwich coordinate is denoted by $\Delta$, isolated is $\Box$, and adjacent is $\circ$. In the bottom panel, the line segment spans the $.25$ quantile and $.75$ quantile of the confidence interval lengths and the single point indicates the $.99$ quantile. Instead of showing the quantiles for each coordinate, we instead plot the maximum $.25$ (or $.75$, $.99$) quantile across the sandwich, isolated, and adjacent coordinates. The labels on the horizontal axis indicate a different simulation setting and are coded as ``Covariate - Errors'' where the different covariate and error settings are detailed in the main text. For some settings and procedures, the empirical coverage drops below $.6$ and is not shown.}
\label{fig:small}
\end{figure*}

\begin{condition}[Lasso with sub-Weibull errors]\label{con:lassSubWei}
Suppose $\varepsilon_i$ is sub-Weibull($\alpha$) with $\Vert \varepsilon_i \Vert_{\Psi_\alpha} \leq \kappa$. Suppose that
\begin{equation}\footnotesize
    \lambda_{\min} \geq 54 \min_{1 \leq h \leq p} \left\{ \Xi_{n,h} + \frac{32k\Xi_{n,h} }{h} \right\}
\end{equation}
where \begin{equation}\footnotesize
\begin{aligned}
    \Xi_{n,h} &= 14 \sqrt{2}\sqrt{\frac{\Upsilon_{n,h} h \log(36np/h)}{n}}\\
    &\quad + \frac{C_\alpha \kappa^2 h(\log(2n))^{2/\alpha}(h \log(36np/h))^{2/\alpha}}{n}\\
    \Theta_h &= \left\{\theta \in \mathbb{R}^p : \vert \theta \vert_0 \leq h, \vert \theta \vert_2 \leq 1\right\}\\
    \Upsilon_{n,h} &= \sup_{\theta \in \Theta_h}{\rm var}\left[\left(X_{i,:}^\top \theta\right)^2 \right].
\end{aligned} 
\end{equation}
Furthermore, suppose that the Lasso penalty term $\lambda_1$ is set such that
\begin{equation}\scriptsize
    \lambda_1 = 14 \sqrt{2}\sigma \sqrt{\frac{\log(np)}{n}} + \frac{C_{\alpha/2} \kappa^2(\log(2n))^{2/\alpha}(2\log(np))^{2/\alpha}}{n},
\end{equation}
and in addition to Condition~\ref{con:highD}
\begin{equation}\footnotesize
    n >  \frac{C_{\alpha/2}^2 \kappa^4 (\log(pn))^{8/\alpha -1}}{\sigma^2},
\end{equation}
where $\sigma = \max_{v \in [p]} \text{\rm var}(X_{i,v} \varepsilon_v)$ and $C_{\alpha/2}$ is a constant only depending on $\alpha$.
\end{condition}

\begin{theorem}[Sub-Weibull Errors and Covariates]\label{thm:lassSubWei}Suppose Conditions~\ref{con:covariates}, \ref{con:highD}, and~\ref{con:lassSubWei} hold. Under either Condition~\ref{con:exchangeable} or \ref{con:symmetry}, with probability not less than $1 - \frac{6\vert \mathcal{G} \vert + 3}{np} - \frac{3}{np} - \frac{3}{n}$, 
\begin{equation}\scriptsize
\begin{aligned}
    d_1\left(F_{t}(X, \varepsilon),  F_{\hat t}(X, \varepsilon)\right) &\leq \frac{1360 \left(\delta + \left\vert a^\top \Sigma^{-1} \right \vert_1\right)\sigma s \sqrt{\Gamma}  }{\lambda_{\min}}\frac{\log(np)}{\sqrt{n}}.
    \end{aligned}
\end{equation}
\end{theorem}

Recall in Condition~\ref{con:ClustExchangeable} with clustered errors, $n_c$ denotes the number of clusters, where each cluster has size $J$. Then, Theorem~\ref{thm:clustered}, presented below, combines Corollary~\ref{cor:involvesBeta} with the results of \citet{belloni2016inference} on Lasso estimates under clustered errors. In particular, they propose the Cluster-Lasso procedure and show that its performance depends on a term which measures the within-cluster dependence:
\begin{equation}\scriptsize
    \imath_J = J \min_{1 \leq v \leq p} \E\left(\frac{1}{J} \sum_{j = 1}^J \ddot{X}^2_{ijv} \ddot{\varepsilon}^2_{ij}\right) \bigg/ \E\left(\frac{1}{J} \left[\sum_{j = 1}^J \ddot{X}^2_{ijv} \ddot{\varepsilon}^2_{ij}\right]^2\right),
\end{equation}
where $\ddot{X}_{ijv}$ and  $\ddot{\varepsilon}_{ij}$ denotes the $v$th covariate and $j$th observation in the $i$th cluster which have been adjusted by their respective cluster means. Under complete independence $\imath_J = J$, and the rate in Theorem~\ref{thm:clustered} recovers the rate under independent errors. In the worst case, however,  $\imath_J = 1$, so that each cluster is essentially one observation. It is worth noting that \citet{belloni2016inference} allow cluster dependence in both covariates and errors; however, we allow for dependent errors but still require the covariates to be i.i.d. For completeness, we include the assumptions of \citet{belloni2016inference} in the supplement.

\begin{theorem}[Clustered Errors]\label{thm:clustered}
Suppose Conditions~\ref{con:covariates}, \ref{con:highD}, and \ref{con:ClustExchangeable} hold and $\vert \mathcal{G} \vert = O(1)$. Further assume the conditions of Theorem 1 of \citet{belloni2016inference} and let $\hat \beta^{l}$ be the Cluster-Lasso. Then $d_1\left(F_{t}(X, \varepsilon),  F_{\hat t}(X, \varepsilon)\right)$ is 
\begin{equation}\footnotesize
\begin{aligned}
        O_p\left( \frac{s \left(\delta + \left\vert a^\top \Sigma^{-1} \right \vert_1 \right) \sqrt{\Gamma_p} (\log(pn) + \log(p))}{\sqrt{n_c {\it \imath_J}}}  \right).
\end{aligned}
\end{equation}
\end{theorem}

In Theorems~\ref{thm:lassSubWei} and \ref{thm:clustered}, we explicitly include the term $\vert a^\top \Sigma^{-1} \vert_1$, and for convergence, we require $\vert a^\top \Sigma^{-1} \vert_1 = O(\sqrt{n} /(s\log(np)))$. %This term could be removed if we directly solved the LP described in Section~\ref{sec:rr_Lasso}. However, we would still require control over $\Vert \Sigma^{-1} \Vert_2$ because of it's effect on $\kappa^\star$.

\begin{remark}
Convergence in Wasserstein-1 implies weak convergence, as well as $L_1$ convergence of the quantile and distribution functions. Similar to \citet{bickel1981some}, \citet{bickel1983bootstrapping}, and \citet{lopes2014bootstrap}, we use this to justify the procedure's use for hypothesis tests and confidence intervals. %With a reasonable additional assumption that the oracle distribution converges to some continuous distribution with bounded density, it can be directly shown that convergence in Wasserstein distance also implies pointwise convergence of the quantiles.
\end{remark}

\section{Numerical Experiments}\label{sec:experiments}

\begin{figure*}[t]
\centering
\includegraphics[scale = .9]{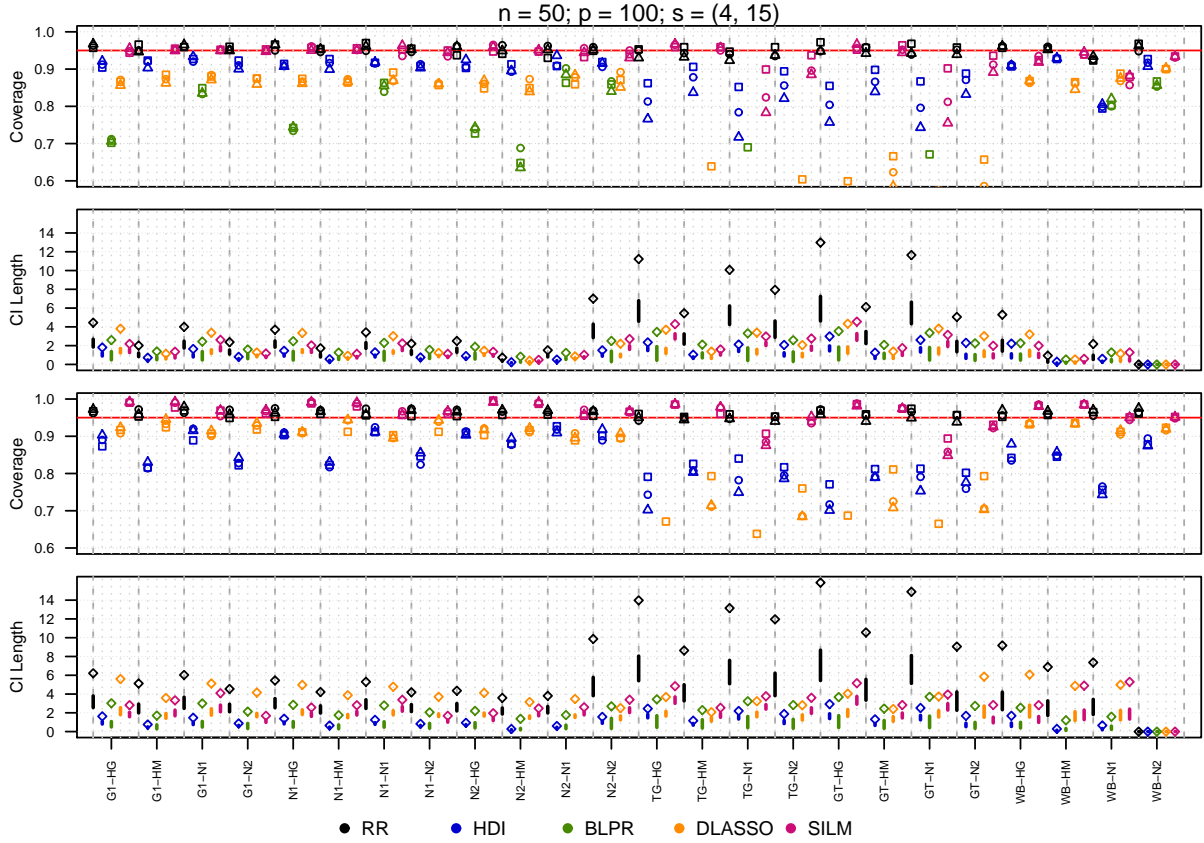}
\vspace{-1em}
\caption{Empirical coverage and confidence interval length for the \emph{active variables} with $n = 50$, $p = 100$, $1000$ replications and \emph{sign symmetric errors}. The top two panels are for $s=4$, the bottom two panels are for $s = 15$. All other elements remain the same as Figure~\ref{fig:small}.}
\label{fig:large}
\end{figure*} 

We compare nominal $95\%$ confidence intervals (CIs) over 1000 trials of \texttt{BLPR}~\citep{liu2017bootstrap}, \texttt{HDI} \citep{dezure2017simultaneous}, \texttt{DLASSO}~\citep{javanmard2014confidence}\footnote{https://web.stanford.edu/~montanar/sslasso/code.html}, \texttt{SILM}~\citep{zhang2017simultaneous} and residual randomization (\texttt{RR}) with $(n = 50, p = 100)$ and $(n = 100, p = 300)$. We slightly modified \texttt{SILM} to output marginal confidence intervals and ensure that the modified variance estimator does divide by $0$ if the support of the estimated $\beta$ is $n$. Additional details are in the supplement, and the code is available at: \url{https://github.com/atechnicolorskye/rrHDI}. 

In each setting, we sample random $X \in \mathbb{R}^{n \times p}$ with rows drawn i.i.d. from either (\textbf{N1}) $N(0, I)$; (\textbf{G1}) $\text{Gamma}(1, 1) - 1$; (\textbf{N2})  $N(\mu ,1)$ with $P(\mu = -2) = P(\mu = 2) = 0.5$; (\textbf{NT}) $N(0, \Sigma)$ for $\Sigma_{ij} = .8^{|i-j|}$; (\textbf{GT}) $\text{Gamma}(\Sigma) - 1$ for $\Sigma_{ij} = .8^{|i-j|}$; or (\textbf{WB}) each element is a centered Weibull with scale $=1$ and shape $=1/2$. 

We sample the errors $\varepsilon \in \mathbb{R}^n$ from (\textbf{N1}) $N(0, 1)$; (\textbf{G1}) $\text{Gamma}(1, 1) - 1$; (\textbf{N2}) $N(\mu ,1)$ with $P(\mu = -2) = P(\mu = 2) = 0.5$; (\textbf{WB}) a centered Weibull with scale $ =1$ and shape $ =1/2$; (\textbf{HN}) normal with $N(0, 2 \Vert X_i \Vert_2^2 / p)$; or (\textbf{HM}) $N(\mu, 2 \Vert X_i \Vert_2^2 / p)$ with $P(\mu = -2) = P(\mu = 2) = 0.5$. The exchangeable settings exclude the heteroskedastic cases of \textbf{HN} and \textbf{HM}; the symmetric settings exclude \textbf{G1} and \textbf{WB}.

For a fair comparison, we have \texttt{HDI} use wild bootstrap for the symmetric settings. Also, apart from \texttt{SILM}, empirical performance is generally not affected by the scale of the covariates; however, in certain settings \texttt{SILM} performed very poorly when the covariates were not standardized. Thus, we standardize the covariates to benefit \texttt{SILM}.   

%% UPDATE IF using new sims
For each setting, we draw $\beta \in \mathbb{R}^{p}$ with $s = 4$ or $15$ \emph{active} (i.e., non-zero) coordinates drawn from the Rademacher distribution and set the remaining $p - s$ \emph{inactive} coordinates to 0. We arrange entries in $\beta$ in such that there is one active entry between two inactive entries (\emph{isolated}), one active between an active entry and an inactive entry (\emph{adjacent}), and one active entry between two other active entries (\emph{sandwiched}). We also use the same scheme for the inactive variables. We then set $Y = X \beta + \varepsilon$. 

To obtain $M_{\lambda^\star}$, we solve~\eqref{eq:selectM} to up to $500$ iterations using \texttt{fastclime}~\citep{fastClime} which starts with $\lambda = 1$ and uses warm starts to progressively shrink $\lambda$. We further select $\lambda^\star$ via \eqref{eq:minOverLambda} by using a grid search over the $\lambda$ values used by \texttt{fastclime} with $\delta = 10,000$. Empirically, a larger value of $\delta$ generally results in better coverage, but comes at the expense of confidence interval length; broadly speaking though, we see that for $\delta \geq 1,000$, the performance of the proposed procedure is fairly insensitive to the value of $\delta$.

Since in practice we do not know the appropriate Lasso tuning parameter $\lambda_1$ a priori, for the residual randomization procedure we employ the Square-Root Lasso~\citep{belloni2011sqrtLasso} implemented in \texttt{RPtests}~\citep{RPtests} to obtain estimates for $\hat{\beta}^l$. We follow \cite{zhang2017simultaneous} and rescale $\hat{\varepsilon}$ by $\sqrt{n / (n - \vert \hat{\beta}^l \vert_0)}$ as a finite-sample correction. For all settings, we use $1,000$ group actions/bootstrap resamples.

In Figures~\ref{fig:small} and \ref{fig:large} we show the empirical coverage and the confidence interval length for the active variables when $(n = 50, p = 100)$ for exchangeable and symmetric errors respectively. In the supplement, we provide corresponding figures for the inactive variables as well as results for $(n = 100, p = 300)$.

Among the competing methods, \texttt{SILM} performs the best, and has generally satisfactory performance across all settings except the Toeplitz case and the setting with Weibull covariates for $s = 4$. For $s = 4$, \texttt{HDI} generally outperforms \texttt{DLASSO}, but when $s = 15$, \texttt{HDI}'s performance decreases drastically. Generally, \texttt{BLPR} performs the worst and when $s = 15$ has coverage less than $.6$. All the competing methods perform poorly when the covariates have Toeplitz covariance. The sandwich coordinates typically have the lowest coverage and the isolated coordinates typically have empirical coverage closest to the nominal rate. 
 
In contrast, we see that \texttt{RR} nearly obtains the nominal 95\% coverage regardless of exchangeability or sign symmetry, and across all experimental configurations. This remarkable stability can be explained by our selection procedure for $M$ and the general properties of randomization tests~(see Introduction and Remark 1). At the same time, {\tt RR} typically yields larger interval lengths. While the interval length from {\tt RR} is generally longer than the competing methods, this is especially true when the covariates have Toeplitz covariance.
We posit that this is because solving ~\eqref{eq:selectM} for poorly conditioned sample covariances can yield large $\vert a^\top M\vert_1$ and thus results in larger confidence intervals.

In the supplement, we show the empirical coverage for the inactive coordinates with $(n = 50, p = 100)$ as well as all coordinates when $(n = 100, p = 300)$. For $(n = 100, p = 300)$, the results are qualitatively similar to the results for $(n = 50, p = 100)$; \texttt{RR} almost always has the best empirical coverage followed by \texttt{SILM}, \texttt{HDI}, \texttt{DLASSO}, and then \texttt{BLPR}. However, all methods (including \texttt{RR}) generally perform less well; in particular, in contrast to the $(n = 50, p = 100)$ case, there are some settings where \texttt{RR} does not attain nominal coverage. 
For the inactive variables when $(n = 50, p = 100)$ and $(n = 100, p = 300)$, all procedures except \texttt{BLPR} typically achieve (or exceed) nominal coverage. 

Table~\ref{tab:time} gives the average computation time (in seconds) required for each of the procedures with $n = 100$, $p = 300$. We form a confidence interval for each of the 300 coordinates using $1,000$ group actions/bootstrap resamples. \rev{Unsurprisingly}, \texttt{RR} requires more computational effort than competing procedures that appeal to asymptotic limiting distributions (\texttt{BLPR} and \texttt{DLASSO}). However, the computational effort is comparable to the resampling-based methods in our experiments (\texttt{HDI} and \texttt{SILM}). 
%\vspace{-4em}
\begin{table}[h] \small
     \centering
          \caption{\label{tab:time}Mean Computation Time in Seconds.}
     \begin{tabular}{llllll}
     \toprule
     Method & \texttt{BLPR} & \texttt{HDI} & \texttt{DLASSO} & \texttt{SILM} & \texttt{RR}\\
     \midrule
     Time (s) & 18.9 & 374.2 & 11.2 & 61.1 & 135.1 \\
     \bottomrule
     \end{tabular}

\end{table}
% \vspace{-2em}

\section{Discussion}
The theoretical guarantees coupled with the excellent empirical performance suggest that residual randomization is an appealing alternative for robust inference in high-dimensional linear models. Across a wide range of settings, it attains nominal coverage even when $n$ is small or the errors are heavy tailed. Of course, this does not come for free, as we observe that the confidence intervals produced are generally larger than those produced by competing methods---especially when 
there is strong cross-correlation in covariates. Nonetheless, we believe that in most practical applications \rev{slightly} larger confidence intervals is a small price to pay in exchange for better, more robust coverage.

While the procedure is asymptotically valid for any fixed $\delta \geq 1$, one practical concern is specifying a tuning parameter $\delta$ when selecting $M^\star$. If $\delta$ is too small, this may result in degraded empirical performance; however, the simulations show that there is a threshold of $\delta$ for which our procedure performs well across all settings. This threshold seems to coincide with a point at which the length of our confidence intervals become insensitive to increasing $\delta$. Thus, this threshold can be well approximated---at least via heuristics. 

Nonetheless, in Section 3 of the supplement, we describe an alternative procedure which also performs well empirically and is also asymptotically valid when an upper bound on $\Gamma$ in \eqref{eq:gamma} is known.

Several questions may be fruitful to pursue in the future. First, to form confidence intervals, we require control of $\vert \hat \beta^{l} - \beta \vert_1$. However, it would be interesting to investigate residual randomization procedures which only require small in-sample prediction error; i.e., $\frac{1}{n}\vert X(\hat\beta^{l} - \beta)\vert$. This would allow testing individual coordinates of $\beta$ without assuming stringent restricted eigenvalue or irrepresentability conditions.  There are additional advantages of the residual randomization framework which could be further exploited methodologically. For example, we primarily focused on selecting $M$, but one could also use the observed covariates $X$ to select specific invariances $G \in \mathcal{G}$ which optimize certain test properties. 

\section*{Acknowledgments}
This work was completed in part with resources provided by
the University of Chicago Research Computing Center.

\FloatBarrier
\clearpage

\bibliography{icml_bib}
\bibliographystyle{icml2021}

%%%%%%%%%%%%%%%%%%%%%%%%%%%%%%%%%%%%%%%%%%%%%%%%%%%%%%%%%%%%%%%%%%%%%%%%%%%%%%%
%%%%%%%%%%%%%%%%%%%%%%%%%%%%%%%%%%%%%%%%%%%%%%%%%%%%%%%%%%%%%%%%%%%%%%%%%%%%%%%
% DELETE THIS PART. DO NOT PLACE CONTENT AFTER THE REFERENCES!
%%%%%%%%%%%%%%%%%%%%%%%%%%%%%%%%%%%%%%%%%%%%%%%%%%%%%%%%%%%%%%%%%%%%%%%%%%%%%%%
%%%%%%%%%%%%%%%%%%%%%%%%%%%%%%%%%%%%%%%%%%%%%%%%%%%%%%%%%%%%%%%%%%%%%%%%%%%%%%%

%\appendix
%\section{Do \emph{not} have an appendix here}

%\textbf{\emph{Do not put content after the references.}}
%
%Put anything that you might normally include after the references in a separate supplementary file.

%We recommend that you build supplementary material in a separate document. If you must create one PDF and cut it up, please be careful to use a tool that doesn't alter the margins, and that doesn't aggressively rewrite the PDF file. pdftk usually works fine. 

%\textbf{Please do not use Apple's preview to cut off supplementary material.} In previous years it has altered margins, and created headaches at the camera-ready stage. 

%%%%%%%%%%%%%%%%%%%%%%%%%%%%%%%%%%%%%%%%%%%%%%%%%%%%%%%%%%%%%%%%%%%%%%%%%%%%%%%
%%%%%%%%%%%%%%%%%%%%%%%%%%%%%%%%%%%%%%%%%%%%%%%%%%%%%%%%%%%%%%%%%%%%%%%%%%%%%%%

\end{document}

% --- supplement: supplement.tex ---

\title{Supplement\\ \vspace{2mm} \normalsize{Robust Inference for High-Dimensional Linear Models via Residual Randomization}}
\date{}
\maketitle

In the supplement, we give proofs for statements in the main manuscript, propose an alternative procedure for selecting $M^\star$, and give additional simulation details.

\section{Proofs}
Recall the conditions required in the main text.

\begin{condition}[Covariates]\label{con:covariates}
Suppose that $X_{i,:} \in\mathbb{R}^p$ are generated i.i.d. with mean $0$ and covariance $\Sigma$. Let $\lambda_{\max}$ and $\lambda_{\min}$ denote the largest and smallest eigenvalues of $\Sigma$. Suppose each element of $X_{i,:}$ is sub-Weibull($\alpha$) and the de-correlated covariates $\tilde X_{i,:}^\top  = \Sigma^{-1/2}X_{i,:}^\top$ are jointly sub-Weibull($\alpha$) with \begin{equation}
\max\left(\Vert \tilde X_{i,:}^\top \Vert_{J,\Psi_\alpha}, \max_{v} \Vert X_{i,v} \Vert_{\Psi_\alpha} \right) \leq \kappa.
\end{equation}
Moreover, \begin{equation}
    \Gamma =  \max\left(\max_{v \in [p]}\E\left(\left[a^\top \Sigma^{-1}X_{i,:}^\top X_{i,v}\right]^2\right), \max_{v \in [p]}\E\left(\left[a^\top \Sigma^{-1}X_{i,:}^\top X_{j,v}\right]^2\right), \max_{u, v \in [p]^2} \E\left(\left[X_{i,u} X_{i,v}\right]^2\right)\right).
\end{equation}
\end{condition}

\begin{condition}[Sample Size]\label{con:highD}
Suppose $\kappa^\star =\kappa^2\max\left(\vert a \vert_2 \frac{ \sqrt{\lambda_{\max}}}{\sqrt{\lambda_{\min}}}, 1 \right)$, and
\begin{equation}
    n > \max\left\{\frac{4C_\alpha^2 (\kappa^\star)^2 \left[\log(2n)\right]^{4/\alpha}\left[3\log(pn)\right]^{4/\alpha - 1}}{\Gamma}, 64\Gamma(\log(pn) + 2 \log(p)) \right\}
\end{equation}
for some constant $C_\alpha$  which only depends on $\alpha$. 
\end{condition}

\begin{condition}[Exchangeability]\label{con:exchangeable}
%Suppose the elements of $\varepsilon$ are exchangeable. 
Let $\mathcal{G} \subset \mathcal{G}_p$ where $\mathcal{G}_p$ is the set of all matrices corresponding to a permutation $g$ of $[n]$ such that (i) $[n] = N_1 \cup N_2$ for some $N_1$ and $N_2$ equal-sized disjoint sets,
and (ii) for all $j \in N_1$, $g(j) \in N_2$ and for all $j \in N_2$, $g(j) \in N_1$.
\end{condition}

\begin{condition}[Sign Symmetry]\label{con:symmetry}
%Suppose $\varepsilon_i\stackrel{d}{=}-\varepsilon_i$ for all $i$.
Let $\mathcal{G} \subset \mathcal{G}_s$ where $\mathcal{G}_s$ is the set of all diagonal matrices containing only $\pm1$ such that there is an equal number of positive and negative $1$'s. 
\end{condition}

\begin{condition}[Cluster Exchangeability]\label{con:ClustExchangeable}
Suppose there exist $n_c$ disjoint sets $L_k$ with $[n] = \bigcup_k^{n_c} L_k$ and $|L_k| = n / n_c = J$ such that $\{\varepsilon_i\}_{i \in L_k}$ are exchangeable, but may otherwise be dependent. 
That is, $\mathcal{G} \subset \mathcal{G}_c$, where $\mathcal{G}_c$ is the set of all block diagonal matrices where the $G_{L_k, L_k}$ block is a permutation matrix satisfying Condition~\ref{con:exchangeable}. 
\end{condition}

\begin{condition}[Lasso with sub-Weibull errors]\label{con:lassSubWei}
Suppose $\varepsilon_i$ is sub-Weibull($\alpha$) with $\Vert \varepsilon_i \Vert_{\Psi_\alpha} \leq \kappa$. Suppose that
\begin{equation}
    \lambda_{\min} \geq 54 \min_{1 \leq h \leq p} \left\{ \Xi_{n,h} + \frac{32k\Xi_{n,h} }{h} \right\}
\end{equation}
where \begin{equation}
\begin{aligned}
    \Xi_{n,h} &= 14 \sqrt{2}\sqrt{\frac{\Upsilon_{n,h} h \log(36np/h)}{n}}\\
    &\quad + \frac{C_\alpha \kappa^2 h(\log(2n))^{2/\alpha}(h \log(36np/h))^{2/\alpha}}{n}\\
    \Theta_h &= \left\{\theta \in \mathbb{R}^p : \vert \theta \vert_0 \leq h, \vert \theta \vert_2 \leq 1\right\}\\
    \Upsilon_{n,h} &= \sup_{\theta \in \Theta_h}{\rm var}\left[\left(X_{i,:}^\top \theta\right)^2 \right].
\end{aligned} 
\end{equation}
Furthermore, suppose that the Lasso penalty term $\lambda_1$ is set such that
\begin{equation}
    \lambda_1 = 14 \sqrt{2}\sigma \sqrt{\frac{\log(np)}{n}} + \frac{C_{\alpha/2} \kappa^2(\log(2n))^{2/\alpha}(2\log(np))^{2/\alpha}}{n},
\end{equation}
and in addition to Condition~\ref{con:highD}
\begin{equation}
    n >  \frac{C_{\alpha/2}^2 \kappa^4 (\log(pn))^{8/\alpha -1}}{\sigma^2},
\end{equation}
where $\sigma = \max_{v \in [p]} \text{\rm var}(X_{i,v} \varepsilon_v)$ and $C_{\alpha/2}$ is a constant only depending on $\alpha$.
\end{condition}

\subsection*{Lemma 1}
\begin{lemma}\label{lem:wasserstein}
For any $M \in \mathbb{R}^{p \times p}$, let $d_1\left(F_{t}(X, \varepsilon),  F_{\hat t}(X, \varepsilon)\right)$ denote the Wasserstein-1 distance between the oracle randomization distribution and attainable randomization distributions. Then,
\begin{equation}\footnotesize
\begin{aligned}\label{eq:w1}
    d_1&\left(F_{t}(X, \varepsilon),  F_{\hat t}(X, \varepsilon)\right) \leq\left \vert \hat \beta^{l} - \beta \right \vert_1 \times \\
    &\quad \left[ \left\vert\sqrt{n}a^\top (I - M S)  \right \vert_\infty  + \left \vert a^\top M \right \vert_1 \E_Q \left(\left\vert X^\top  G X  /\sqrt{n} \right\vert_\infty\right) \right].
\end{aligned}
\end{equation}
where $Q$ is the uniform distribution over $G$ in $\mathcal{G}$. 
\end{lemma}
\begin{proof}
The debiased Lasso $\hat \beta^{d,M}$ is defined as
\begin{align}
\hat \beta^{d,M} = \hat \beta^{l} + \frac{1}{n} M X^\top (Y - X \hat \beta^{l})
\end{align}
so that
\begin{equation}
\begin{aligned}
\hat \beta^{d,M} - \beta &= \hat \beta^{l} - \beta  + \frac{1}{n} M X^\top (Y - X \hat \beta^{l}) + \frac{1}{n} M X^\top (Y - X  \beta) - \frac{1}{n} M X^\top (Y - X \beta)\\
&= \hat \beta^{l} - \beta  + \frac{1}{n} M X^\top X (\beta - \hat \beta^{l}) + \frac{1}{n} M X^\top \varepsilon\\
&= \left(I - \frac{1}{n} M X^\top X\right) (\hat \beta^l -  \beta) + \frac{1}{n} M X^\top \varepsilon.
\end{aligned}
\end{equation}
So that for any $M$, under the null hypothesis that $a^\top \beta = a_0$, we have 
\begin{equation} \begin{aligned}\label{eq:null}
    T_n &= \sqrt{n} (a^\top \hat{\beta}^{d, M} - a_0) \\
    &= \sqrt{n} a^\top(\hat{\beta}^{d, M} - \beta ) \\
    &=  \sqrt{n} a^\top (I - MS) (\hat \beta^{l} - \beta) + \frac{1}{\sqrt{n}} a^\top M X^\top \varepsilon.
\end{aligned}\end{equation}
Thus, the oracle randomization distribution which has access to the realization of $\varepsilon$ would be
\begin{equation}
t(G\varepsilon) = \sqrt{n}a^\top\left[    \left(I - \frac{1}{n} M X^\top X\right) (\hat \beta^l - \beta) + \frac{1}{n} M X^\top G \varepsilon\right]
\end{equation}
where $G$ is drawn uniformly from $\mathcal{G}$. The attainable randomization distribution which we actually use is
\begin{equation}\begin{aligned}\label{eq:residRandomDist}
    t(G\hat{\varepsilon}) = \frac{1}{\sqrt{n}} a^\top MX^\top  G \hat{\varepsilon} &= \frac{1}{\sqrt{n}} a^\top MX^\top  G (\varepsilon + X(\beta - \hat \beta^{l}))\\
    &=  \frac{1}{\sqrt{n}} a^\top MX^\top  G\varepsilon + \frac{1}{\sqrt{n}} a^\top MX^\top  GX(\beta - \hat \beta^{l})).
\end{aligned}\end{equation}

For any $Q$ which is a joint distribution over $(G_1, G_2)$ where, marginally, $G_1$ and $G_2$, are uniform from $\mathcal{G}$, we have 
\begin{equation}
    d_1\left(F_{t}(X, \varepsilon),  F_{\hat t}(X, \varepsilon)\right) \leq \E_Q \left( \vert t(G_1 \varepsilon) - t(G_2 \hat{\varepsilon}) \vert \right)
\end{equation}
Setting $Q$ to the distribution where $G_1 = G_2$ are drawn uniformly from $\mathcal{G}$, and  using~\eqref{eq:null} and~\eqref{eq:residRandomDist}, we have:
 \begin{equation}\begin{aligned}
    d_1\left(F_{t}(X, \varepsilon),  F_{\hat t}(X, \varepsilon)\right)
    &\leq \E_Q \left(\left\vert \frac{ a^\top M X^\top  G\varepsilon}{\sqrt{n}} + \sqrt{n} a^\top(I - MS)(\hat \beta^{l} - \beta)  \right. \right.\\
    & \qquad \qquad \left. \left. - \frac{ a^\top M X^\top  G\varepsilon}{\sqrt{n}} - \frac{a^\top M X^\top  G X (\beta - \hat \beta^{l})}{\sqrt{n}} \right\vert_1\right)  \\
&\leq \E_Q \left(\left\vert \sqrt{n}a^\top (I - MS)(\hat \beta^{l} - \beta) + \frac{a^\top M X^\top  G X (\hat \beta^{l} - \beta) }{\sqrt{n}} \right\vert_1\right) \\
&= \E_Q \left(\left\vert \sqrt{n}a^\top (I - MS) + a^\top M X^\top  G X  /\sqrt{n} \right\vert_\infty\right) \left \vert \hat \beta^{l} - \beta \right \vert_1 \\
&= \left[\frac{1}{\vert \mathcal{G} \vert } \sum_{G \in \mathcal{G}} \left\vert \sqrt{n}a^\top (I - MS) + a^\top M X^\top  G X  /\sqrt{n} \right\vert_\infty \right] \left[ \vert \beta^l - \beta\vert_1 \right] \\
&\leq \left[ \vert \beta^l - \beta\vert_1 \right]  \times \sqrt{n} \left( \left\vert a^\top (I - MS) \right\vert_\infty + \frac{1}{\vert \mathcal{G} \vert }  \sum_{G \in \mathcal{G}} \left \vert  \frac{1}{n} a^\top MX^\top  G X  \right\vert_\infty \right) \\
 \end{aligned}\end{equation}
\end{proof}

\subsection*{Lemma 2}

\begin{lemma}\label{lem:permNuisance}
Under Conditions~\ref{con:covariates} and~\ref{con:highD} and either Condition~\ref{con:exchangeable},~ \ref{con:symmetry}, or \ref{con:ClustExchangeable}, we have
\begin{equation}
P\left( \frac{1}{\vert \mathcal{G}\vert }\sum_{G \in \mathcal{G}}\vert X^\top G X\vert_\infty \geq  8 \sqrt{ \frac{2\Gamma (\log(pn) + 2\log(p))}{n}}\right) \leq  6\vert \mathcal{G}\vert (np)^{-1}.
\end{equation}
\end{lemma}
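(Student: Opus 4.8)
The plan is to replace the average over $\mathcal{G}$ by a maximum over one matrix and one pair of indices, control that scalar quadratic-form entry by a Bernstein-type argument after a bounded dependency split, and then take a union bound. Since every summand $|X^\top G X|_\infty$ is nonnegative, the event in the statement is contained in $\bigcup_{G\in\mathcal{G}}\{|X^\top GX|_\infty\ge t\}=\bigcup_{G\in\mathcal{G}}\bigcup_{u,v\in[p]}\{|(X^\top GX)_{u,v}|\ge t\}$, where $t$ is the claimed bound and, as elsewhere in the paper, $X^\top GX$ carries the $1/n$ normalization of the Gram matrix $S=\tfrac1nX^\top X$. Hence, after a union bound over the $|\mathcal{G}|$ matrices and the $p^2$ index pairs, it suffices to prove $P(|(X^\top GX)_{u,v}|\ge t)\le 6\,n^{-1}p^{-3}$ for each fixed $(G,u,v)$, since $n^{-1}p^{-3}=\exp(-(\log(pn)+2\log p))$ is precisely the exponent sitting inside $t$, and $p^2\cdot 6 n^{-1}p^{-3}=6(np)^{-1}$.

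Fix $(G,u,v)$ and write the entry as $\tfrac1n\sum_{i=1}^n X_{i,u}(GX)_{i,v}$. The first step is to realize this as $\tfrac1n$ times a sum of $n$ \emph{mean-zero} terms, each sub-Weibull($\alpha/2$) with Orlicz norm at most a constant times $\kappa^2$ and second moment at most $\Gamma$. Under Condition~\ref{con:exchangeable} or~\ref{con:ClustExchangeable} the $i$th term is $X_{i,u}X_{g(i),v}$ with $g(i)\ne i$ by the $N_1\leftrightarrow N_2$ pairing, so it has mean $\E[X_{i,u}]\E[X_{g(i),v}]=0$ by row independence, second moment $\Sigma_{u,u}\Sigma_{v,v}$, and Orlicz norm at most $\|X_{i,u}\|_{\Psi_\alpha}\|X_{g(i),v}\|_{\Psi_\alpha}\le\kappa^2$ by Condition~\ref{con:covariates}. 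Under Condition~\ref{con:symmetry} the $i$th term is $s_iX_{i,u}X_{i,v}$, and since the signs are balanced, $\sum_i s_iX_{i,u}X_{i,v}=\sum_i s_i(X_{i,u}X_{i,v}-\Sigma_{u,v})$ is again a sum of $n$ independent mean-zero terms, now with second moment $\operatorname{var}(X_{i,u}X_{i,v})\le\Gamma$. The bound $\max(\Sigma_{u,u}\Sigma_{v,v},\operatorname{var}(X_{i,u}X_{i,v}))\le\Gamma$ follows from Cauchy--Schwarz and the definition of $\Gamma$ (for instance $\Sigma_{u,u}^2\le\E X_{i,u}^4\le\Gamma$).

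The second step handles the weak dependence among the summands. In the permutation cases the $i$th summand has row-support $\{i,g(i)\}$, so the induced dependency graph on $[n]$ has maximum degree two and hence is three-colorable; the summands therefore partition into three groups within each of which they are mutually independent (under Condition~\ref{con:symmetry} all $n$ are already independent). Now apply a Bernstein-type tail bound for sums of independent sub-Weibull($\alpha/2$) random variables --- the sub-Weibull analogue of Bernstein's inequality, the same tool behind the forms of $\Xi_{n,h}$ and $\lambda_1$ in Condition~\ref{con:lassSubWei} --- to each of the (at most) three groups, each having variance proxy at most $\Gamma/n$ and Orlicz radius a constant times $\kappa^2$, at deviation level $t/3$: with probability at least $1-2\exp(-(\log(pn)+2\log p))$ per group one gets a deviation bounded by a sub-Gaussian term of order $\sqrt{\Gamma(\log(pn)+2\log p)/n}$ plus a sub-Weibull remainder of order $\kappa^2(\log 2n)^{2/\alpha}(\log(pn)+2\log p)^{2/\alpha}/n$. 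Adding three groups and two tails gives the probability $6\,n^{-1}p^{-3}$, and adding the per-group deviations reconstructs $|(X^\top GX)_{u,v}|$.

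\textbf{The main obstacle} is the concluding constant-and-rate check. With $\log(pn)+2\log p\le 3\log(pn)$ and $\kappa^\star\ge\kappa^2$, the lower bounds on $n$ in Condition~\ref{con:highD} --- in particular the first, with its factors $(\kappa^\star)^2$, $(\log 2n)^{4/\alpha}$ and $(3\log(pn))^{4/\alpha-1}$, and with $C_\alpha$ identified with the Bernstein constant --- are designed so that the sub-Weibull remainder above is dominated by the sub-Gaussian term in the relevant tail regime; carrying this through carefully, uniformly over the three ensembles and over all $(G,u,v)$, is what collapses the Bernstein bound to the clean stated form $8\sqrt{2\Gamma(\log(pn)+2\log p)/n}$ with the displayed constant. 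The remaining ingredients --- the moment/Orlicz bookkeeping of the first step and the elementary three-coloring of the degree-two dependency graph --- are routine.
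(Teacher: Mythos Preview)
Your approach is sound and lands on the same conclusion, but it differs from the paper's proof in three places worth noting. First, the paper does not take a scalar union over $(u,v)$; it packages the $p^2$ coordinates into the vector $\gamma_i=\mathrm{vec}(X_{i,:}^\top X_{g(i),:})$ and invokes the sub-Weibull maximal inequality of \citet[Theorem~3.4]{kuchibhotla2018moving} once, which absorbs the $2\log p$ factor internally---essentially the same arithmetic as your union bound, so this is a cosmetic difference. Second, and more substantively, in the permutation case the paper does not three-color the dependency graph: it uses the built-in bipartition $N_1\cup N_2$ of Condition~\ref{con:exchangeable}, observing that $\{\gamma_i\}_{i\in N_1}$ are mutually independent (each row appears once) and likewise for $N_2$, which yields exactly \emph{two} groups of size $n/2$; this is where the $\sqrt{2}$ in the stated bound comes from. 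Your generic three-coloring is correct but loses a constant---with three colors you would end up with a factor $\sqrt{3}$ (or worse if you insist on $t/3$ per group) rather than $\sqrt{2}$, so you would not recover the displayed $8\sqrt{2}$ without first noting that under Condition~\ref{con:exchangeable} all $g$-orbits have even length and the graph is actually two-colorable. Third, for the sign case your centering trick $\sum_i s_i X_{i,u}X_{i,v}=\sum_i s_i(X_{i,u}X_{i,v}-\Sigma_{u,v})$ is arguably cleaner than the paper's device of pairing each $i\in\{s_i=+1\}$ with some $j\in\{s_i=-1\}$ to form $n/2$ mean-zero differences $\tilde\gamma_i$; your version keeps all $n$ independent terms and would give a smaller constant there. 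In short: correct strategy, slightly different decomposition in each case, and the exact constant $8\sqrt{2}$ requires the two-group split rather than a three-coloring.
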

\begin{proof}
We bound $\vert X^\top G X\vert_\infty$ for each $G \in \mathcal{G}$, and then the final result follows from a union bound.

\paragraph{Exchangeability}
For some fixed $G \in \mathcal{G}$, letting $g(i) = \{j \, : \, G_{ij} \neq 0\}$ and $\gamma_i = \text{vec}(X_{i,:}^\top X_{g(i), :})$ where the $\text{vec}$ operator vectorizes the $p \times p$ matrix so that $\gamma_i \in \mathbb{R}^{p^2}$. Note that $ \E(\gamma_i) =  \E\left(\text{vec}(X_{i,:}^\top X_{g(i), :})\right) = 0$ since $i \neq g(i)$. Furthermore, 
\begin{equation}\begin{aligned}
    \Vert X_{i,u} X_{g(i),v} \Vert_{\psi_{\alpha/2}} &\leq \Vert X_{i,u}\Vert_{\Psi_\alpha} \Vert  X_{g(i),v} \Vert_{\Psi_\alpha} \leq \kappa^2 \leq \kappa^\star.
\end{aligned}
\end{equation}
Thus, each element of $\gamma_i$ is sub-Weibull($\alpha/2$) with Orlicz-norm bounded by $\kappa^\star$. Now,
\begin{equation}\begin{aligned}
    \left \vert \frac{1}{n}X^\top G X\right\vert_\infty &= \left \vert \frac{1}{n}\sum_i X_{i,:}^\top X_{g(i), :} \right \vert_\infty= \left \vert \frac{1}{n}\sum_i \gamma_i \right\vert_\infty,
\end{aligned}\end{equation}
but the $\gamma_i$'s are not independent of each other because $X_{i,:}$ appears both in $\gamma_i$ and $\gamma_{g^{-1}(i)}$. However, by construction, each $X_{i,:}$ only appears in one term of $\{\gamma_i\}_{i \in N_1}$ and one term in $\{\gamma_i\}_{i \in N_2}$. Thus, we can decompose the entire sum with possibly dependent terms into two separate sums of independent terms.
\begin{equation}\begin{aligned}
\left \vert \frac{1}{n}\sum_i \gamma_i \right\vert_\infty &= \left \vert \frac{1}{n}\sum_{i \in N_1} \gamma_i + \frac{1}{n}\sum_{i \in N_2} \gamma_i \right\vert_\infty\\
& \leq \max \left(\left \vert \frac{2}{n}\sum_{i \in N_1} \gamma_i\right\vert_\infty,  \left \vert \frac{2}{n}\sum_{i \in N_2} \gamma_i \right\vert_\infty\right)\\
\end{aligned}\end{equation}

We apply \citet[Theorem 3.4]{kuchibhotla2018moving} to each term with $t>0$ so that 
\begin{equation}\begin{aligned}
P&\left( \left \vert \frac{1}{n}\sum_i \gamma_i \right \vert_\infty \geq 7 \sqrt{ \frac{2\Gamma (t + 2\log(p))}{n}} + \frac{2C_\alpha \kappa^\star \log(n)^{2/\alpha}(t + 2\log(p))^{2/\alpha}}{n} \right)\\
& \leq P\left( \max \left(\left \vert \frac{2}{n}\sum_{i \in N_1} \gamma_i\right\vert_\infty,  \left \vert \frac{2}{n}\sum_{i \in N_2} \gamma_i \right\vert_\infty\right)
\geq 7 \sqrt{ \frac{2\Gamma (t + 2\log(p))}{n}} + \frac{2C_\alpha \kappa^\star \log(n)^{2/\alpha}(t + 2\log(p))^{2/\alpha}}{n}  \right)\\
&\leq 2P\left( \left \vert \frac{2}{n}\sum_{i \in N_1} \gamma_i\right\vert_\infty \geq 7 \sqrt{ \frac{2\Gamma (t + 2\log(p))}{n}} + \frac{2C_\alpha \kappa^\star \log(n)^{2/\alpha}(t + 2\log(p))^{2/\alpha}}{n}  \right)\\
&\leq 6\exp(-t).   
\end{aligned} 
\end{equation}
Note that when applying the concentration inequality to each term, the sample size is $n/2$ rather than $n$. Letting $t = \log(pn)$ and using Condition~\ref{con:highD} implies that the 
\begin{equation}
    7\sqrt{ \frac{2\Gamma (\log(pn) + 2\log(p))}{n}} + \frac{2C_\alpha \kappa^\star \log(n)^{2/\alpha}(\log(pn) + 2\log(p))^{2/\alpha}}{n} \leq 8\sqrt{ \frac{2\Gamma (\log(pn) + 2\log(p))}{n}}
\end{equation}
and using a union bound over all $G \in \mathcal{G}$ completes the proof.

\paragraph{Cluster Exchangeability}
Because $\mathcal{G}_c \subset \mathcal{G}_p$, then the proof for exchangeability directly implies that the statement holds for cluster exchangeability as well.  

\paragraph{Symmetry}
We repeat the same arguments with a slight modification due to sign-flipping. For some fixed $G \in \mathcal{G}$, let $\gamma_i = \text{vec}(G_{ii} X_{i,:} X_{i,:}^\top)$ so that $\gamma_i \in \mathbb{R}^{p^2}$. Note that $ \E(\gamma_i) =  G_{ii}\E\left(X_{i,:} X_{i}^\top\right) \neq 0$, so we instead pair together each $i \in N_1 = \{i : G_{ii} = 1\}$ with some $j \in N_2 = \{i : G_{ii} = -1\}$. Specifically, assume that $N_1$ and $N_2$ are ordered and let $N_1(i)$ and $N_2(i)$ denote the $i$th element of $N_1$ and $N_2$ respectively. We then define
\begin{equation}
    \tilde \gamma_i = \frac{1}{2}\left(\gamma_{N_1(i)} - \gamma_{N_2(i)}\right)
\end{equation}
so that $\E\left(\tilde \gamma_i \right) = 0$. Each element of $\tilde \gamma_i$ is sub-Weibull($\alpha/2$) with Orlicz-norm bounded by $\kappa^\star$. Now,
\begin{equation}\begin{aligned}
    \left \vert X^\top G X /n \right\vert_\infty &= \left \vert \frac{1}{n}\sum_i g_{ii}X_{i,:} X_{i}^\top \right \vert_\infty\\
    & = \left \vert \frac{2}{n}\sum_{i \in [n/2]} \tilde \gamma_i \right\vert_\infty.
\end{aligned}\end{equation}
Now we again apply \citet[Theorem 3.4]{kuchibhotla2018moving} to each term with $t>0$ so that 
\begin{equation}\begin{aligned}
P&\left( \left \vert \frac{2}{n}\sum_{i \in [n/2]} \tilde \gamma_i \right \vert_\infty \geq 7 \sqrt{ \frac{2\Gamma (t + 2\log(p))}{n}} + \frac{2C_\alpha \kappa^\star \log(n)^{2/\alpha}(t + 2\log(p))^{2/\alpha}}{n} \right)\\
&\leq 3\exp(-t).   
\end{aligned} 
\end{equation}

Again, letting $t = \log(pn)$ and using Condition~\ref{con:highD} implies that the 
\begin{equation}
    7\sqrt{ \frac{2\Gamma_s (\log(pn) + 2\log(p))}{n}} + \frac{2C_\alpha \kappa^\star \log(n)^{2/\alpha}(\log(pn) + 2\log(p))^{2/\alpha}}{n} \leq 8\sqrt{ \frac{2\Gamma (\log(pn) + 2\log(p))}{n}}.
\end{equation}
Taking a union bound over all $G \in \mathcal{G}$ completes the proof. Note that for sign-flips, we actually get a tighter upper bound on the probability, but we use looser leading term of $6$ from the permutation setting for simplicity.
\end{proof}

\subsection*{Lemma 3}
Suppose we select $M^\star = M_\lambda^\star$ by solving
\begin{equation}
\begin{aligned}
\label{eq:getLambda}
\lambda^\star = \arg\min_{\lambda \in [0,1) } \delta \vert a^\top (I - M_\lambda S)  \vert_\infty + \frac{\left \vert a^\top M_\lambda \right\vert_1}{\vert \mathcal{G} \vert} \sum_{G} \left\vert \frac{X^\top G X}{n}  \right\vert_\infty,
\end{aligned}
\end{equation}
where
\begin{equation}
\begin{aligned}
\label{eq:selectM}
M_\lambda &= \arg\min_M  \; \vert a^\top M \vert_1\\
&\quad \text{s.t.} \,  \left\vert a^\top (I - MS)\right \vert_\infty \leq \lambda.
\end{aligned}
\end{equation}

\begin{lemma}\label{lem:sigmaFeasible}
Under the Conditions~\ref{con:covariates} and \ref{con:highD}, we have
\begin{equation}\label{eq:sigmaFeasible}
P\left(\vert a^\top(I-\Sigma^{-1}S \vert_\infty \geq  8 \sqrt{ \frac{\Gamma (\log(pn) + 2\log(p))}{n}}\right) \leq  3(np)^{-1}.
\end{equation}
Thus, with probability at least $1 - 3(np)^{-1}$ the feasible set of \eqref{eq:selectM} is non-empty with $\lambda = 8 \sqrt{ \frac{\Gamma (\log(pn) + 2\log(p))}{n}}$ and 
\begin{equation}
    \vert a^\top M_\lambda \vert_1 \leq  \vert a^\top \Sigma^{-1} \vert_1.
\end{equation}

\end{lemma}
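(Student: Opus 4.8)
The plan is to treat this as a direct consequence of the machinery already assembled for Lemma~\ref{lem:permNuisance}. Writing $S=\tfrac1n X^\top X$, I would first express the target quantity as a normalized sum of \emph{independent}, mean-zero, sub-Weibull($\alpha/2$) random vectors in $\mathbb{R}^p$:
\[
a^\top(I-\Sigma^{-1}S)=a^\top\Sigma^{-1}(\Sigma-S)=\frac1n\sum_{i=1}^{n}Z_{i,:},\qquad Z_{i,v}:=a_v-a^\top\Sigma^{-1}X_{i,:}^\top X_{i,v}.
\]
Since $\E(X_{i,:}^\top X_{i,v})=\Sigma_{:,v}$ we have $\E(Z_{i,v})=a_v-a_v=0$, and the vectors $Z_{i,:}$ are i.i.d.\ because $Z_{i,:}$ is a function of $X_{i,:}$ alone. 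This is the crucial structural difference from Lemma~\ref{lem:permNuisance}: there is no dependence across $i$, so no split of $[n]$ into $N_1,N_2$ is needed, which is precisely why the constant inside the square root is smaller here (no extra $\sqrt2$) and why the failure probability is $3(np)^{-1}$ instead of $6\vert\mathcal G\vert(np)^{-1}$.

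Next I would check the two inputs the concentration inequality requires. For the variance proxy, $\E(Z_{i,v}^2)\le\E([a^\top\Sigma^{-1}X_{i,:}^\top X_{i,v}]^2)\le\Gamma$ by the first term in the definition of $\Gamma$ in Condition~\ref{con:covariates}. For the Orlicz norm, I would de-correlate: $a^\top\Sigma^{-1}X_{i,:}^\top=(\Sigma^{-1/2}a)^\top\tilde X_{i,:}^\top$, so the joint sub-Weibull bound $\Vert\tilde X_{i,:}^\top\Vert_{J,\Psi_\alpha}\le\kappa$ gives $\Vert(\Sigma^{-1/2}a)^\top\tilde X_{i,:}^\top\Vert_{\Psi_\alpha}\le\Vert\Sigma^{-1/2}a\Vert_2\,\kappa\le(\vert a\vert_2/\sqrt{\lambda_{\min}})\,\kappa$; multiplying by $X_{i,v}$ (sub-Weibull($\alpha$) with norm $\le\kappa$) and then centering shows each $Z_{i,v}$ is sub-Weibull($\alpha/2$) with Orlicz norm controlled, up to an absolute constant, by $\kappa^2\vert a\vert_2/\sqrt{\lambda_{\min}}\le\kappa^\star$ (using $\lambda_{\max}\ge1$, as holds under unit-variance scaling of the covariates since then $\mathrm{tr}(\Sigma)=p$).

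With these two inputs, \citet[Theorem 3.4]{kuchibhotla2018moving} applied to $\tfrac1n\sum_iZ_{i,:}$ gives, for any $t>0$, a bound of the form
\[
P\!\left(\bigl\vert a^\top(I-\Sigma^{-1}S)\bigr\vert_\infty\ge 7\sqrt{\frac{\Gamma(t+2\log p)}{n}}+\frac{2C_\alpha\kappa^\star(\log n)^{2/\alpha}(t+2\log p)^{2/\alpha}}{n}\right)\le 3e^{-t}.
\]
Taking $t=\log(pn)$ and using Condition~\ref{con:highD} — calibrated, exactly as in Lemma~\ref{lem:permNuisance}, so that the second (lower-order) term is dominated by the first — collapses the right-hand side to $8\sqrt{\Gamma(\log(pn)+2\log p)/n}$, which is~\eqref{eq:sigmaFeasible}. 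On the complement of that event, which has probability at least $1-3(np)^{-1}$, the matrix $\Sigma^{-1}$ obeys $\vert a^\top(I-\Sigma^{-1}S)\vert_\infty\le\lambda$ with $\lambda=8\sqrt{\Gamma(\log(pn)+2\log p)/n}$, so $\Sigma^{-1}$ lies in the feasible set of~\eqref{eq:selectM}; since $M_\lambda$ minimizes $\vert a^\top M\vert_1$ over that set, $\vert a^\top M_\lambda\vert_1\le\vert a^\top\Sigma^{-1}\vert_1$.

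The computation is essentially bookkeeping on top of Lemma~\ref{lem:permNuisance}. The only two points that need genuine care are (a) showing the product $a^\top\Sigma^{-1}X_{i,:}^\top X_{i,v}$ has Orlicz-$\Psi_{\alpha/2}$ norm controlled by $\kappa^\star$, via passing to the de-correlated covariates and invoking the joint sub-Weibull assumption; and (b) recognizing that the summands here are genuinely i.i.d., so the split-free form of the concentration inequality applies — this is what yields the tighter constant appearing in~\eqref{eq:sigmaFeasible}. I do not anticipate any substantive obstacle beyond these.
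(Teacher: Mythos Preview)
Your proposal is correct and follows essentially the same route as the paper: write $a^\top(I-\Sigma^{-1}S)$ as an i.i.d.\ average of mean-zero sub-Weibull($\alpha/2$) vectors, apply \citet[Theorem~3.4]{kuchibhotla2018moving} with $t=\log(pn)$, and absorb the lower-order term via Condition~\ref{con:highD}. The one small divergence is in the Orlicz bound: you use $\Vert X_{i,v}\Vert_{\Psi_\alpha}\le\kappa$ directly and then need the side assumption $\lambda_{\max}\ge1$ to reach $\kappa^\star$, whereas the paper instead writes $X_{i,v}=(\tilde X_{i,:}\Sigma^{1/2})_v$ and applies the joint sub-Weibull bound a second time, which produces the factor $\sqrt{\lambda_{\max}}$ automatically and yields $\vert a\vert_2\sqrt{\lambda_{\max}/\lambda_{\min}}\,\kappa^2\le\kappa^\star$ without any extra eigenvalue condition.
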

\begin{proof}
We show that~\eqref{eq:sigmaFeasible} holds which then trivially implies that $\Sigma^{-1}$ is in the feasible set for $\lambda = 8 \sqrt{ \frac{\Gamma (\log(pn) + 2\log(p))}{n}}$ and that $\vert a^\top M_\lambda \vert_1 < \vert a^\top \Sigma^{-1} \vert_1$ by the optimality of $M_\lambda$. 

Let $\gamma_i = a^\top(I - \Sigma^{-1}X_{i,:}^\top X_{i,:})$ such that $\gamma_i \in \mathbb{R}^p$. Note that $ \E(\gamma_i) = a^\top \E(I - \Sigma^{-1}X_{i,:}^\top X_{i,:}) = 0$. Furthermore, 
\begin{equation}\begin{aligned}
\Vert \gamma_{i,v} \Vert_{\psi_{\alpha/2}} &=
    \Vert (a^\top\Sigma^{-1}X_{i,:}^\top X_{i,:})_v \Vert_{\psi_{\alpha/2}}\\
    &= \Vert a^\top \Sigma^{-1/2} \tilde X_{i,:}^\top \left(\tilde X_{i,:} \Sigma^{1/2}\right)_v \Vert_{\psi_{\alpha/2}}\\
    &\leq \Vert a^\top \Sigma^{-1/2} \tilde X_{i,:}^\top \Vert_{\Psi_\alpha} \Vert   \left( \tilde X_{i,:} \Sigma^{1/2}\right)_v \Vert_{\Psi_\alpha} \leq \vert a \vert_2 \frac{ \sqrt{\lambda_{\max}}}{\sqrt{\lambda_{\min}}} \kappa^2 \leq \kappa^\star.
\end{aligned}
\end{equation}
Thus, each $\gamma_i$ is sub-Weibull($\alpha/2$) with Orlicz-norm bounded by $\kappa^\star$. Now, $a^\top(I-\Sigma^{-1}S) = \frac{1}{n}\sum_i \gamma_i$ so we again apply \citet[Theorem 3.4]{kuchibhotla2018moving} which implies that for any $t \geq 0$,  
\begin{equation}\label{eq:feasibleProb}
P\left( \left \vert \frac{1}{n}\sum_i \gamma_i \right \vert_\infty \geq 7 \sqrt{ \frac{\Gamma (t + 2\log(p))}{n}} + \frac{C_\alpha \kappa^\star \log(2n)^{2/\alpha}(t + 2\log(p))^{2/\alpha}}{n} \right) \leq  3\exp(-t).    
\end{equation}
Letting $t = \log(pn)$ and assuming Condition~\ref{con:highD}, we have
\begin{equation}\begin{aligned}
\frac{C_\alpha \kappa^\star \log(2n)^{2/\alpha}(\log(pn) + 2\log(p))^{2/\alpha}}{n} &= \sqrt{\frac{\Gamma(\log(pn) + 2\log(p))}{n}} \frac{C_\alpha \kappa^\star \log(2n)^{2/\alpha}(\log(pn) + 2\log(p))^{2/\alpha - 1/2}}{\sqrt{n \Gamma}}\\
&\leq \sqrt{\frac{\Gamma(\log(pn) + 2\log(p))}{n}}.
\end{aligned}
\end{equation}
Thus, the first term in the lower bound of~\eqref{eq:feasibleProb} dominates. Again, since $a^\top(I-\Sigma^{-1}\frac{1}{n}X^\top X) = \frac{1}{n}\sum_i \gamma_i$, we then have  
\begin{equation}
P\left( \left \vert a^\top\left(I-\Sigma^{-1}\frac{1}{n}X^\top X\right) \right \vert_\infty \geq 8 \sqrt{ \frac{\Gamma (\log(pn) + 2\log(p))}{n}} \right) \leq  3  (np)^{-1}.    
\end{equation}

\end{proof}

\begin{corollary}\label{cor:involvesBeta}
Assume the conditions of Lemma~\ref{lem:sigmaFeasible} and Lemma~\ref{lem:permNuisance}. Then with probability greater than $1 - 3(np)^{-1} -  6\vert \mathcal{G}\vert (np)^{-1}$ using $M^\star$ selected from \eqref{eq:selectM} and \eqref{eq:getLambda} yields
\begin{equation}\begin{aligned}
        d_1&\left(F_{t}(X, \varepsilon),  F_{\hat t}(X, \varepsilon)\right) \leq\left \vert \hat \beta^{l} - \beta \right \vert_1 \times \\
    &\quad\left[ 8\left(\delta + \left\vert a^\top \Sigma^{-1} \right \vert_1 \right) \sqrt{2\Gamma (\log(pn) + 2\log(p))} \right]. 
\end{aligned}\end{equation}
\end{corollary}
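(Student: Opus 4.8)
The plan is to combine the deterministic inequality of Lemma~\ref{lem:wasserstein} with the two high-probability bounds of Lemma~\ref{lem:permNuisance} and Lemma~\ref{lem:sigmaFeasible}, and then exploit the optimality of $\lambda^\star$ in \eqref{eq:getLambda}. First I would rewrite the bracketed factor in \eqref{eq:w1} in terms of the averaged nuisance quantity $c_n := \frac{1}{\vert\mathcal{G}\vert}\sum_{G\in\mathcal{G}}\vert X^\top G X/n\vert_\infty$. Since $\E_Q(\vert X^\top G X/\sqrt n\vert_\infty)=\sqrt n\,c_n$ and $\vert\sqrt n\,a^\top(I-M^\star S)\vert_\infty=\sqrt n\,\vert a^\top(I-M^\star S)\vert_\infty$, evaluating Lemma~\ref{lem:wasserstein} at $M=M^\star=M_{\lambda^\star}$ gives
\[
d_1\left(F_{t}(X,\varepsilon),F_{\hat t}(X,\varepsilon)\right)\le \left\vert\hat\beta^{l}-\beta\right\vert_1\cdot\sqrt n\left(\left\vert a^\top(I-M_{\lambda^\star}S)\right\vert_\infty+\left\vert a^\top M_{\lambda^\star}\right\vert_1\,c_n\right),
\]
so it suffices to control the parenthesized expression by $8(\delta+\vert a^\top\Sigma^{-1}\vert_1)\sqrt{2\Gamma(\log(pn)+2\log(p))/n}$.

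Next I would work on the event $\mathcal{E}$ on which both conclusions hold: (i) $c_n\le 8\sqrt{2\Gamma(\log(pn)+2\log(p))/n}$ from Lemma~\ref{lem:permNuisance}, and (ii) with $\lambda_0:=8\sqrt{\Gamma(\log(pn)+2\log(p))/n}$ the matrix $\Sigma^{-1}$ is feasible in \eqref{eq:selectM} (so $\vert a^\top(I-M_{\lambda_0}S)\vert_\infty\le\lambda_0$) and $\vert a^\top M_{\lambda_0}\vert_1\le\vert a^\top\Sigma^{-1}\vert_1$, from Lemma~\ref{lem:sigmaFeasible}. A union bound over the two complementary events yields $P(\mathcal{E})\ge 1-3(np)^{-1}-6\vert\mathcal{G}\vert(np)^{-1}$, which is exactly the probability stated. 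Note that Condition~\ref{con:highD} forces $\Gamma(\log(pn)+2\log(p))/n<1/64$, hence $\lambda_0<1$, so $\lambda_0$ is an admissible choice of $\lambda$ in \eqref{eq:getLambda}.

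Then I would invoke the optimality of $\lambda^\star$: since $\lambda^\star$ minimizes $\delta\vert a^\top(I-M_\lambda S)\vert_\infty+\vert a^\top M_\lambda\vert_1\,c_n$ over $\lambda\in[0,1)$, comparing with the value at $\lambda_0$ and using (ii) gives
\[
\delta\left\vert a^\top(I-M_{\lambda^\star}S)\right\vert_\infty+\left\vert a^\top M_{\lambda^\star}\right\vert_1\,c_n\;\le\;\delta\lambda_0+\left\vert a^\top\Sigma^{-1}\right\vert_1\,c_n.
\]
Taking $\delta\ge 1$ so that the left-hand side dominates the parenthesized expression from the first paragraph, and then substituting $\lambda_0\le 8\sqrt{2\Gamma(\log(pn)+2\log(p))/n}$ together with the bound on $c_n$ from (i), the parenthesis is at most $8(\delta+\vert a^\top\Sigma^{-1}\vert_1)\sqrt{2\Gamma(\log(pn)+2\log(p))/n}$; multiplying through by $\sqrt n\,\vert\hat\beta^{l}-\beta\vert_1$ cancels the $1/\sqrt n$ and produces the claimed bound.

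Beyond the constant bookkeeping and the (harmless) distinction between $\vert X^\top G X\vert_\infty$ and $\vert X^\top G X/n\vert_\infty$ in the statement versus the proof of Lemma~\ref{lem:permNuisance}, the step I expect to be the real obstacle is reconciling the $\delta$-weighted objective minimized in \eqref{eq:getLambda} with the \emph{unweighted} combination $\vert a^\top(I-MS)\vert_\infty+\vert a^\top M\vert_1 c_n$ appearing in Lemma~\ref{lem:wasserstein}. For $\delta\ge1$ the minimized objective cleanly upper-bounds the quantity of interest, as above; for $\delta<1$ one instead has to extract from the single optimality inequality separate controls on $\vert a^\top(I-M_{\lambda^\star}S)\vert_\infty$ and on $\vert a^\top M_{\lambda^\star}\vert_1\,c_n$ and absorb the resulting factor of $\max(1,\delta^{-1})$, which is where the argument needs the most care.
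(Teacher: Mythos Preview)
Your approach is essentially identical to the paper's: apply Lemma~\ref{lem:wasserstein} at $M=M^\star$, work on the intersection of the events from Lemmas~\ref{lem:permNuisance} and~\ref{lem:sigmaFeasible}, compare the minimized objective at $\lambda^\star$ with its value at $\lambda_0=8\sqrt{\Gamma(\log(pn)+2\log(p))/n}$, and substitute the bounds. The union-bound probability and the constant bookkeeping match.

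The one point you flag---the mismatch between the $\delta$-weighted objective in \eqref{eq:getLambda} and the unweighted combination in Lemma~\ref{lem:wasserstein}---is a genuine observation, and the paper's proof simply does not address it: it passes directly from the optimality inequality on the $\delta$-weighted quantity to ``Applying Lemma~\ref{lem:wasserstein} then implies\ldots''. Your resolution for $\delta\ge1$ (the weighted objective dominates the unweighted one) is exactly what is implicitly being used; the paper's experiments take $\delta=10000$, so this is the intended regime. Your caution about $\delta<1$ is warranted but goes beyond what the paper actually proves.
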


\begin{proof}
Let $b = 8 \sqrt{ \frac{\Gamma (\log(pn) + 2\log(p))}{n}}$ and suppose that $\Sigma^{-1}$ is in the feasible set for $\lambda = b$. Note, that by Condition~\ref{con:highD}, $b < 1 $. By the optimality of $\lambda^\star$, $M^\star$, and $M_b$ we have
\begin{equation}
    \begin{aligned}
         \delta \vert a^\top (I - M^\star S) \vert_\infty + \vert a^\top M^\star \vert_1 \E_Q \left(\left\vert X^\top  G X  /n \right\vert_\infty\right) &< \delta \vert a^\top (I - M_bS \vert_\infty + \vert a^\top M_b \vert_1 \E_Q \left(\left\vert X^\top  G X  /n \right\vert_\infty\right)\\
         &< \delta b + \vert a^\top \Sigma^{-1} \vert_1 \E_Q \left(\left\vert X^\top  G X  /n \right\vert_\infty\right).\\
    \end{aligned}
\end{equation}
Lemma~\ref{lem:permNuisance} and \ref{lem:sigmaFeasible} imply that with probability greater than $1 - 3(np)^{-1} -  6\vert \mathcal{G}\vert (np)^{-1}$ that $\Sigma^{-1}$ is feasible for $\lambda = b$ and that $\E_Q \left(\left\vert X^\top  G X  /n \right\vert_\infty\right) < \sqrt{2}b$. Applying Lemma~\ref{lem:wasserstein} then implies that
\begin{equation}\begin{aligned}
        d_1&\left(F_{t}(X, \varepsilon),  F_{\hat t}(X, \varepsilon)\right) \leq\left \vert \hat \beta^{l} - \beta \right \vert_1 \times \\
    &\quad\left[ 8\left(\delta + \left\vert a^\top \Sigma^{-1} \right \vert_1 \right) \sqrt{2\Gamma (\log(pn) + 2\log(p))} \right]. 
\end{aligned}\end{equation}
\end{proof}

\begin{theorem}[Sub-Weibull Errors and Covariates]\label{thm:lassSubWei}
Suppose Conditions~\ref{con:covariates}, \ref{con:highD}, and~\ref{con:lassSubWei} hold. Under either Condition~\ref{con:exchangeable} or \ref{con:symmetry}, with probability no less than $1 - \frac{6\vert \mathcal{G} \vert + 3}{np} + \frac{3}{np} + \frac{3}{n}$, 
\begin{equation}
\begin{aligned}
    d_1\left(F_{t}(X, \varepsilon),  F_{\hat t}(X, \varepsilon)\right) &\leq \frac{10752 s \sqrt{3\Gamma} \sigma}{\lambda_{\min}}(\delta + \vert a^T \Sigma^{-1}\vert_1)\frac{\log(np)}{\sqrt{n}} 
    \end{aligned}
\end{equation}
\end{theorem}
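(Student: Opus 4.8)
The plan is to combine Corollary~\ref{cor:involvesBeta} with a high-probability $\ell_1$-estimation bound for the Lasso under sub-Weibull errors; the only genuinely new ingredient is the Lasso error bound supplied by Condition~\ref{con:lassSubWei}, after which the rest is bookkeeping. First I would invoke Corollary~\ref{cor:involvesBeta}: on the intersection of the events of Lemmas~\ref{lem:permNuisance} and~\ref{lem:sigmaFeasible} (complement of probability at most $(6\vert\mathcal G\vert+3)/(np)$),
\begin{equation}
d_1\left(F_{t}(X,\varepsilon),F_{\hat t}(X,\varepsilon)\right) \le \left\vert\hat\beta^{l}-\beta\right\vert_1 \cdot 8\left(\delta + \left\vert a^\top\Sigma^{-1}\right\vert_1\right)\sqrt{2\Gamma\left(\log(pn)+2\log p\right)}.
\end{equation}
Since $\log p \le \log(np)$ we have $\log(pn)+2\log p \le 3\log(np)$, so the last factor is at most $\sqrt{2}\,\sqrt{3\Gamma}\,\sqrt{\log(np)}$, which already exhibits the $\sqrt{3\Gamma}$ and $\log(np)$ terms of the claim.

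The second step is to show that, under Condition~\ref{con:lassSubWei}, with high probability $\vert\hat\beta^{l}-\beta\vert_1 \le 24\,s\,\lambda_1/\lambda_{\min}$. This is the standard deterministic consequence of two events. (i) The penalty dominates the empirical score: the coordinates $X_{i,v}\varepsilon_i$ are products of two sub-Weibull($\alpha$) variables, hence sub-Weibull($\alpha/2$) with Orlicz norm at most $\kappa^2$, so applying \citet[Theorem 3.4]{kuchibhotla2018moving} verbatim as in the proof of Lemma~\ref{lem:permNuisance} shows that $\vert\tfrac1n X^\top\varepsilon\vert_\infty$ is bounded by the two-term expression defining $\lambda_1$ with probability at least $1 - 3/(np)$, while the extra sample-size requirement $n > C_{\alpha/2}^2\kappa^4(\log pn)^{8/\alpha-1}/\sigma^2$ forces the second (polynomial-tail) term of $\lambda_1$ to be at most the first, so in fact $\lambda_1 \le 28\sqrt{2}\,\sigma\sqrt{\log(np)/n}$. (ii) The design obeys a restricted-eigenvalue/compatibility condition on the relevant cone with constant a fixed fraction of $\lambda_{\min}$: the lower bound on $\lambda_{\min}$ in Condition~\ref{con:lassSubWei} is exactly what is needed so that a uniform deviation bound for the sparse quadratic form $\sup_{\theta\in\Theta_h}\vert\theta^\top(\tfrac1n X^\top X - \Sigma)\theta\vert$ — obtained from the same sub-Weibull concentration after a covering/peeling argument over sparsity levels $h$, which is what produces $\Upsilon_{n,h}$, $\Xi_{n,h}$ and the $\min_h$ — is small relative to $\lambda_{\min}$, on an event of probability at least $1 - 3/n$. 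On the intersection of (i) and (ii) the usual Lasso analysis gives the stated $\ell_1$ bound.

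Combining the two steps, on the intersection of all the above events I would substitute $\vert\hat\beta^{l}-\beta\vert_1 \le 24\,s\,\lambda_1/\lambda_{\min} \le 24\cdot 28\sqrt{2}\;s\,\sigma\,\lambda_{\min}^{-1}\sqrt{\log(np)/n}$ into the displayed bound of Corollary~\ref{cor:involvesBeta}, obtaining
\begin{equation}
d_1\left(F_{t}(X,\varepsilon),F_{\hat t}(X,\varepsilon)\right) \le \frac{24\cdot 28\sqrt{2}\cdot 8\sqrt{2}\; s\,\sqrt{3\Gamma}\,\sigma}{\lambda_{\min}}\left(\delta + \left\vert a^\top\Sigma^{-1}\right\vert_1\right)\frac{\log(np)}{\sqrt{n}},
\end{equation}
and $24\cdot 28\sqrt{2}\cdot 8\sqrt{2} = 24\cdot 28\cdot 8\cdot 2 = 10752$. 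A union bound over the failure events of Corollary~\ref{cor:involvesBeta} (probability $\le (6\vert\mathcal G\vert+3)/(np)$), the score event (probability $\le 3/(np)$) and the restricted-eigenvalue event (probability $\le 3/n$) gives the probability stated in the theorem.

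The main obstacle is step (ii): verifying that the lower bound on $\lambda_{\min}$ in Condition~\ref{con:lassSubWei} really does yield a restricted-eigenvalue constant proportional to $\lambda_{\min}$ with failure probability only $3/n$. This needs a uniform-over-sparse-cones deviation bound for $\theta\mapsto\theta^\top(\tfrac1n X^\top X - \Sigma)\theta$ for sub-Weibull rows, together with the reduction of the $\ell_1$-cone to bounded sparsity (top-$h$ truncation) that generates the $\min_h$ and the $32 s\,\Xi_{n,h}/h$ term; carrying the explicit constants through (so that the factor $54$ suffices) is the heaviest part of the bookkeeping. By contrast, the choice of $\lambda_1$, the simplification $\log(pn)+2\log p\le 3\log(np)$, and the final arithmetic are routine once Lemma~\ref{lem:permNuisance}'s concentration argument is reused.
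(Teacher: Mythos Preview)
Your proposal is correct and follows essentially the same route as the paper: invoke Corollary~\ref{cor:involvesBeta}, plug in a high-probability $\ell_1$ Lasso bound of the form $\vert\hat\beta^{l}-\beta\vert_1 \le 672\sqrt{2}\,s\,\sigma\,\lambda_{\min}^{-1}\sqrt{\log(np)/n}$, simplify $\log(pn)+2\log p\le 3\log(np)$, and multiply out to $10752\sqrt{3\Gamma}$. The only cosmetic difference is that the paper outsources the Lasso step to \citet[Theorem~4.5]{kuchibhotla2018moving} (stated as an $\ell_2$ bound, then converted to $\ell_1$ via the cone inclusion $\hat\beta^{l}-\beta\in C(\mathcal S,3)$), whereas you reconstruct that theorem's two ingredients---the score event $\lambda_1\ge 2\vert X^\top\varepsilon/n\vert_\infty$ and the restricted-eigenvalue event---yourself; the failure probabilities $3/(np)$ and $3/n$ and the final constant coincide.
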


\begin{proof}
We combine Corollary~\ref{cor:involvesBeta} with results from \citet{kuchibhotla2018moving}. Specifically, to bound $\vert \beta^{l} - \beta \vert_1$, we apply \citet[Theorem 4.5]{kuchibhotla2018moving}, which states that with probability at least $1 - 3(np)^{-1} - 3n^{-1}$, letting the Lasso penalty parameter be:
\begin{equation}
    \lambda_1 = 14 \sqrt{2}\sigma \sqrt{\frac{\log(np)}{n}} + \frac{C_\gamma \kappa^2(\log(2n))^{2/\alpha}(2\log(np))^{2/\alpha}}{n}
\end{equation}
yields $\hat \beta^{l}$ such that
\begin{equation}
    \vert \hat \beta^{l} - \beta  \vert_2 \leq \frac{84 \sqrt{2}}{\lambda_{\min}} \left[\sigma \sqrt{\frac{s \log(np)}{n}} + \frac{C_{\alpha/2} \kappa^2 \sqrt{s}(\log(np))^{4/\alpha}}{n} \right].
\end{equation}
We require the corresponding bound on $\vert \hat \beta^{l} - \beta \vert_1$. As part of proving Theorem 4.5 (Appendix E.4), Kuchibolta and Chakroborty show that with the probability stated above,
\begin{equation}\label{eq:penaltyLargeEnough}
    \lambda_n \geq 2 \left\vert \frac{X^\top\varepsilon}{n} \right\vert_\infty.
\end{equation}
This allows us to apply \citet[Lemma 11.1]{hastie2015statistical} which states that when \eqref{eq:penaltyLargeEnough} holds, the estimation error belongs to the cone set:
\begin{equation}
\hat \nu = \hat \beta^{l} - \beta \in C(\mathcal{S},3) =  \left\{\nu \,:\, \vert \nu_{\mathcal{S}^C} \vert_1 \leq 3\vert \nu_{\mathcal{S}} \vert_1\right\},
\end{equation}
where $\mathcal{S} = \{j \, : \, \beta_j \neq 0\}$ and $\mathcal{S}^C$ is its complement. Thus, we have 
\begin{equation}\begin{aligned}
    \vert \hat \beta - \beta  \vert_1 &\leq \vert (\hat \beta - \beta )_{\mathcal{S}} \vert_1 + \vert (\hat \beta^{l})_{\mathcal{S}^C} \vert_1\\
    &\leq 4\vert (\hat \beta - \beta )_{\mathcal{S}} \vert_1 \leq 4\sqrt{s}\vert (\hat \beta - \beta )_{S} \vert_2 \leq 4\sqrt{s}\vert \hat \beta - \beta  \vert_2.
\end{aligned}\end{equation}
Thus, we have under the same conditions and stated probability that
\begin{equation}\begin{aligned}
\label{eq:lassoBound}
    \vert \hat \beta - \beta  \vert_1 &\leq \frac{336 s\sqrt{2}}{\lambda_{\min}} \left[\sigma  \sqrt{\frac{\log(np)}{n}} + \frac{C_{\alpha/2} \kappa^2 (\log(np))^{4/\alpha}}{n} \right] \\
    &\leq \frac{672 s\sqrt{2}}{\lambda_{\min}} \left[\sigma \sqrt{\frac{\log(np)}{n}}\right].
    \end{aligned}
\end{equation}
Combining with Corollary~\ref{cor:involvesBeta}, we then have with probability no less than $1 - \frac{6\vert \mathcal{G} \vert + 3}{np} - \frac{3}{np} - \frac{3}{n}$, 
\begin{equation}
\begin{aligned}
    d_1\left(F_{T}(X, \varepsilon),  F_{t}(X, \varepsilon)\right) &\leq \sqrt{n}\frac{672 s \sqrt{2}}{\lambda_{\min}} \left[\sigma \sqrt{\frac{\log(np)}{n}}\right]\times  8(\delta + \vert a^T \Sigma^{-1}\vert_1) \sqrt{\frac{6\Gamma \log(np)}{n}}
    \\
    &=\frac{10752 s \sqrt{3\Gamma} \sigma}{\lambda_{\min}}(\delta + \vert a^T \Sigma^{-1}\vert_1)\frac{\log(np)}{\sqrt{n}} 
    \end{aligned}
\end{equation}
\end{proof}

\clearpage
\section{Assumptions of \citet{belloni2016inference} for Cluster Dependence}

The proof of Theorem 2 follow directly from Corollary~\ref{cor:involvesBeta} and Theorem 1 of~\citet{belloni2016inference}. 

 Using our notation, we restate the relevant portion of Theorem 1 of \citet{belloni2016inference} as well as the conditions required. Recall that we assume that clusters are indexed by $i = 1, \ldots, n_c$ and observations within each cluster are indexed by $j = 1, \ldots, J$ so that $n = n_c J$. To accommodate this notation, we let $X_{ij} \in \mathbb{R}^p$ be the covariates of the $j$th observation from the $i$th cluster. Furthermore, let $X_{ijv} \in \mathbb{R}$ denote the $v$th covariate of the $j$th observation from the $i$th cluster. Similarly, let $Y_{ij}$ denote the $j$th outcome from the $i$th cluster.
 
\citet{belloni2016inference} begin with a more general additive fixed effects model where:
\begin{equation}
    Y_{ij} = f(w_{ij}) + e_i + \varepsilon_{ij} \qquad \text{ where } \qquad \E(\varepsilon_{ij} \mid w_{i1}, \ldots w_{iJ} ) = 0.
\end{equation}
However, the Approximately Sparse Model condition stated below requires that $f$ is well approximated by a linear model so that $f(w_{ij}) = X_{ij}^\top \beta + r(w_{ij})$ for some sparse $\beta$ and $r(w_{ij})$ term which vanishes as $p$ increases. We require the stronger assumption of a linear model; i.e., $r(w_{ij}) = 0$.

\citet{belloni2016inference} define the  ``demeaned observations''
 \begin{equation}\begin{aligned}
     \Ddot{X}_{ij} &= X_{ij} - \frac{1}{J} \sum_{j = 1}^J X_{ij}, \qquad
    \Ddot{Y}_{ij} &= Y_{ij} - \frac{1}{J} \sum_{j = 1}^J Y_{ij}, \qquad \text{ and } \qquad 
    \Ddot{\varepsilon}_{ij} &= \varepsilon_{ij} - \frac{1}{J} \sum_{j = 1}^J \varepsilon_{ij}.
 \end{aligned}\end{equation}
The Cluster-Lasso estimate is then defined as
 \begin{equation}\label{eq:belloniLasso}
     \hat \beta \in \arg\min_{b} \frac{1}{n_c J} \sum_{i = 1}^{n_c} \sum_{j = 1}^J (\Ddot{Y}_{ij} - \Ddot{X}_{ij}^\top b)^2 + \frac{\lambda_1}{n_c J} \sum_{v = 1}^p \hat \phi_v \vert b_v\vert,
 \end{equation}
where
\begin{equation}\label{eq:belloniPenalty}
    \lambda_1 = 2c \sqrt{n_c J} \Phi^{-1}(1 - \gamma / 2p)
\end{equation}
with $c >1$ being is a constant slack parameter, $\gamma = o(1)$, and $\Phi$ is the CDF of the standard Gaussian. Furthermore, $\hat \phi_v^2$ are estimates of
\begin{equation}
    \phi_v^2 = \frac{1}{n_c J} \sum_{i = 1}^{n_c} \left(\sum_{j =1}^J \Ddot{X}_{ijv} \Ddot{\varepsilon}_{ij} \right)^2.
\end{equation}
Since we do not have access to $\Ddot{\varepsilon}_{ij}$, we instead use 
\begin{equation}
    \hat \phi_v^2 = \frac{1}{n_c J} \sum_{i = 1}^{n_c} \left(\sum_{j =1}^J \Ddot{X}_{ijv} \hat{\varepsilon}_{ij} \right)^2
\end{equation}
where $\hat{\varepsilon}_{ij}$ are preliminary estimates of $\Ddot{\varepsilon}_{ij}$. \citet{belloni2016inference} give a procedure for calculating $\hat \phi_v$, but ultimately only require with probability $1 - o(1)$ for all $v \in [p]$ that
\begin{equation}\label{eq:belloniLoadings}
    l \phi_v \leq \hat \phi_v \leq u \phi_v
\end{equation}
for some $l \rightarrow 1$ and $u \leq C < \infty$.

The Sparse Eigenvalues condition concerns the empirical Gram matrix of the re-centered data
\begin{equation}
    \Ddot{M} = \{M_{jk}\}_{u,v \in [p]^2}, \qquad M_{u,v} = \frac{1}{n_c J}\sum_{i = 1}^{n_c} \sum_{j = 1}^J \Ddot{X}_{ijv} \Ddot{X}_{iju}, 
\end{equation}
and requires its minimum and maximum m-sparse eigenvalues to be bounded. Specifically, they require conditions on the quantities
\begin{equation}
\varphi_{\min}(m)(\Ddot{M}) = \min_{\delta \in \Delta(m)} \delta^\top  \Ddot{M} \delta \qquad \text{ and } \qquad \varphi_{\max}(m)(\Ddot{M}) = \max_{\delta \in \Delta(m)} \delta^\top  \Ddot{M} \delta
\end{equation}
where $\Delta(M) = \{\delta \in \mathbb{R}^p \, : \, \vert \delta \vert_0 \leq m, \vert \delta \vert_2 = 1\}$.

Finally, the regularity conditions require two additional quantities. The first, $\bar \omega_v$, involves the third moment of the $v$th covariate and error:
\begin{equation}
    \bar \omega_v = \left(\E \left[ \left \vert \frac{1}{\sqrt{J}} \sum_{j =1}^J \Ddot{X}_{ijv} \Ddot{\varepsilon}_{ij}  \right \vert^3  \right] \right)^{1/3}.
\end{equation}
They additionally require a measure of dependence within cluster, $\imath_J$:
\begin{equation}\footnotesize
    \imath_J = J \min_{1 \leq v \leq p} \frac{\E\left(\frac{1}{J} \sum_{j = 1}^J \ddot{X}^2_{ijv} \ddot{\varepsilon}^2_{ij}\right)}{\E\left(\frac{1}{J} \left[\sum_{j = 1}^J \ddot{X}^2_{ijv} \ddot{\varepsilon}^2_{ij}\right]^2\right)}.
\end{equation}
With no intra-cluster dependence, $\imath_J = J$, but in the worst case, $\imath_J = 1$.

\subsection*{Theorem 1 of \citet{belloni2016inference}}
Let $\{P_{n,J}\}$ be a sequence of probability laws, such that $\{(Y_{ij}, w_{ij}, X_{ij})\}_{j = 1}^J \sim P_{n,J}$, i.i.d. across i for which $n_c, J \rightarrow \infty$ jointly or $n_c \rightarrow \infty, J$ fixed. 
Suppose that Conditions ASM, SE, and R hold for probability measure $P = P_{P_{n,J}}$ induced by $P_{n,J}$. Consider a feasible Cluster-Lasso estimator with penalty level set by \eqref{eq:belloniPenalty} and penalty loadings obeying \eqref{eq:belloniLoadings}. Then
\begin{equation}
\vert \hat \beta - \beta \vert_1 = O_p\left(\sqrt{\frac{s^2 \log(p \vee n)}{n_c \imath_J}} \right).
\end{equation}

\paragraph{Condition ASM (Approximately Sparse Model)} The function $f(w_{ij})$ is well approximated by a linear combination of a dictionary of transformations, $X_{ij} = X_{n_c J}(w_{ij})$ where $X_{ij}$ is a $p \times 1$ vector with $p \gg n$ allowed, and $X_{n_c J}$ is a measureable map. That is, for each $i$ and $j$,
\begin{equation}
    f(w_{ij}) = X_{ij}^\top \beta + r(w_{ij}),
\end{equation}
where the coefficient $\beta$ and the remainder term $r(w_{ij})$ satisfy
\begin{equation}
    \vert \beta \vert_0 \leq s = o(n_c \imath_J ) \qquad \text{ and } \qquad \left[\frac{1}{n_c J} \sum_{i = 1}^{n_c} \sum_{j = 1}^J r(w_{ij})^2 \right]^{1/2} \leq A_s = O_p(\sqrt{s / n_c \imath_J}).
\end{equation}

\paragraph{Condition SE (Sparse Eigenvalues).} For any $C > 0$, there exists constants $0 < \kappa' < \kappa'' < \infty$, which do not depend on $n$ but may depend on $C$, such that with probability approaching one, as $n \rightarrow \infty$ $\kappa' \leq \varphi_{\min}(Cs)(\Ddot{M}) \leq \varphi_{\max}(Cs)(\Ddot{M}) \leq \kappa''$.

\paragraph{Condition R (Regularity Conditions).} Assume that for data $\{y_{ij}, w_{ij}\}$ that are i.i.d. across $i$, the following conditions hold with $X_{ij}$ defined as in Condition ASM with probability $1 - o(1)$:
\begin{enumerate}
    \item $\frac{1}{J} \sum_{j = 1}^J \E(\Ddot{X}_{ijv}^2 \Ddot{\varepsilon}_{ij}^2) + \left[\frac{1}{J} \sum_{j = 1}^J \E(\Ddot{X}_{ijv}^2 \Ddot{\varepsilon}_{ij}^2)\right]^{-1} = O(1)$
    \item $1 \leq \max_{v \in [p]} \phi_v / \min_{v \in [p]} \phi_v = O(1)$
    \item $1 \leq \max_{v \in [p]} \bar \omega_v / \sqrt{\E(\phi_v^2)} = O(1)$
    \item $\log^3(p) = o(n_cJ)$ and $s\log(p \vee n_cJ) = o(n_c \imath_J)$
    \item $\max_{v \in [p]} \vert \phi_v - \sqrt{\E(\phi_v^2)} \vert / \sqrt\E(\phi_v^2) = o(1)$.
\end{enumerate}

\clearpage

\section{Alternative Procedure for Selecting $M$}
Recall that
\begin{equation}\begin{aligned}\label{eq:clime}
M_{\lambda} &= \arg\min_M  \; \vert a^\top M \vert_1\\
&\quad \text{s.t.} \,  \left\vert a^\top (I - MS)\right \vert_\infty \leq \lambda.
\end{aligned}
\end{equation}

Define
\begin{equation}
\begin{aligned}
\label{eq:bounds}
d(\lambda) &=  \vert a^\top (I - M_\lambda S)  \vert_\infty + \frac{1}{\vert \mathcal{G} \vert} \sum_{G} \left\vert \frac{a^\top M_\lambda X^\top G X}{n}  \right\vert_\infty\\
d'(\lambda) &=  \vert a^\top (I - M_\lambda S) \vert_\infty + \vert a^\top M_\lambda \vert_1 \frac{1}{\vert \mathcal{G} \vert} \sum_{G} \left\vert \frac{X^\top G X}{n}  \right\vert_\infty\\
\end{aligned}
\end{equation}
such that $d(\lambda) \leq d'(\lambda)$. 
When $\Gamma$ (or some reasonable upper bound) is known, select $\delta_1$ so that $\delta_1 \geq 8 \sqrt{\Gamma}$ and $1 > \delta_1 \sqrt{ (\log(pn) + 2\log(p))/n}$. Condition~\ref{con:highD} ensures that such a $\delta_1$ exists. Then, an alternative way to select $M^\star$ is
\begin{equation}\begin{aligned}\label{eq:procI}
   \lambda^\star &= \min_{\lambda \in  [0, 1)}  \vert a^\top (I - M_\lambda S)\vert_\infty\\
    &\quad \text{s.t.} \, \text{\eqref{eq:clime} has non-empty feasible set for } \lambda \text{ and } d(\lambda) \leq d'(\delta_1 \sqrt{ (\log(pn) + 2\log(p))/n}). 
\end{aligned}
\end{equation}
Similar to the procedure described in the main text, \eqref{eq:procI} selects a $\lambda^\star$ which minimizes $\vert a^\top (I - M_\lambda S)  \vert_\infty$. 

This procedure, which we refer to as \texttt{RR Tuning Free}, may be preferable to the one (\texttt{RR}) presented in the main manuscript since it involves selecting a tuning parameter $\delta_1$ which is tied to a population quantity, $\Gamma$, rather than picking $\delta$ which may be hard to interpret. However, when $\delta_1$ is not large enough to satisfy $\delta_1 \geq 8 \sqrt{\Gamma}$, the procedure may not be asymptotically valid. This is in contrast to the original procedure which is asymptotically valid for any $\delta$, though the empirical performance may be affected by selecting $\delta$ too small.  

We show that this alternative selection procedure is also valid by slightly modifying the proof of Corollary~\ref{cor:involvesBeta}.

\begin{corollary}\label{cor:involvesBetaNew}
Assume the conditions of Lemma~\ref{lem:sigmaFeasible} and Lemma~\ref{lem:permNuisance}. Suppose in \eqref{eq:procI} that $\delta_1 \geq 8\sqrt{\Gamma}$ and $1>\delta_1 \sqrt{ (\log(pn) + 2\log(p))/n}$. Let $\delta_2 = \delta_1 /(8 \sqrt{\Gamma})$. Then with probability greater than $1 - 3(np)^{-1} -  6\vert \mathcal{G}\vert (np)^{-1}$ using $M^\star$ selected from \eqref{eq:procI} yields
\begin{equation}\begin{aligned}
        d_1&\left(F_{t}(X, \varepsilon),  F_{\hat t}(X, \varepsilon)\right) \leq\left \vert \hat \beta^{l} - \beta \right \vert_1 \times \\
    &\quad\left[ 8\left(\delta_2 + \left\vert a^\top \Sigma^{-1} \right \vert_1 \right) \sqrt{2\Gamma (\log(pn) + 2\log(p))} \right]. 
\end{aligned}\end{equation}
\end{corollary}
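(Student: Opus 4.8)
The plan is to follow the proof of Corollary~\ref{cor:involvesBeta} almost verbatim, replacing the objective-value comparison coming from \eqref{eq:getLambda} with the feasibility constraint that is built directly into \eqref{eq:procI}. Write $b = 8\sqrt{\Gamma(\log(pn)+2\log(p))/n}$ and $\lambda_1 = \delta_1\sqrt{(\log(pn)+2\log(p))/n}$, so that $\lambda_1 = \delta_2 b$ with $\delta_2 = \delta_1/(8\sqrt{\Gamma}) \ge 1$; hence $b \le \lambda_1 < 1$ under the stated hypotheses on $\delta_1$. I would work on the intersection of the events of Lemma~\ref{lem:sigmaFeasible} and Lemma~\ref{lem:permNuisance}, which has probability at least $1 - 3(np)^{-1} - 6|\mathcal{G}|(np)^{-1}$; on this event $|a^\top(I-\Sigma^{-1}S)|_\infty \le b$ and $\frac{1}{|\mathcal{G}|}\sum_{G}|X^\top G X/n|_\infty \le \sqrt{2}\,b$.

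First I would check that \eqref{eq:procI} is well posed on this event. Since $b \le \lambda_1$, $\Sigma^{-1}$ is feasible for \eqref{eq:clime} at level $\lambda = \lambda_1$, so that feasible set is non-empty and, by optimality of $M_{\lambda_1}$, $|a^\top M_{\lambda_1}|_1 \le |a^\top\Sigma^{-1}|_1$. Because $d(\lambda)\le d'(\lambda)$ always, $\lambda = \lambda_1$ trivially satisfies $d(\lambda_1)\le d'(\lambda_1)$, and $\lambda_1\in[0,1)$, so $\lambda=\lambda_1$ is feasible for the program \eqref{eq:procI}; therefore $\lambda^\star$ exists, and being itself feasible for \eqref{eq:procI} it obeys $d(\lambda^\star) \le d'(\lambda_1)$. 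This is the crux of the modification, and it is the only place where $\delta_1 \ge 8\sqrt{\Gamma}$ (equivalently $\lambda_1 \ge b$) is used: for smaller $\delta_1$ one can no longer guarantee that $\Sigma^{-1}$ is feasible at level $\lambda_1$, and the argument breaks, which is exactly the caveat flagged in the text.

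Next I would bound $d'(\lambda_1)$ from its definition in \eqref{eq:bounds}: combining $|a^\top(I-M_{\lambda_1}S)|_\infty \le \lambda_1 = \delta_2 b$ (the constraint in \eqref{eq:clime}), $|a^\top M_{\lambda_1}|_1 \le |a^\top\Sigma^{-1}|_1$, and $\frac{1}{|\mathcal{G}|}\sum_G|X^\top GX/n|_\infty \le \sqrt{2}\,b$ from Lemma~\ref{lem:permNuisance} gives $d'(\lambda_1) \le \delta_2 b + \sqrt{2}\,|a^\top\Sigma^{-1}|_1\,b \le \sqrt{2}\,(\delta_2 + |a^\top\Sigma^{-1}|_1)\,b$, and hence $d(\lambda^\star) \le \sqrt{2}\,(\delta_2 + |a^\top\Sigma^{-1}|_1)\,b$.

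Finally I would apply Lemma~\ref{lem:wasserstein} with $M = M^\star = M_{\lambda^\star}$, but using the intermediate bound established in its proof, namely $d_1\left(F_{t}(X,\varepsilon), F_{\hat t}(X,\varepsilon)\right) \le |\hat\beta^{l}-\beta|_1 \cdot \sqrt{n}\,d(\lambda^\star)$, rather than the looser stated form that factors out $|a^\top M^\star|_1$ — this is the right form here because \eqref{eq:procI} controls $d(\lambda^\star)$ but gives no separate handle on $|a^\top M^\star|_1$. Substituting the bound on $d(\lambda^\star)$ and using $\sqrt{n}\,b = 8\sqrt{\Gamma(\log(pn)+2\log(p))}$ yields $d_1 \le |\hat\beta^{l}-\beta|_1 \cdot 8(\delta_2 + |a^\top\Sigma^{-1}|_1)\sqrt{2\Gamma(\log(pn)+2\log(p))}$, which is the claimed inequality. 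I do not anticipate a genuine obstacle: the only delicate points are verifying feasibility of $\Sigma^{-1}$ at level $\lambda_1$ (hence the hypothesis $\delta_1 \ge 8\sqrt{\Gamma}$) and remembering to invoke the $d$-form rather than the $d'$-form of the Wasserstein bound.
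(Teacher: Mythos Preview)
Your proposal is correct and matches the paper's proof essentially step for step: the paper likewise sets the comparison level to $\delta_1\sqrt{(\log(pn)+2\log(p))/n}$ (your $\lambda_1$, which it calls $b$), uses feasibility of $\Sigma^{-1}$ there together with optimality of $M_{\lambda_1}$ to bound $d'(\lambda_1)$, invokes the constraint in \eqref{eq:procI} to obtain $d(\lambda^\star)\le d'(\lambda_1)$, and then applies the intermediate $d$-form bound from the proof of Lemma~\ref{lem:wasserstein}. Your write-up is in fact a bit more explicit than the paper's about well-posedness of \eqref{eq:procI} and about the need for the $d$-form rather than the $d'$-form of the Wasserstein bound.
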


\begin{proof}
Let $b = \delta_1 \sqrt{ (\log(pn) + 2\log(p))/n}$. Suppose that $\Sigma^{-1}$ is in the feasible set for $\lambda = b$. By the optimality of $\lambda^\star$, $M^\star$, and $M_b$ we have
\begin{equation}
    \begin{aligned}
         \vert a^\top (I - M^\star S) \vert_\infty +  \E_Q \left(\left\vert a^\top M^\star X^\top  G X  /n \right\vert_\infty\right) &< \vert a^\top (I - M_b S \vert_\infty + \vert a^\top M_b \vert_1 \E_Q \left(\left\vert X^\top  G X  /n \right\vert_\infty\right)\\
         &< b + \vert a^\top \Sigma^{-1} \vert_1 \E_Q \left(\left\vert X^\top  G X  /n \right\vert_\infty\right).\\
    \end{aligned}
\end{equation}
Lemma~\ref{lem:permNuisance} and \ref{lem:sigmaFeasible} imply that with probability greater than $1 - 3(np)^{-1} -  6\vert \mathcal{G}\vert (np)^{-1}$ that $\Sigma^{-1}$ is feasible for $\lambda = b$ and that $\E_Q \left(\left\vert X^\top  G X  /n \right\vert_\infty\right) < \sqrt{2} b$. Applying Lemma~\ref{lem:wasserstein} then implies that
\begin{equation}\begin{aligned}
        d_1&\left(F_{t}(X, \varepsilon),  F_{\hat t}(X, \varepsilon)\right) \leq\left \vert \hat \beta^{l} - \beta \right \vert_1 \times \\
    &\quad\left[ 8\left(\delta_2 + \left\vert a^\top \Sigma^{-1} \right \vert_1 \right) \sqrt{2\Gamma (\log(pn) + 2\log(p))} \right]. 
\end{aligned}\end{equation}
\end{proof}

\clearpage

\section{Experiment Details}
We compare the empirical coverage of $95\%$ confidence intervals produced by \texttt{BLPR}~\citep{HDCI}, \texttt{HDI} \citep{hdi}, \texttt{SSLASSO}~\citep{javanmard2014confidence}\footnote{https://web.stanford.edu/~montanar/sslasso/code.html}, \texttt{SILM}~\citep{SILM} (non-studentized confidence intervals) and residual randomization (\texttt{RR}). For each setting, we replicate the experiment 1000 times for for $(n = 50, p = 100)$ and again for $(n = 100, p = 300)$. 
% For this set of experiments, we focus specifically in symmetric errors. To provide a fair comparison, we run wild bootstrap with \texttt{HDI}.

In each setting, we sample random $X \in \mathbb{R}^{n \times p}$ with rows drawn i.i.d. from either 
\begin{itemize}
    \item \textbf{N1}: $X_{i, :} \sim N(0, I)$
    \item \textbf{G1}: $X_{iv} \sim \text{Gamma}(1, 1) - 1$; i.e., a centered gamma with shape $ =1$ and rate $ =1$
    \item \textbf{N2}:  $X_{iv} \sim N(\mu ,1)$ with $P(\mu = -2) = P(\mu = 2) = 0.5$
    \item \textbf{NT}: $X_{i, :} \sim (0, \Sigma)$ for $\Sigma_{ij} = .8^{|i-j|}$
    \item \textbf{GT}: $X_{i, :} \sim  \text{Gamma}(\Sigma) - 1$ for $\Sigma_{ij} = .8^{|i-j|}$; i.e., each $X_{iv}$ is marginally a centered gamma with shape $ =1$ and rate $ =1$, but the covariance is Topelitz.
    \item \textbf{WB}: $X_{iv} \sim \text{Weibull}(1, 0.5) - \Gamma(2)$; i.e., a centered Weibull with scale $ =1$ and shape $ =1/2$.
\end{itemize} 

We sample the errors $\varepsilon \in \mathbb{R}^n$ from
\begin{itemize}
    \item \textbf{N1}: $\varepsilon_i \sim N(0, 1)$
    \item \textbf{N2}: $\varepsilon_i \sim N(\mu ,1)$ with $P(\mu = -2) = P(\mu = 2) = 0.5$;
    \item \textbf{HN}: $\varepsilon_i \sim N(0, 2 \Vert X_{i,:} \Vert_2^2 / p)$; i.e., the errors are \textbf{h}eteroskedastic and drawn from a \textbf{n}ormal distribution.
    \item \textbf{HM}: $\varepsilon_i \sim N(\mu, 2 \Vert X_{i,:} \Vert_2^2 / p)$ with $P(\mu = -2) = P(\mu = 2) = 0.5$; i.e., the errors are \textbf{h}eteroskedastic and drawn from a \textbf{m}ixture of normal distributions.
    \item \textbf{WB} $\varepsilon_i \sim \text{Weibull}(1, 0.5) - \Gamma(2)$; i.e., a centered Weibull with scale $ =1$ and shape $ =1/2$.  
\end{itemize}

For each setting, we draw $\beta \in \mathbb{R}^{p}$ with $s = 4$ or $15$ active (i.e., non-zero) coordinates drawn from the Rademacher distribution and set the remaining $p - s$ inactive coordinates to 0. We arrange entries in $\beta$ in such a way that there is one active entry between two inactive entries (isolated) so that $\beta_j = 1$ and $\beta_{j-1} = \beta_{j+1} = 0$, one active between an active entry and an inactive entry (adjacent) so that $\beta_j = \beta_{j-1} = 1$ and $\beta_{j+1} = 0$, and one active entry between two other active entries (sandwiched) $\beta_j = \beta_{j-1} = \beta_{j+1} = 1$. We also use the same scheme for the inactive variables. We then set $Y = X \beta + \varepsilon$.

Since in practice we do not know the appropriate tuning parameter $\lambda_1$ a priori, for the residual randomization procedure we employ the Square-Root Lasso~\citep{belloni2011sqrtLasso} implemented in \texttt{RPtests}~\citep{RPtests} to obtain estimates for $\hat{\beta}^l$. We follow \cite{zhang2017simultaneous} and rescale $\hat{\varepsilon}$ by $\sqrt{n / (n - \vert \hat{\beta}^l \vert_0)}$ as a finite-sample correction. 

Empirically, a larger value of $\delta$ generally results in \texttt{RR} producing better coverage at the expense of confidence interval length. We set $\delta = 10000$ for all settings; broadly speaking though, we see that for $\delta \geq 1000$, the performance of the proposed procedure is fairly insensitive to the value of $\delta$. 

Given Corollary~\ref{cor:involvesBetaNew} requires $\lambda_2 = \alpha \sqrt{(\log(pn) + \log(p))/n}$ for some $\alpha \geq 8 \sqrt{\Gamma}$. In practice, we may not know the value of $\Gamma$, but can provide a reasonable upper bound. Since we assumed that $8 \sqrt{\Gamma (\log(pn) + \log(p))/n} < 1$ and require that $\lambda^\star < 1$, in the implementation of \texttt{RR Tuning Free} used for the simulations, we set $\lambda = .99$. For added interpretability, we parameterize $\lambda$ with $\alpha \sqrt{\log(p))/n}$ and use \texttt{R}'s \texttt{optimize} function to find the smallest $\alpha \in [0.001, 0.99 / \sqrt{\log(p))/n}]$ whose $d(\lambda^\star) < d(0.99)$.   

Throughout our simulations, we use $1,000$ draws for the bootstrap-based methods, and $1,000$ group actions for our method. 

\clearpage

\section{Additional Experiments}\label{sec:additionalPlots}

\subsection{Inactive variables: $(n= 50, p = 100)$}
\label{sec:n_perm}
In Figures~\ref{fig:n_perm_small} and \ref{fig:n_sign_small}, we show empirical coverage and confidence interval length for the \emph{inactive variables} over 1000 trials when the errors and covariates are all sub-exponential with $(n = 50, p = 100)$ assuming exchangeable and sign symmetric errors. The same plots for the \emph{active variables} are shown in the main document.

Generally, all \texttt{DLASSO}, \texttt{SILM}, and \texttt{RR} achieve (or exceed) nominal coverage. \texttt{HDI} performs well under exchangeable errors, but generally undercovers in the symmetric setting. \texttt{BLPR} generally performs poorly in all settings.

\begin{figure}[h]
\centering
\centering
\includegraphics[scale=0.95]{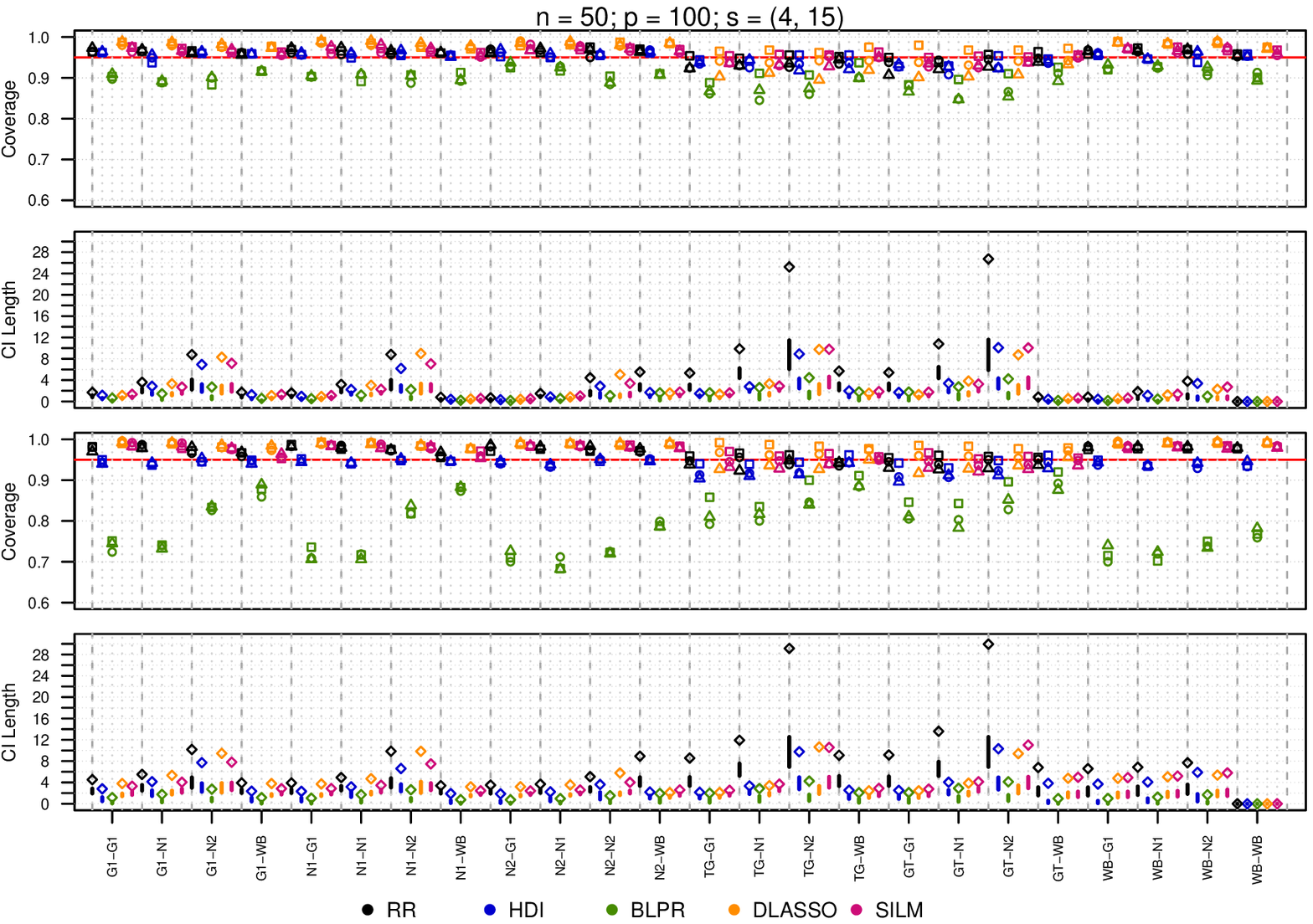}
\caption{Empirical coverage and confidence interval length for the \emph{inactive} variables with $n = 50$, $p = 100$, $1000$ replications and exchangeable errors. The top two panels are for $s=4$ and the bottom two are for $s = 15$. The first and third panels show empirical coverage rates for each procedure; the sandwich coordinate is denoted by $\Delta$, isolated is $\Box$, and adjacent is $\circ$. In the bottom panel, the line segment spans the $.25$ quantile and $.75$ quantile of the confidence interval lengths and the single point indicates the $.99$ quantile. Instead of showing the quantiles for each coordinate, we instead plot the maximum $.25$ (or $.75$, $.99$) quantile across the sandwich, isolated, and adjacent coordinates. The labels on the horizontal axis indicate a different simulation setting and are coded as ``Covariate - Errors'' where the different covariate and error settings are detailed in the main text. For some settings and procedures, the empirical coverage drops below $.6$ and is not shown.}
\label{fig:n_perm_small}
\end{figure}

\begin{figure}[t]
\centering
\includegraphics[scale=0.95]{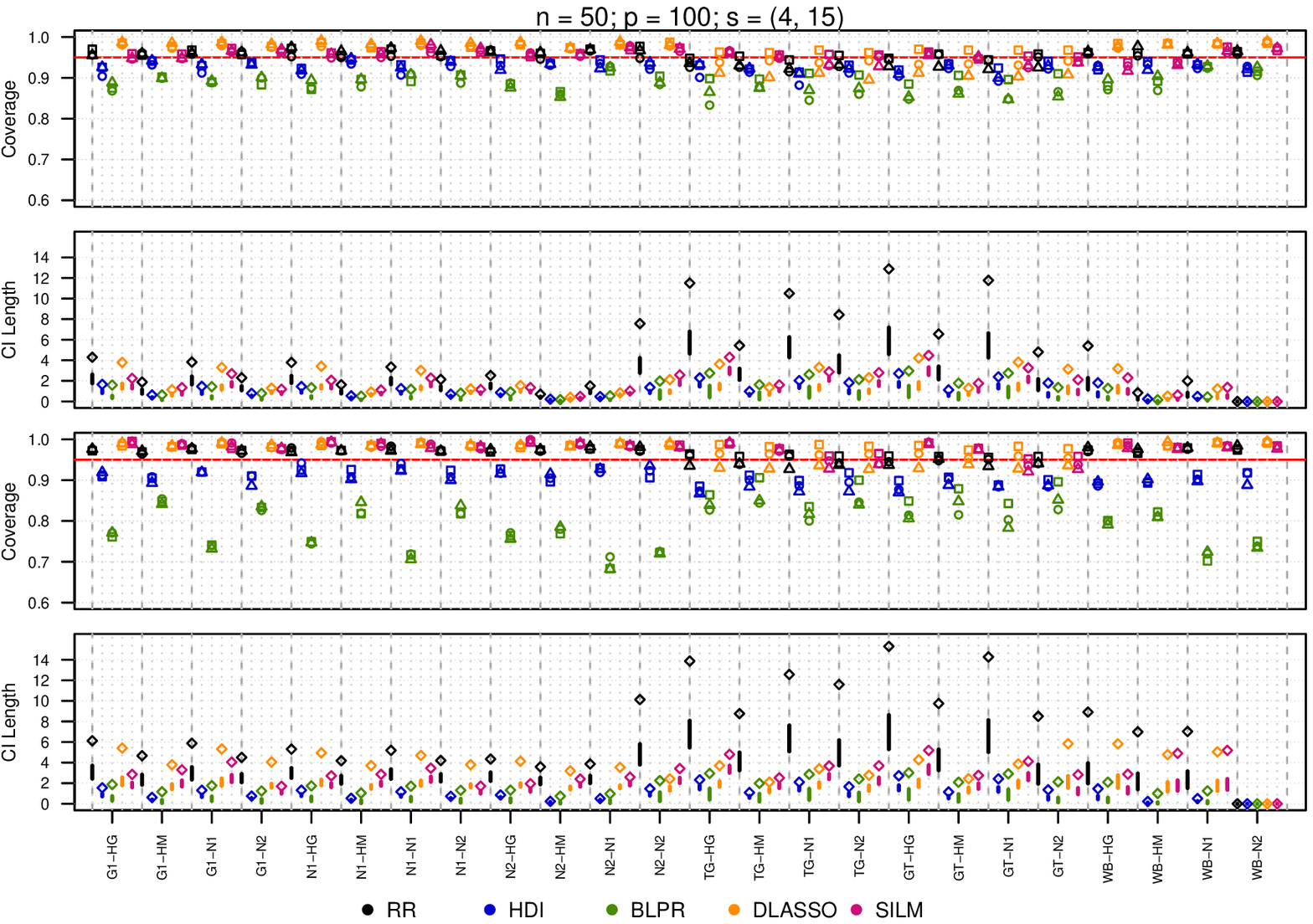}
\caption{Empirical coverage and confidence interval length for \emph{inactive variables} when $n = 50$ and $p = 100$ for \emph{sign symmetric errors}. All other elements remain the same as Figure~\ref{fig:n_perm_small}.}
\label{fig:n_sign_small}
\end{figure}

% \null
% \vfill

 \FloatBarrier
 \clearpage

\subsection{All variables: $(n= 100, p = 300)$}
In Figures~\ref{fig:a_perm_med} and \ref{fig:a_sign_med}, we show empirical coverage and confidence interval length for the active variables over 1000 trials when the errors and covariates are sub-exponential with $(n = 100, p = 300)$ assuming exchangeable and sign symmetric errors. Figures~\ref{fig:n_perm_med} and \ref{fig:n_sign_med} show the analogous plots for inactive variables.

The conclusions are qualitatively similar to the $(n = 50, p = 100)$ experiments for both active and inactive variables. However, we note that in this case the confidence intervals produced by the residual randomization procedure have lengths comparable to the competing methods in most settings.

\begin{figure}[htb]
\centering
\includegraphics[scale=0.95]{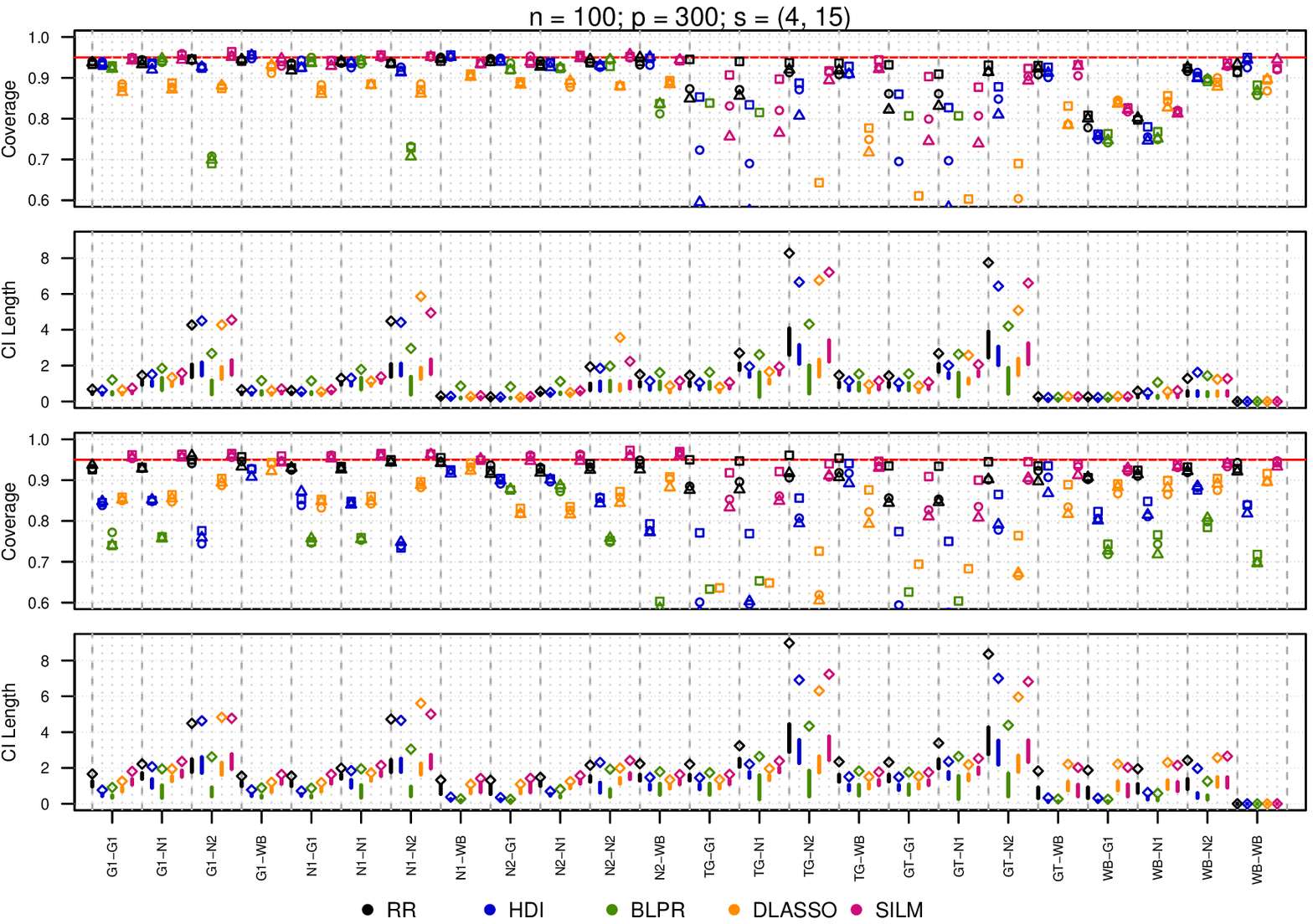}
\caption{Empirical coverage and confidence interval length for the active variables with $n = 100$, $p = 300$, $1000$ replications and exchangeable errors. The top two panels are for $s=4$ and the bottom two are for $s = 15$. The first and third panels show empirical coverage rates for each procedure; the sandwich coordinate is denoted by $\Delta$, isolated is $\Box$, and adjacent is $\circ$. In the bottom panel, the line segment spans the $.25$ quantile and $.75$ quantile of the confidence interval lengths and the single point indicates the $.99$ quantile. Instead of showing the quantiles for each coordinate, we instead plot the maximum $.25$ (or $.75$, $.99$) quantile across the sandwich, isolated, and adjacent coordinates. The labels on the horizontal axis indicate a different simulation setting and are coded as ``Covariate - Errors'' where the different covariate and error settings are detailed in the main text. For some settings and procedures, the empirical coverage drops below $.6$ and is not shown.}
\label{fig:a_perm_med}
\end{figure}

\begin{figure}[htb]
\centering
\includegraphics[scale=0.95]{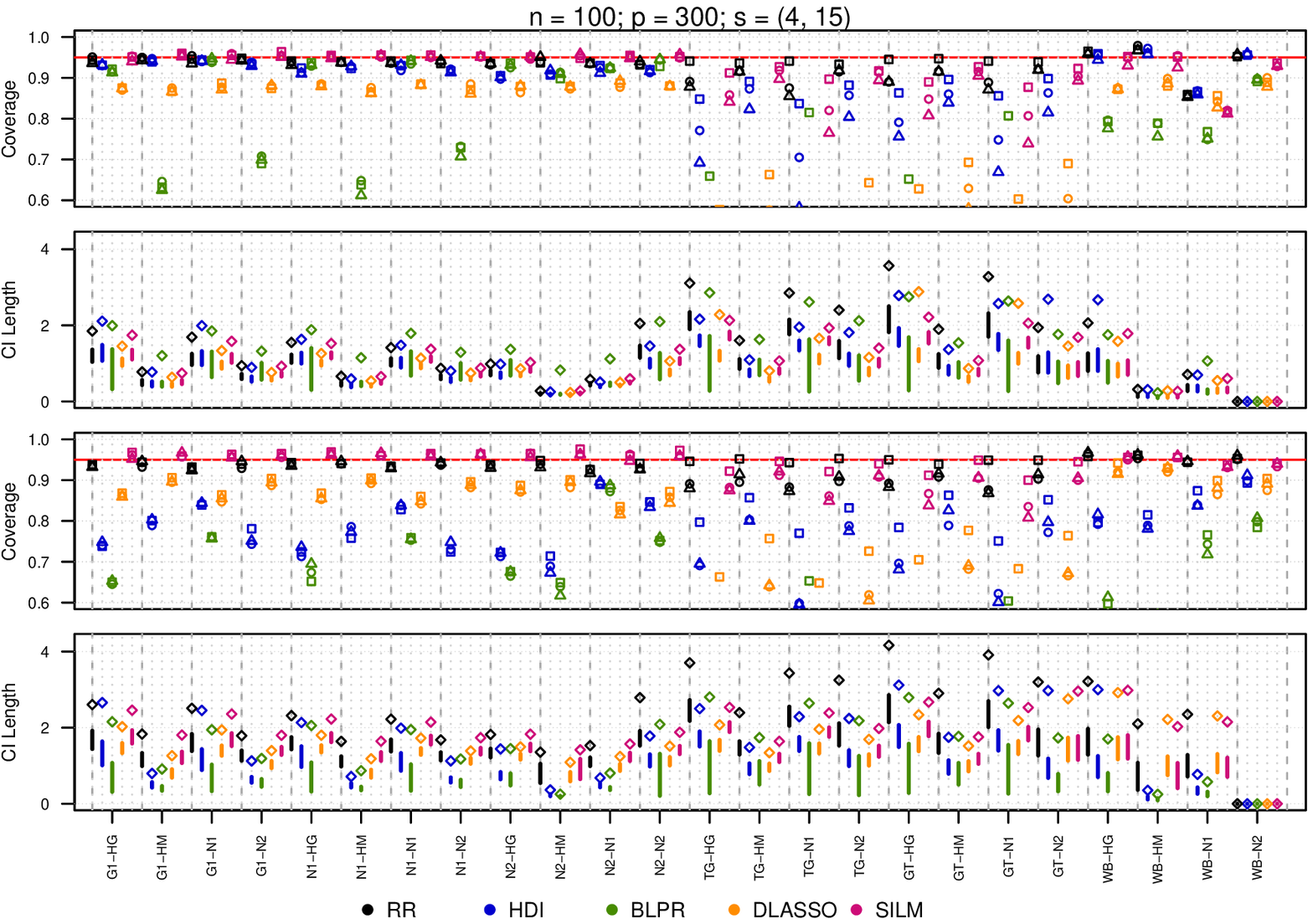}
\caption{Empirical coverage and confidence interval length for \emph{active variables} when $n = 100$ and $p = 300$ for \emph{sign symmetric errors}. All other elements remain the same as Figure~\ref{fig:a_perm_med}.}
\label{fig:a_sign_med}
\end{figure}

\begin{figure}[htb]
\centering
\includegraphics[scale=0.95]{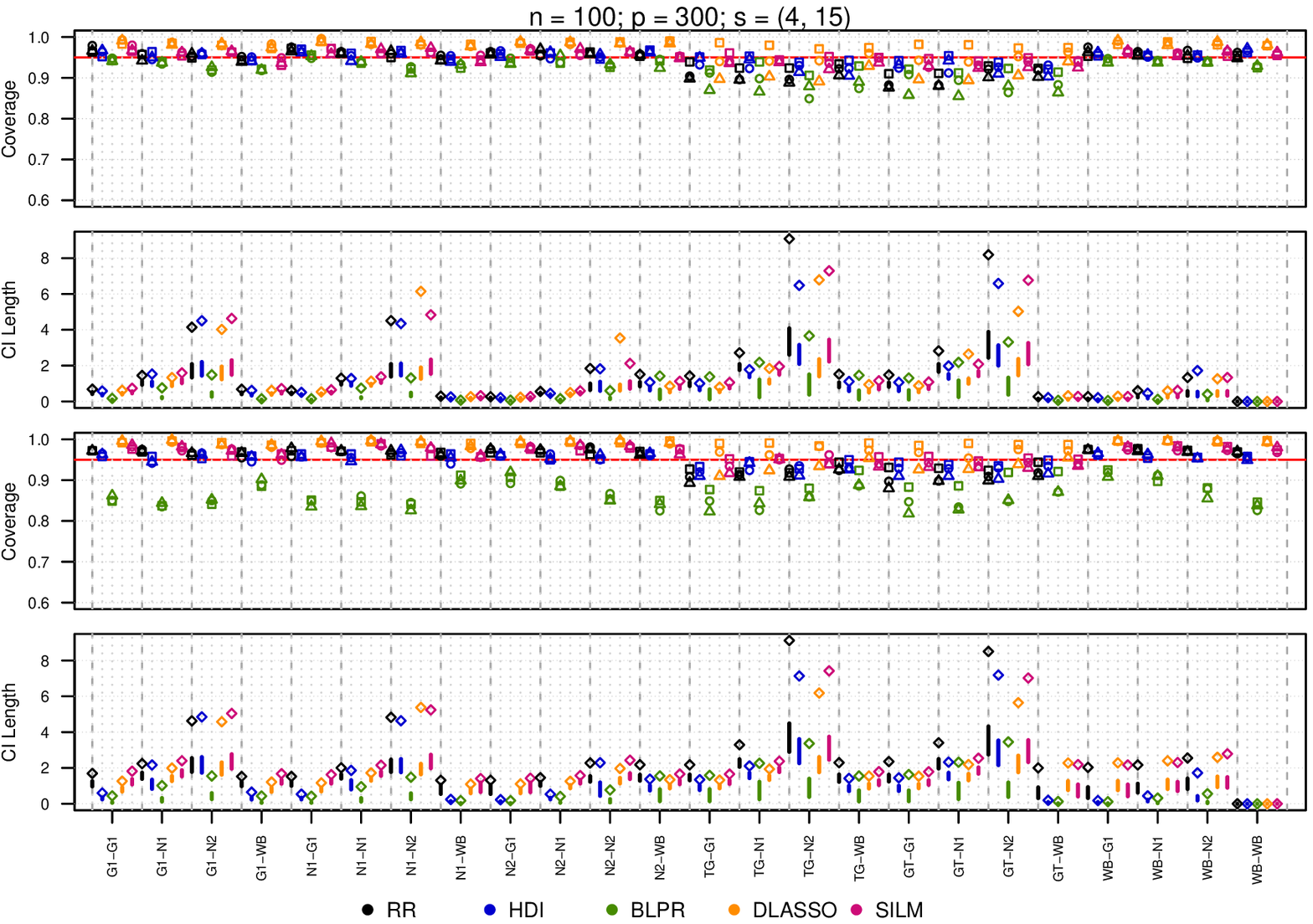}
\caption{Empirical coverage and confidence interval length for \emph{inactive variables} when $n = 100$ and $p = 300$ for \emph{exchangeable errors}. In the top two and bottom two panels, the true support of $\beta$ are $4$ and $15$ respectively. The first and third panels show empirical coverage rates for each procedure. In the bottom panels, the line segment indicates the $.25$ and $.75$ quantiles of the confidence interval lengths (averaged across all inactive variables for each run) and the single point indicates the $.99$ quantile. The labels on the horizontal axis indicate a different simulation setting and are coded as ``Covariate - Errors'' where the different covariate and error settings are detailed in the main text.}
\label{fig:n_perm_med}
\end{figure}
\begin{figure}\centering
\includegraphics[scale=0.95]{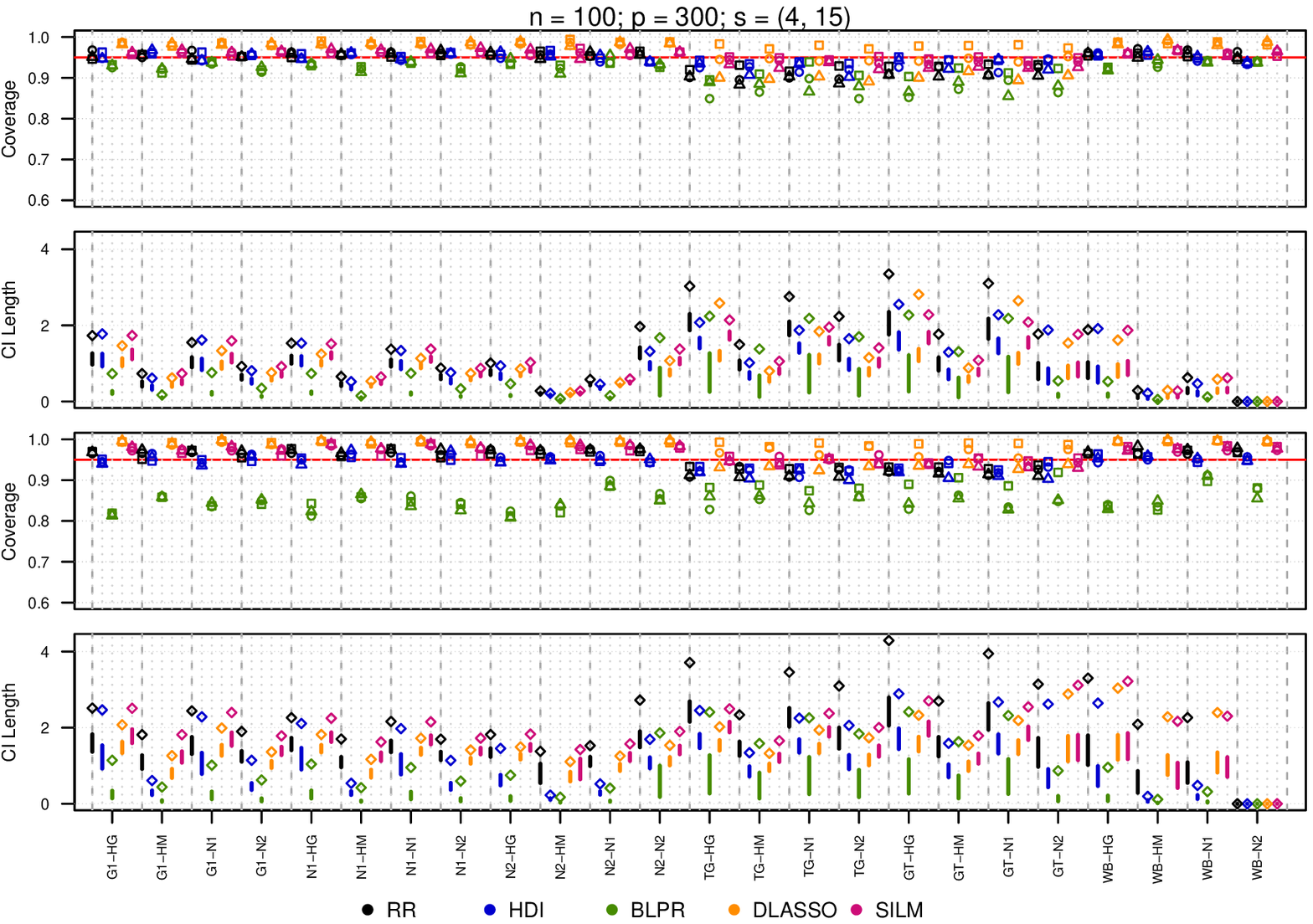}
\caption{Empirical coverage and confidence interval length for \emph{inactive variables} when $n = 100$ and $p = 300$ for \emph{sign symmetric errors}. All other elements remain the same as Figure~\ref{fig:n_perm_med}.}
\label{fig:n_sign_med}
\end{figure}

\clearpage

\subsection{Comparison of \texttt{RR} to \texttt{RR Tuning Free}}
In the following set of figures, we compare the performance of \texttt{RR} to \texttt{RR Tuning Free}. We note that when $s = 4$, \texttt{RR Tuning Free} generally performs slightly worse compared to \texttt{RR}. However, when $s = 15$ , \texttt{RR Tuning Free} performs comparably. In both cases, \texttt{RR Tuning Free} yields shorter CI lengths compared to \texttt{RR}, especially with covariates with Toeplitz covariances. With $\delta = 10000$, we would expect the solution from the selection procedure of the original method to be very close to what is obtained via that of \texttt{RR Tuning Free} albeit with a less precise grid search. The two main sources of discrepancies comes from 1) the second term in $d(\lambda)$ \eqref{eq:bounds} being a tighter upper bound compared to that in eq. 12 in the main text and 2) \texttt{fastclime} symmetrizing $M$.   

\begin{figure}[htb]
\centering
\includegraphics[scale=0.95]{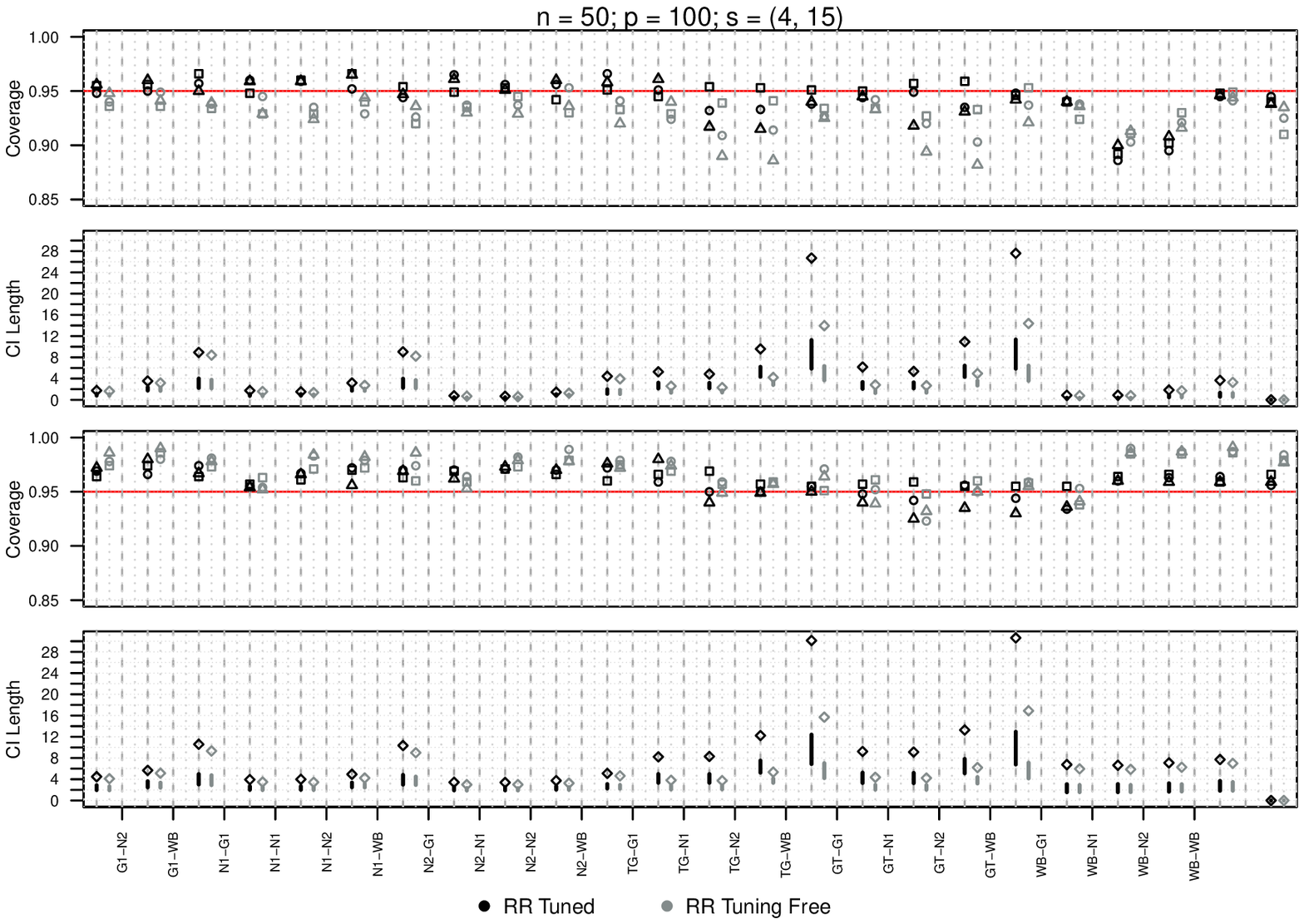}
\caption{Empirical coverage and confidence interval length for \emph{active variables} when $n = 50$ and $p = 100$ for \emph{exchangeable errors}. In the top two and bottom two panels, the true support of $\beta$ are $4$ and $15$ respectively. The first and third panels show empirical coverage rates for each procedure. In the bottom panels, the line segment indicates the $.25$ and $.75$ quantiles of the confidence interval lengths (averaged across all inactive variables for each run) and the single point indicates the $.99$ quantile. The labels on the horizontal axis indicate a different simulation setting and are coded as ``Covariate - Errors'' where the different covariate and error settings are detailed in the main text.}
\label{fig:rr_a_perm_small}
\end{figure}
\begin{figure}\centering
\includegraphics[scale=0.95]{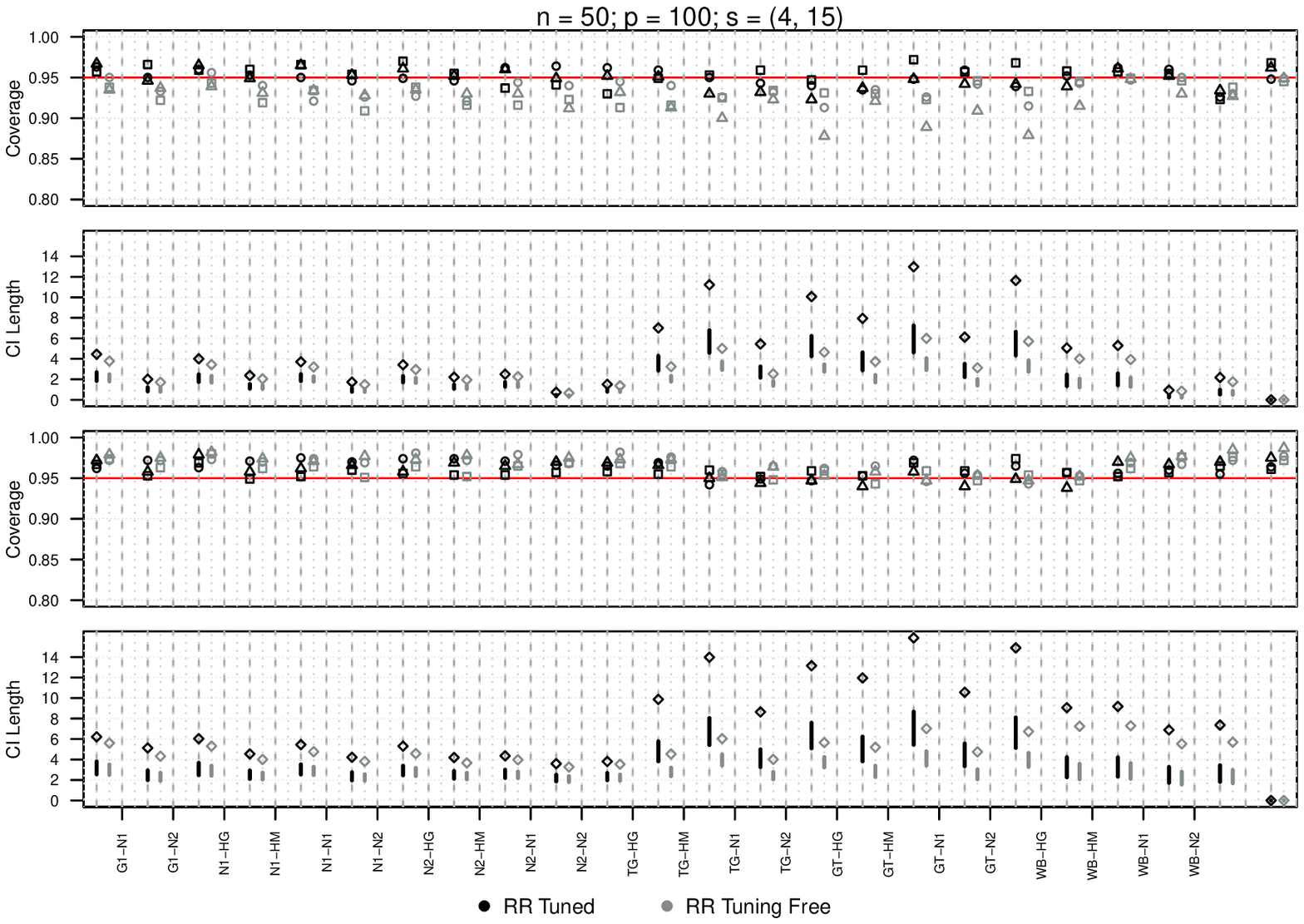}
\caption{Empirical coverage and confidence interval length for \emph{active variables} when $n = 50$ and $p = 100$ for \emph{sign symmetric errors}. All other elements remain the same as Figure~\ref{fig:rr_a_perm_small}.}
\label{fig:rr_a_sign_small}
\end{figure}

\begin{figure}[htb]
\centering
\includegraphics[scale=0.95]{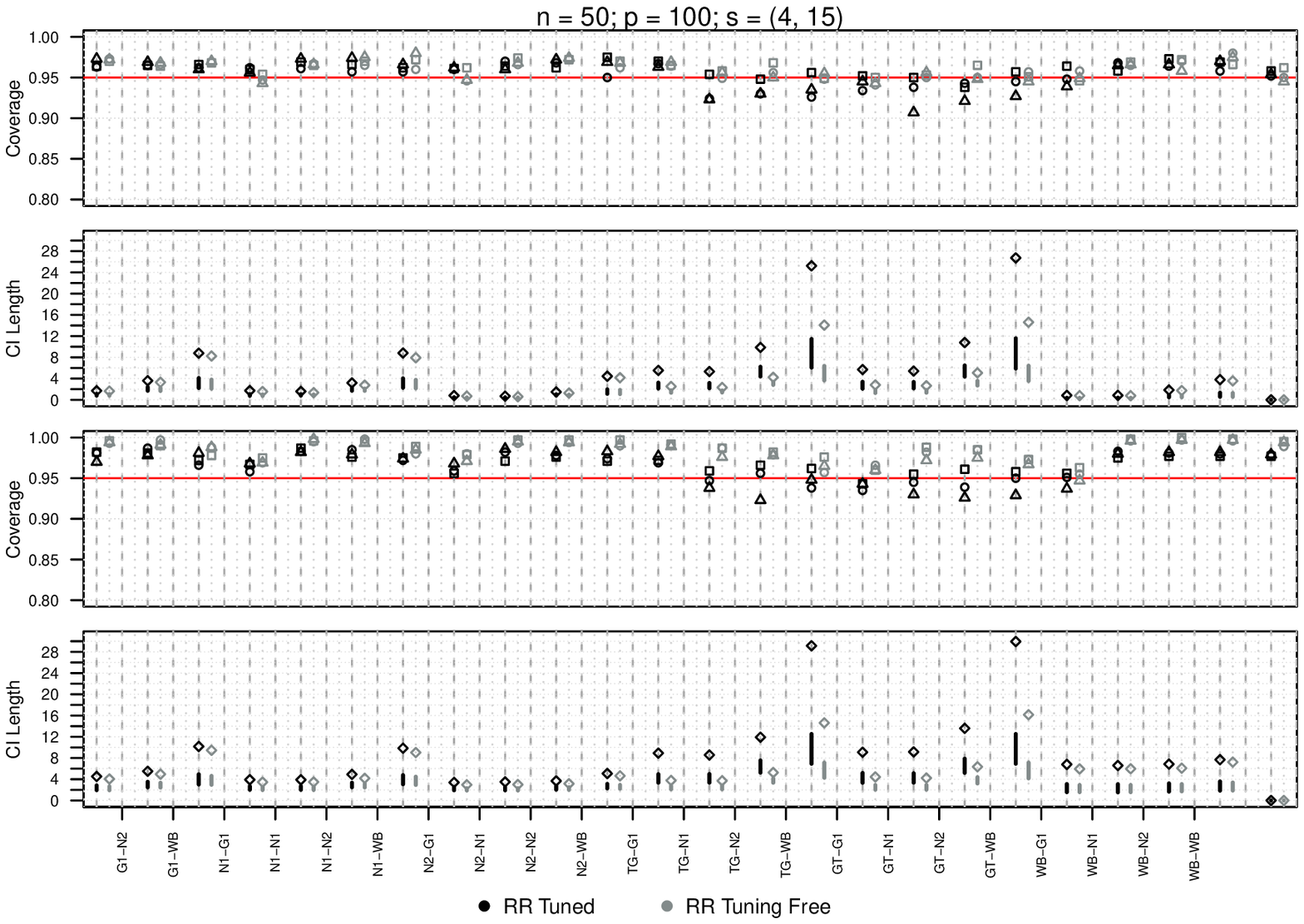}
\caption{Empirical coverage and confidence interval length for \emph{inactive variables} when $n = 50$ and $p = 100$ for \emph{exchangeable errors}. In the top two and bottom two panels, the true support of $\beta$ are $4$ and $15$ respectively. The first and third panels show empirical coverage rates for each procedure. In the bottom panels, the line segment indicates the $.25$ and $.75$ quantiles of the confidence interval lengths (averaged across all inactive variables for each run) and the single point indicates the $.99$ quantile. The labels on the horizontal axis indicate a different simulation setting and are coded as ``Covariate - Errors'' where the different covariate and error settings are detailed in the main text.}
\label{fig:rr_n_perm_small}
\end{figure}
\begin{figure}\centering
\includegraphics[scale=0.95]{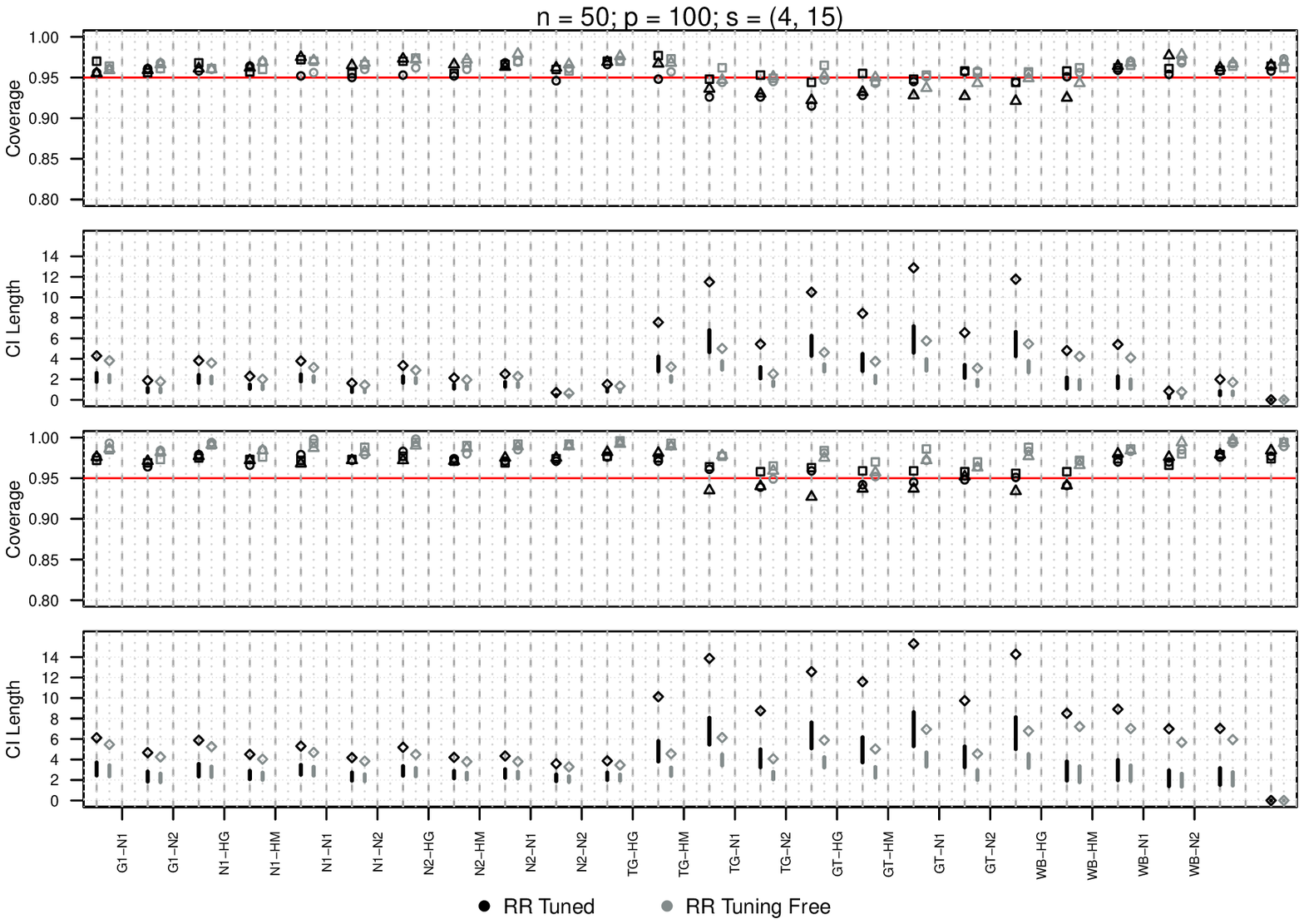}
\caption{Empirical coverage and confidence interval length for \emph{inactive variables} when $n = 50$ and $p = 100$ for \emph{sign symmetric errors}. All other elements remain the same as Figure~\ref{fig:rr_n_perm_small}.}
\label{fig:rr_n_sign_small}
\end{figure}

\begin{figure}[htb]
\centering
\includegraphics[scale=0.95]{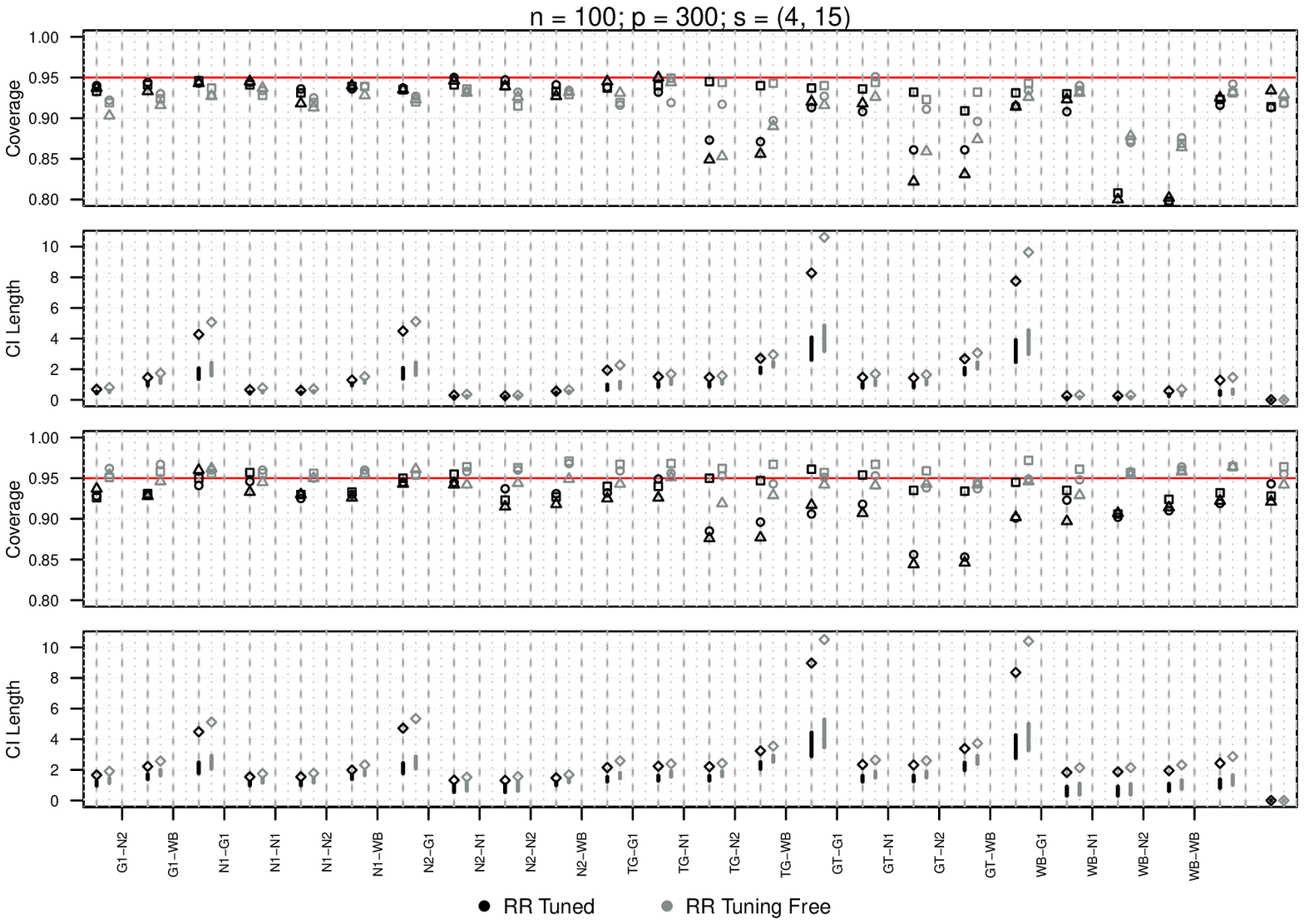}
\caption{Empirical coverage and confidence interval length for \emph{active variables} when $n = 100$ and $p = 300$ for \emph{exchangeable errors}. In the top two and bottom two panels, the true support of $\beta$ are $3$ and $10$ respectively. The first and third panels show empirical coverage rates for each procedure. In the bottom panels, the line segment indicates the $.25$ and $.75$ quantiles of the confidence interval lengths (averaged across all inactive variables for each run) and the single point indicates the $.99$ quantile. The labels on the horizontal axis indicate a different simulation setting and are coded as ``Covariate - Errors'' where the different covariate and error settings are detailed in the main text.}
\label{fig:rr_a_perm_med}
\end{figure}
\begin{figure}\centering
\includegraphics[scale=0.95]{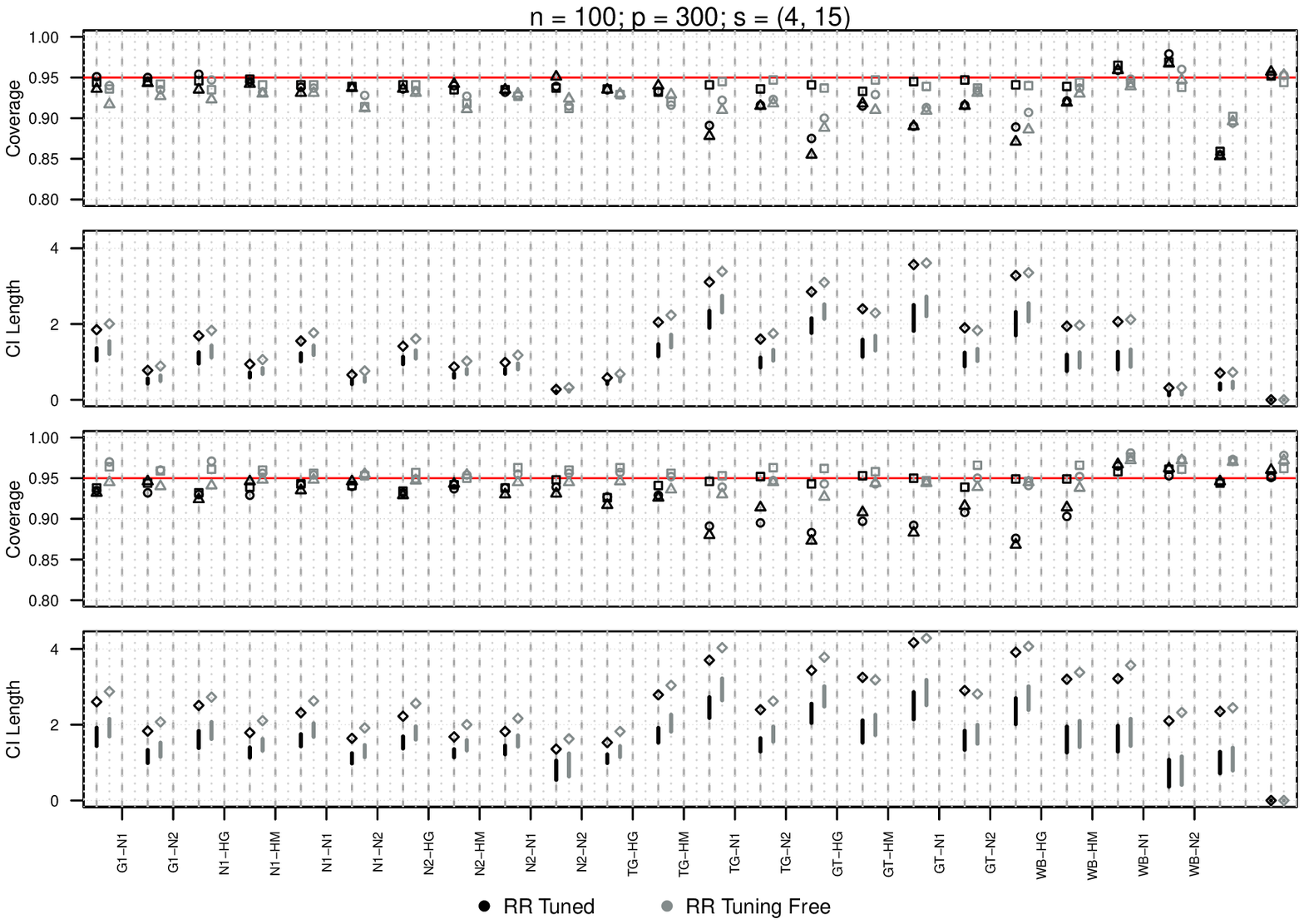}
\caption{Empirical coverage and confidence interval length for \emph{active variables} when $n = 100$ and $p = 300$ for \emph{sign symmetric errors errors}. All other elements remain the same as Figure~\ref{fig:rr_a_perm_med}.}
\label{fig:rr_a_sign_med}
\end{figure}

\begin{figure}[htb]
\centering
\includegraphics[scale=0.95]{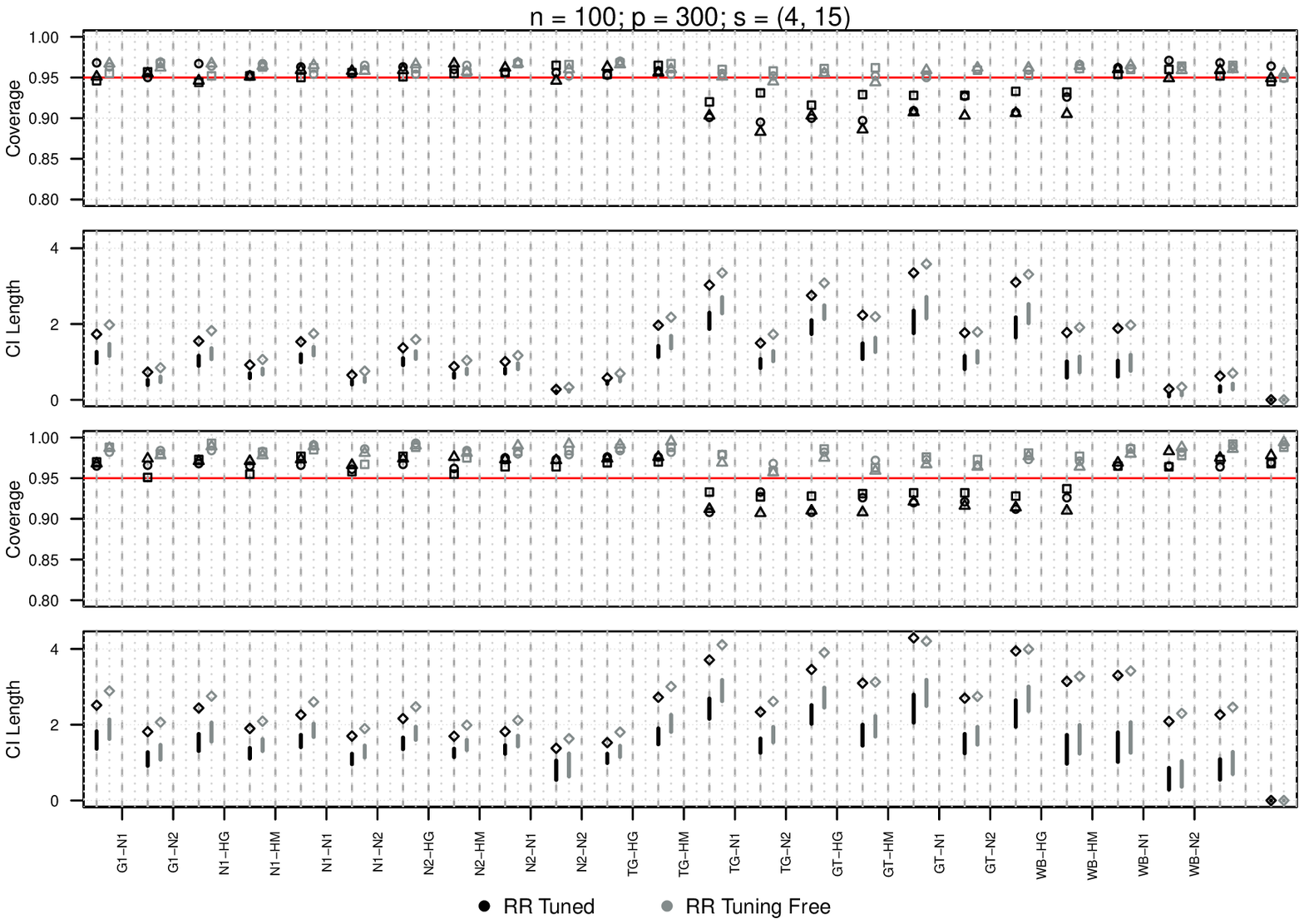}
\caption{Empirical coverage and confidence interval length for \emph{inactive variables} when $n = 100$ and $p = 300$ for \emph{exchangeable errors}. In the top two and bottom two panels, the true support of $\beta$ are $4$ and $15$ respectively. The first and third panels show empirical coverage rates for each procedure. In the bottom panel, the line segment indicates the $.25$ and $.75$ quantiles of the confidence interval lengths (averaged across all inactive variables for each run) and the single point indicates the $.99$ quantile. The labels on the horizontal axis indicate a different simulation setting and are coded as ``Covariate - Errors'' where the different covariate and error settings are detailed in the main text.}
\label{fig:rr_n_perm_med}
\end{figure}
\begin{figure}\centering
\includegraphics[scale=0.95]{figs/rr_n_sign_100_300_1.eps}
\caption{Empirical coverage and confidence interval length for \emph{inactive variables} when $n = 100$ and $p = 300$ for \emph{sign symmetric errors}. All other elements remain the same as Figure~\ref{fig:rr_n_perm_med}.}
\label{fig:rr_n_sign_med}
\end{figure}

\FloatBarrier
\clearpage

\clearpage

\bibliography{icml_bib}
\bibliographystyle{icml2021}